\definecolor{darkred}  {rgb}{0.5,0,0}
\definecolor{darkblue} {rgb}{0,0,0.5}
\definecolor{darkgreen}{rgb}{0,0.5,0}
\newcommand{\be}{\begin{equation}}
\newcommand{\ee}{\end{equation}}
\newcommand{\ba}{\begin{array}}
\newcommand{\ea}{\end{array}}
\newcommand{\bea}{\begin{eqnarray}}
\newcommand{\eea}{\end{eqnarray}}
\newcommand{\calD}{{\cal D }}
\newcommand{\calH}{{\cal H }}
\newcommand{\calL}{{\cal L }}
\newcommand{\calG}{{\cal G }}
\newcommand{\calV}{{\cal V }}
\newcommand{\calC}{{\cal C }}
\newcommand{\calK}{{\cal K }}
\newcommand{\EE}{\mathbb{E}}
\newcommand{\CC}{\mathbb{C}}
\newcommand{\RR}{\mathbb{R}}
\newcommand{\expect}[1]{\EE{\left[ {#1}  \right]}}
\newcommand{\trace}[1]{{\mathrm{Tr}{#1}}}
\newcommand{\pff}[1]{\mathrm{Pf}\left({#1}\right)}
\newcommand{\herm}[1]{\mathrm{Herm}{{\left(#1\right)}}}
\newcommand{\sn}[2]{{\mathrm{sn}{\left( {#1} ; {#2} \right)}}}
\newcommand{\cn}[2]{{\mathrm{cn}{\left( {#1} ; {#2} \right)}}}
\newtheorem{dfn}{Definition}
\newtheorem{prop}{Proposition}
\newtheorem{lemma}{Lemma}
\newtheorem{corol}{Corollary}
\newtheorem{fact}{Fact}
\newtheorem{theorem}{Theorem}
\newtheorem*{theorem*}{Theorem}
\begin{document}

\title{Complexity of quantum impurity problems}
\author{Sergey Bravyi and David Gosset\\ \textit{IBM T.J. Watson Research Center}}

\date{}
\maketitle

\abstract{We give a quasi-polynomial time classical algorithm for estimating the ground state energy and for computing low energy states of quantum impurity models. Such models describe a bath of free fermions 
coupled to  a small  interacting subsystem called an impurity. The full system  consists of $n$ fermionic modes and has a Hamiltonian $H=H_0+H_{imp}$,
where $H_0$ is  quadratic in creation-annihilation operators and 
$H_{imp}$ is an arbitrary Hamiltonian acting on a subset of $O(1)$ modes. 
We show that the ground energy of $H$ can be approximated with an additive error $2^{-b}$
 in time $n^3 \exp{[O(b^3)]}$.
Our algorithm also finds a low energy state  that achieves this approximation.
The low energy state is
represented as a superposition of $\exp{[O(b^3)]}$ fermionic Gaussian states.
To arrive at this result we prove several theorems 
concerning exact ground states of impurity models.
In particular, we show that eigenvalues of the ground state covariance matrix  decay
exponentially 
with the exponent depending very mildly on the spectral gap of $H_0$.
A key ingredient of our proof is  Zolotarev's rational approximation to the $\sqrt{x}$ function. We anticipate that our algorithms may be used
in hybrid quantum-classical simulations of strongly correlated materials
based on dynamical mean field theory.
We implemented a simplified practical version of our algorithm and benchmarked it using the
single impurity Anderson model. 
 }

\newpage

\tableofcontents

\newpage


\section{Introduction}

In this paper we study ground states and low energy states of quantum impurity models. Such models describe a bath of free fermions 
coupled to a small  interacting subsystem called an \textit{impurity}. 

Hamiltonians of this form were famously studied in the 1960s and 70s by Anderson, Kondo, Wilson, and many others to investigate the physics of a magnetic impurity embedded in a metal \cite{Anderson61,Kondo64,Wilson75}.  This line of research elucidated the theoretical mechanism of the Kondo effect\footnote{While it is generally expected that a metal should become a better conductor as temperature is reduced,  for some metals with dilute impurities the resistivity achieves a minimum value at a nonzero temperature.  In short, the resistivity can increase as temperature is lowered due to scattering of conduction electrons in the metal with a localized electron in the impurity (see, e.g., Ref.~\cite{K01}).} which had been observed experimentally decades earlier \cite{deHaas36}. It also spurred the development of Wilson's numerical renormalization group \cite{Wilson75}, a non-perturbative numerical method which reproduces the low temperature physics of these systems.
 
The study of quantum impurity models extends beyond this direct application and
provides a powerful numerical method for calculating 
electronic structure of  strongly correlated materials such as  transition metal compounds and high-temperature
 superconductors~\cite{GeorgesDMFT}. These materials
are described by fermionic lattice models with interactions throughout the system, instead of localized within a small
subsystem.  Nevertheless, impurity models can be used to study such materials
within an approximation known as dynamical mean field theory (DMFT)~\cite{kotliar2006electronic}. 
 The impurity is typically chosen to model a group of atoms contained within a unit cell of the lattice, whereas the bath models the bulk of the material.  

Recently there has been growing interest in solving impurity problems in the quantum information community. It was suggested by Bauer et al. \cite{Bauer15} that a small quantum computer with a few hundred qubits
can potentially speed up certain steps in material simulations based on the DMFT method.
In particular, Ref.~\cite{Bauer15} proposed a quantum algorithm for computing 
the Green's function of a quantum impurity model.  Kreula et al. \cite{Kreula16} subsequently proposed a proof-of-principle demonstration of this  algorithm. 
 
Quantum impurity problems are also interesting from the standpoint of Hamiltonian complexity theory~\cite{QHCreview}.
In general, estimating the ground energy of a quantum many-body system composed of  
spins or fermi modes with local interactions is a hard problem. Formally, this problem is complete for the complexity class QMA -- a quantum analogue of 
NP~\cite{KitaevBook,schuch2009computational}. The QMA-completeness result implies that, in the worst case, the ground energy of interacting fermi systems cannot be computed efficiently (assuming QMA$\ne$BQP). In contrast, Hamiltonians describing free fermions are  exactly solvable and their
ground energy can be computed in polynomial time. Quantum impurity models provide a natural arena for studying the complexity of fermionic systems
in an intermediate regime interpolating between the free and the fully interacting 
 cases.

In this paper we describe a classical algorithm for approximating the ground energy and for computing low energy states of quantum impurity models.  We focus  on the worst-case computational complexity
of this problem as a function of the system size  and the desired approximation error.
We also prove several  theorems concerning  exact ground states
of impurity models that appear to be new. 

\subsection{Main results}

To state our results let us first define a general quantum impurity model. We consider the $2^n$-dimensional Hilbert space $\mathcal{H}_n$ of $n$ fermi modes, spanned by Fock basis vectors
\[
|x_1,x_2,\ldots, x_n\rangle=(a_1^{\dagger})^{x_1} (a_2^{\dagger})^{x_2}\ldots (a_n^{\dagger})^{x_n}|0^n\rangle.
\]
Here $a^{\dagger}_j,a_j$ are fermionic creation and annihilation operators, $x_j\in \{0,1\}$ is the occupation number
 of the $j$th mode and $|0^n\rangle$ is the vacuum state which satisfies $a_j|0^n\rangle=0$ for all $j$. 
A quantum impurity model is a Hamiltonian $H$ which acts on $\mathcal{H}_n$ as
\[
H=H_0+H_{imp},
\]
where $H_{0}$ describes a bath of free fermions and $H_{imp}$ describes the impurity.
To specify the form of $H_0$ and $H_{imp}$ it will be convenient to use Majorana operators $c_1,c_2,\ldots,c_{2n}$ defined by
\begin{equation}
\label{Majorana}
c_{2j-1}=a_j+a_j^{\dagger} \quad \text{and} \quad c_{2j}=-i(a_j-a_j^{\dagger}).
\end{equation}
The Majorana operators are hermitian and satisfy
\begin{equation}
\label{MCR}
c_p c_q+c_qc_p=2\delta_{p,q}I
\end{equation}
for all $1\leq p,q \leq 2n$.   We may write any free fermion Hamiltonian $H_0$ as
\begin{equation}
\label{bath1}
H_{0}=e_0 I + \frac{i}4\sum_{p,q=1}^{2n} h_{p,q} c_p c_q
\end{equation}
where $h$ is real anti-symmetric matrix  and $e_0=\| h\|_1/4$ is an energy shift chosen such that $H_0$ has zero ground energy.
Here and in the following we use $\|\cdot\|_1$ to denote the trace norm, while $\|\cdot\|$ denotes the operator norm. 
Let us choose the energy scale such that $\|h\| \le 1$. Then single-particle excitation energies of $H_{0}$
 belong to the interval $[0,1]$ while eigenvalues of $H_0$ belong to the interval $[0,n]$.

Let us agree that $H_{imp}$ acts non-trivially only on the Majorana modes $c_1,\ldots,c_m$.
Here $m$ is the impurity size.
We shall be interested in the case $m\ll n$. The impurity Hamiltonian must 
include only even-weight  Majorana operators (i.e., fermionic parity is conserved) but otherwise can be completely 
arbitrary. We may write the impurity Hamiltonian as 
\[
H_{imp}=\sum_{\substack{x\in \{0,1\}^m\\ |x|=0\; \mathrm{mod} \; 2}} g_x  c_1^{x_1}c_2^{x_2}\ldots c_m^{x_m}
\]
where $g_x$ are some coefficients\footnote{In order for $H_{imp}$ to be hermitian,
the coefficients $g_x$ must be real for $|x|=0{\pmod 4}$ and imaginary 
for $|x|=2{\pmod 4}$.}.
We do not impose any restrictions on the magnitude of these coefficients or on the norm of $H_{imp}$.
Let us write
\[
e_g=\min_{\phi\in \mathcal{H}_n} \frac{\langle \phi|H|\phi\rangle}{\langle \phi |\phi\rangle}
\]
for the ground energy of the full Hamiltonian.  Our main result is as follows.

\begin{restatable}[\textbf{Quasipolynomial algorithm}]{theorem}{quasi}
There exists a classical algorithm which takes as input a quantum impurity model $H$,  a target precision 
$\gamma\in (0,1/2]$, and outputs an estimate $E$ such that $|E-e_g|\le \gamma$. 
The algorithm has runtime
\[
O(n^3) \exp{\left[ O(m\log^{3}(m\gamma^{-1})\right]}.
\]
\label{thm:1}
\end{restatable}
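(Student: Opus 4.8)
The plan is to reduce the $n$-mode bath to an effective bath of only $\ell=O(m\log^{3}(m\gamma^{-1}))$ modes, and then to solve the resulting small impurity model by brute-force exact diagonalization. Concretely I would proceed in three stages: (i) bring the single-particle matrix $h$ of $H_0$ to normal form and extract the hybridization function through which the impurity ``sees'' the bath; (ii) approximate this function by a low-complexity rational function, which amounts to replacing the bath by a much smaller one, and bound the resulting perturbation of the ground energy by $\gamma$; (iii) diagonalize the reduced impurity Hamiltonian $\tilde H$ exactly. Stage (i) costs $O(n^{3})$; stage (iii) costs $\exp[O(\ell)]=\exp[O(m\log^{3}(m\gamma^{-1}))]$ because the reduced Hilbert space has dimension $2^{\Theta(\ell)}$; and the low-energy state is produced essentially for free as a superposition of at most $2^{\Theta(\ell)}$ fermionic Gaussian states, since every Fock state of the reduced model lifts (tensor with the $H_0$-vacuum of the discarded modes, then undo the single-particle rotation) to a Gaussian state of the full $n$-mode system. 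Summing the costs gives the claimed $O(n^{3})\exp[O(m\log^{3}(m\gamma^{-1}))]$ bound.

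The heart of the argument is stage (ii). After diagonalizing $h$, the influence of the bath on the impurity modes $c_1,\dots,c_m$ is captured by a matrix-valued function of the form $\Delta(z)=\sum_{k}V_k^{\dagger}(z-\epsilon_k)^{-1}V_k$ with $\epsilon_k\in[0,1]$, a sum of at most $n$ simple poles on the positive real axis. I would approximate $\Delta$ by a rational function with only $O(\ell/m)$ poles; a bath with that many modes realizes the approximant exactly, and a resolvent/continuity estimate converts the error in $\Delta$ into a comparable error in $e_g$. The number of poles needed is governed by the best rational approximation of $\sqrt{x}$ (equivalently of $\mathrm{sgn}(x)$, or of the spectral step function of $H_0$) on an interval that may come within a distance of order $\gamma/n$ of the band edge; by Zolotarev's classical estimate this requires degree only polylogarithmic in $m\gamma^{-1}$, with the dependence on the (possibly exponentially small) spectral gap of $H_0$ merely logarithmic. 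This is exactly why the algorithm tolerates a gapless bath, and it is where the cube in $\log^{3}$ appears once the impurity dimension $m$ and the band-edge cutoff are folded into the accuracy bookkeeping.

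To make the energy estimate in stage (ii) rigorous I would route it through the decay of the eigenvalues of the ground-state covariance matrix $\Gamma$, the structural result announced in the abstract: after a single-particle rotation that fixes the impurity modes, all but $O(m\log^{3}(m\gamma^{-1}))$ of the singular values of $\Gamma$ lie within $\gamma$ of their free-fermion value, the convergence being exponential at a rate that degrades only polylogarithmically as the gap of $H_0$ shrinks. Hence the exact ground state $|\psi\rangle$ of $H$ is $\gamma$-close in energy to a state that is an exact Gaussian state on all but an $O(m\log^{3}(m\gamma^{-1}))$-mode subsystem. Projecting $|\psi\rangle$ onto the reduced bath gives a variational state for $\tilde H$ with energy at most $e_g+\gamma$; lifting the exact ground state of $\tilde H$ gives a variational state for $H$; combining the two inequalities yields $|e_g-\tilde e_g|\le\gamma$. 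The covariance-matrix decay is itself proved from the same Zolotarev estimate, applied now to the spectral projector of $H_0$ rather than to the hybridization function.

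The step I expect to be the main obstacle is precisely this covariance-matrix decay bound. One must show that an \emph{arbitrary} impurity interaction $H_{imp}$ of unbounded norm perturbs the bath two-point function by an amount that decays exponentially away from the impurity, \emph{uniformly} in the spectral gap of $H_0$ down to gap zero, and then convert such a bound on correlations into a bound on the energy of the truncated trial state. Controlling this requires marrying the Zolotarev rational approximation to a careful analysis of how the resolvent of $H_0$ propagates a localized perturbation. The remaining technical work — writing down $\tilde H$ explicitly, tracking fermionic parity through the Gaussian lifting map, and checking that the $O(\gamma)$ errors from the band-edge cutoff, the rational approximation, and the covariance truncation add rather than compound — is routine but must be done with care to reach the stated runtime.
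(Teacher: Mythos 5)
Your high-level ingredients (Zolotarev, covariance-matrix decay, an $O(n^3)$ free-fermion preprocessing step, exact diagonalization of an exponentially small effective problem) are the right ones, but the central reduction you propose is not the paper's and, as written, has a genuine gap. You want to replace the $n$-mode bath by an a priori computable bath of only $\ell=O(m\log^{3}(m\gamma^{-1}))$ modes by approximating the hybridization function $\Delta(z)$ with a rational function having polylogarithmically many poles, and then to diagonalize the full $2^{O(\ell)}$-dimensional Hilbert space. The missing lemma is the one you dismiss as "a resolvent/continuity estimate": a bound showing that closeness of hybridization functions (in some norm, on some contour) implies closeness of ground energies \emph{for an arbitrary interacting $H_{imp}$ of unbounded norm}. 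No such bound is proved here or invoked in the paper, and the connection you draw to Zolotarev's approximation of $\sqrt{x}$ does not supply it. The covariance-decay theorem you plan to route the estimate through does not rescue this step either: it says the ground state is supported (up to error $\delta$) on $k=O(m\log(\omega^{-1})\log(\delta^{-1}))$ modes \emph{after a Gaussian rotation determined by the unknown ground state}, which is not the same as being able to construct a small bath in advance.

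The paper's actual mechanism is different and avoids this entirely. It (a) rounds all single-particle energies below $\gamma/m$ up to $\gamma/m$ and then snaps the spectrum to a grid of spacing $\gamma/s^{\star}$, a perturbation whose restriction to the low-excitation subspace has operator norm $\le\gamma$ (so Weyl's inequality applies, no hybridization continuity needed); (b) exploits the resulting massive degeneracy to \emph{exactly} decouple all but $m$ modes per grid point, leaving $O(ms^{\star}/\gamma)=\mathrm{poly}(\gamma^{-1})$ coupled modes --- not polylogarithmically many; and (c) restricts to the subspace $\calV$ with at most $s^{\star}=O(m\log^{2}(m\gamma^{-1}))$ bath excitations among those coupled modes, justified by Corollary~\ref{cor:gaussianrank}. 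The quasipolynomial dimension $\dim(\calV)\le\binom{O(ms^\star/\gamma)}{s^{\star}}=e^{O(m\log^{3}(m\gamma^{-1}))}$ comes from truncating the \emph{excitation number}, not the \emph{mode number}; this excitation-number truncation is the essential idea your proposal lacks, and it is also where the third logarithm actually originates. To repair your argument you would either have to prove the hybridization-continuity lemma for unbounded interacting impurities (open, and plausibly false at the claimed polylog pole count), or adopt the paper's grid-plus-decoupling-plus-excitation-cutoff route.
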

For a fixed impurity size $m$ the runtime is polynomial  in $n$
and  quasi-polynomial   in $\gamma^{-1}$.
We are not aware of any obstacles to achieving  a polynomial scaling in $\gamma^{-1}$
and leave this as as open problem. On the other hand, the dependence 
of the runtime  on $n$ and $m$ is nearly optimal.
Indeed, choosing $H_0=0$ and $H_{imp}=0$ reduces the problem  to approximating the ground
energy of $H_{imp}$ and $H_{0}$ respectively. In the worst case this requires time
$2^{\Omega(m)}$ and $\Omega(n^3)$ respectively (using existing methods).
As far as we know, our algorithm is the first proposed method which gives
a rigorous bound on the approximation error for general impurity models.

The proof of Theorem~\ref{thm:1} including a complete description of the algorithm is given in 
Section~\ref{sec:generalcase}. At a high level the algorithm proceeds as follows.
We introduce a deformed impurity problem in which the single-particle energies of the bath Hamiltonian are approximated by a set of equally spaced grid points\footnote{We note that a 
discretization of the bath Hamiltonian is also used in the numerical renormalization group method~\cite{Wilson75}.}. 
 This deformed impurity problem has a special feature that a large number of fermionic modes can be decoupled from the impurity by a (Gaussian) unitary transformation. We show that the full Hamiltonian $H$
has a low-energy state within a subspace $\calV$ spanned by eigenstates of the deformed bath Hamiltonian such that
(a) bath modes coupled to the impurity have at most  $O(m\log^2{(m\gamma^{-1})})$ excitations 
and (b) all decoupled modes are unoccupied. We show that the dimension of $\calV$ is upper bounded by 
$\exp{\left[ O(m\log^{3}(m\gamma^{-1})\right]}$. 
Note that the dimension of $\calV$ has no dependence on $n$.
We approximate the ground energy $e_g$ by restricting the deformed impurity model onto the subspace $\calV$ and using exact diagonalization to compute the
smallest eigenvalue. The corresponding smallest eigenvector
$\psi \in \calV$ can be written as a superposition of at most $\dim{(\calV)}$ 
fermionic Gaussian states. 

The analysis of our algorithm relies on new results concerning ground states of quantum impurity models; in particular, Theorem~\ref{thm:2} (described below) and its corollaries. 
A simplified practical version of our algorithm is described in Section~\ref{sec:variational} and benchmarked using the single impurity Anderson model~\cite{Anderson61}.

{\em Remark:} The runtime quoted in theorem \ref{thm:1} counts the total number of elementary algebraic operations $+, \times,  /, \sqrt{\cdot}$. Moreover, for the sake of readability,  throughout the paper 
 we ignore errors
 incurred in
the standard linear algebra subroutines. In particular, we assume 
that  eigenvalues  of a hermitian $N\times N$ matrix $A$
can be computed exactly in time $O(N^3)$. 
Strictly speaking, the cost of this computation has a mild dependence
on the desired precision  and the norm of $A$. 
Applying Householder transformations to make
$A$  tri-diagonal and using rigorous eigenvalue
algorithms for tri-diagonal matrices~\cite{bini1998computing}
one can estimate all eigenvalues of $A$ with an additive error $\delta$
in time $O(N^3) poly(\log{(N)}, \log{(\delta^{-1})},\log{(\|A\|)})$,
see Theorem~7.1 of Ref.~\cite{bini1998computing}.
Taking into account this overhead
would alter the asymptotic runtime stated in Theorem~\ref{thm:1} 
by a factor $poly(\log{(n)},\log{(\| H_{imp}\|)})$.

Let us now discuss ground states of quantum impurity models and their features. Since impurity models are usually not exactly solvable\footnote{In certain cases one can compute the ground energy of  quantum impurity models
exactly in the thermodynamic
limit  $n\to \infty$ using the  Bethe Ansatz method~\cite{wiegmann1983exact,kawakami1981exact}.
This method is applicable only if the couplings between the bath and the impurity
have a certain special symmetry and the bath has a linear dispersion law.},
 their ground
states lack an analytic expression. 
Moreover, even if an analytic expression could be found, it would likely depend on subtle
details of the impurity Hamiltonian $H_{imp}$ which would limit its utility. A natural question is whether ground states posses some universal features
that depend only on the bath Hamiltonian $H_0$ and the size of the impurity $m$.
In this paper we provide one example of such a universal feature.
To state our  result consider an arbitrary impurity model
$H=H_0+H_{imp}$. Choose a new set of creation-annihilation operators
$b_j^\dag,b_j$  that diagonalize the bath Hamiltonian:
\[
H_0=\sum_{j=1}^n \epsilon_j b_j^\dag b_j, \quad 0\leq \epsilon_j\leq 1.
\]
Here $\epsilon_j$ are  single-particle excitation energies  of the bath. The fact that $\epsilon_j\in [0,1]$ follows from our assumption that $\|h\|\leq 1$, see the remarks after Eq.~\eqref{bath1}. 
Let  $\omega>0$ be the spectral gap of the bath.
That is, each $\epsilon_j$ is either zero or contained in the interval $[\omega,1]$.
\begin{theorem}[\bf Exponential Decay]
\label{thm:2}
There exists a ground state $\psi$ of $H$ such that the following holds. 
Let $C$ be a hermitian $n\times n$ matrix defined by 
\begin{equation}
C_{jk}=\langle \psi|b^{\dagger}_j b_k|\psi\rangle
\label{Cmat}
\end{equation}
and let $\sigma_1\geq \sigma_2 \geq \ldots \sigma_n\geq 0$ be its eigenvalues. Then for all $j$
\begin{equation}
\label{decay1}
\sigma_j\le c \exp{\left[ -\frac{j}{14m\log{(2\omega^{-1})}}\right]}.
\end{equation}
Here $c>0$ is some universal constant. 
\end{theorem}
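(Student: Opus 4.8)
\emph{Step 1: a ground state with low‑rank bath content.}
In the eigenbasis of $H_0$ each Majorana $c_p$ with $p\le m$ reads $c_p=\sum_j(\alpha_{jp}b_j+\bar\alpha_{jp}b_j^{\dagger})$; let $\imp\subseteq\CC^n$ be the (at most $2m$)-dimensional span of the vectors $\alpha_{\cdot p}$ and $\bar\alpha_{\cdot p}$, and let $\calK=\overline{\mathrm{span}}\{E^{\ell}\imp:\ell\ge0\}$ be the Krylov subspace it generates under $E=\mathrm{diag}(\epsilon_1,\dots,\epsilon_n)$, so that $E\calK=\calK$ and $\imp\subseteq\calK$. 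After an orthonormal change of $b$-modes adapted to the splitting $\CC^n=\calK\oplus\calK^{\perp}$ one finds $H=H^{\calK}\otimes I+I\otimes H_0^{\calK^{\perp}}$, where $H^{\calK}=H_0^{\calK}+H_{imp}$ involves only the $\calK$-modes and $H_0^{\calK^{\perp}}\ge 0$ is the decoupled free Hamiltonian on the remaining modes; hence $H$ has a ground state of the form (ground state of $H^{\calK}$)$\,\otimes\,$(Fock vacuum of $\calK^{\perp}$), and for this $\psi$ the matrix $C$ is supported on $\calK$.

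\emph{Step 2: reduction to a rank bound.}
For $v\in\CC^n$ write $b_v=\sum_j v_j b_j$, so that $v^{\dagger}C_\phi v=\|b_v\phi\|^2$ for any normalized $\phi$ (with $C_\phi$ the matrix \eqref{Cmat} for $\phi$) and $\|b_v\|\le\|v\|$; consequently $\|C_\psi-C_\chi\|\le 2\|\psi-\chi\|$ for all normalized $\psi,\chi$. With Weyl's inequality this reduces \eqref{decay1} to the following: for every integer $r\ge1$ there is a subspace $\calV\subseteq\calK$ with $\dim\calV\le 14\,m\,r\log(2\omega^{-1})$ and a normalized state $\chi_r$ with $\|\psi-\chi_r\|\le c_0 e^{-r}$ whose correlation matrix is supported on $\calV$. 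Indeed then $\sigma_{\dim\calV+1}(C)\le\|C-C_{\chi_r}\|\le 2c_0 e^{-r}$, and choosing $r=\lfloor(j-1)/(14\,m\log(2\omega^{-1}))\rfloor$ yields \eqref{decay1} with $c=2c_0 e^{2}$.

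\emph{Step 3: the Zolotarev truncation.}
The Dyson expansion of $e^{-\beta H^{\calK}}$ applied to a state in the Fock space of $\calK$ shows that its one-particle content is a combination of vectors $e^{-sE}\imp$, $s\in[0,\beta]$ (since $H_{imp}$ is even in $c_1,\dots,c_m$ and $e^{-tH_0}$ acts on one-particle excitations as $v\mapsto e^{-tE}v$); letting $\beta\to\infty$, the ground state $\psi$ of Step 1 is approximated by a state with content in $\mathrm{span}\{e^{-sE}\imp:s\ge0\}$. We therefore take $\calV=\imp+R_1(E)\imp+\dots+R_d(E)\imp$, where $R_1,\dots,R_d$ are rational functions with poles at the Zolotarev nodes for approximating $1/\sqrt{y}$ on $[\omega^{2},1]$, so $\dim\calV\le (2m)(d+1)$. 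The family $\{e^{-sE}:s\ge0\}$ is captured on the bath spectrum $\{0\}\cup[\omega,1]$ to accuracy $e^{-r}$ by $d=O(r\log(2\omega^{-1}))$ such functions: the bottleneck is $s\to\infty$, namely the zero-mode projector $I-\mathrm{sgn}(E)=I-E(E^{2})^{-1/2}$, and Zolotarev's theorem controls the rational degree needed to approximate $(E^{2})^{-1/2}$ on $[\omega^{2},1]$ — this is exactly the origin of the mild $\log(2\omega^{-1})$ factor, a polynomial (Chebyshev) truncation costing instead $O(r\,\omega^{-1/2})$. Tracking constants gives $\dim\calV\le 14\,m\,r\log(2\omega^{-1})$.

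\emph{The main obstacle.}
The crux is to make Step 3 quantitative: to show that replacing each propagator $e^{-sE}$ by its $\calV$-approximation along the imaginary-time evolution moves the normalized ground state by at most $O(e^{-r})$ \emph{uniformly in $\beta$}, so that the estimate survives the limit $\beta\to\infty$. A term-by-term bound on the Dyson series fails here, since it charges one Zolotarev error per insertion of $H_{imp}$ while the number of insertions is unbounded as $\beta\to\infty$ and $H_{imp}$ need not be small. I would route around this with a Feshbach/Schur-complement argument: $H^{\calK}$ depends on the bath only through the compression of the resolvent $(z-H_0^{\calK})^{-1}$ to $\imp$, so replacing $H_0^{\calK}$ by an effective bath whose $\imp$-compressed resolvent equals the Zolotarev truncation changes the ground state by $O(e^{-r})$ with no extensive operator-norm perturbation, and that effective bath is representable on $O(m\log(2\omega^{-1}))$ modes per unit of accuracy, placing its ground state on a subspace of the required dimension. (Alternatively, one may attempt moment bounds $\langle\psi|f(H_0)|\psi\rangle\le e^{O(m)}$ for suitably growing $f$ by comparing $\psi$ to the variational states $e^{-\lambda H_0/2}\psi/\|e^{-\lambda H_0/2}\psi\|$.) The remainder is the constant-tracking that pins the exponent at $1/(14\,m\log(2\omega^{-1}))$.
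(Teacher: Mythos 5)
Your Steps 1 and 2 are sound: the reduction of Eq.~(\ref{decay1}) to the statement ``for each $r$ there is a normalized $\chi_r$ with $\|\psi-\chi_r\|\le c_0e^{-r}$ whose correlation matrix is supported on a subspace of dimension $O(mr\log(2\omega^{-1}))$'' is correct (the bound $\|C_\psi-C_\chi\|\le 2\|\psi-\chi\|$ and the Weyl/rank argument both check out). But Step 3 is not a proof, and you say so yourself: the entire quantitative content of the theorem is concentrated in the claim that the Zolotarev truncation of the propagators moves the ground state by only $O(e^{-r})$ \emph{uniformly in the imaginary time} $\beta\to\infty$, and you explicitly concede that the term-by-term Dyson bound fails because the number of $H_{imp}$ insertions is unbounded and $\|H_{imp}\|$ is unconstrained. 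The two escape routes you sketch are not carried out and each has a concrete obstruction: the Feshbach/Schur-complement route requires controlling the compressed resolvent $(z-H_0^{\calK})^{-1}$ at energies $z$ near the ground energy of $H^{\calK}$, where the resolvent is singular and a small perturbation of the effective bath need not translate into a small perturbation of the ground state without an a priori gap assumption (which Theorem~\ref{thm:2} explicitly does not make); the moment-bound route $\langle\psi|f(H_0)|\psi\rangle\le e^{O(m)}$ is essentially Theorem~\ref{thm:Arad2}, which controls the \emph{energy} of bath excitations but not their \emph{number} or the rank of $C$, so it does not by itself yield a low-dimensional support for the correlation matrix. In short, the argument as written establishes the easy reductions and correctly identifies where Zolotarev must enter, but leaves the central estimate unproved.

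For contrast, the paper avoids imaginary-time evolution entirely. It uses the elementary variational inequality $\langle\psi|b(x)^{\dagger}[b(x),H]|\psi\rangle\le 0$, valid for \emph{any} ground state and any $x$ in the $(\ge n-m)$-dimensional subspace $\calL$ of modes commuting with $H_{imp}$, to derive purely matrix-level constraints $\Lambda(CE-EC)\Lambda=0$ and $\Lambda CE\Lambda\le 0$ on the $n\times n$ covariance matrix $C$. Zolotarev's rational approximation is then applied to $E^{1/2}$ (not to $e^{-sE}$), writing $Z=EP_d(E)Q_d(E)^{-1}$ as a product of $2d+1$ commuting factors $(E+\lambda_iI)^{\pm1}$ and intersecting preimages of $\calL\cap\ker[C,E]$ to get a subspace $\calD$ of codimension $O(md)$ on which $[C,Z]=0$; squeezing $\Delta E^{1/2}CE^{1/2}\Delta$ between $0$ and an exponentially small error then bounds all but $O(md)$ eigenvalues of $C$. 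This sidesteps exactly the uniformity-in-$\beta$ issue that blocks your route, and it is why the paper's conclusion holds for every ground state when $E>0$. If you want to salvage your approach, the missing ingredient is precisely some ground-state identity that converts the dynamical picture into a static constraint on $C$ --- which is what the paper's Eq.~(\ref{ground1}) provides.
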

Assuming that the impurity has a constant size $m=O(1)$,  the theorem
asserts that the eigenvalues of the ground state covariance matrix 
decay exponentially with an exponent that depends very mildly on the spectral gap 
of the bath. Moreover, if all excitation energies of the bath are strictly positive, we show that 
Eqs.~(\ref{Cmat},\ref{decay1}) hold for {\em any} ground state $\psi$ of $H$.
Let us emphasize that the exponential decay Eq.~(\ref{decay1}) 
is a universal feature of a ground state that  has no dependence on $H_{imp}$.
Also we note that Theorem~\ref{thm:2} assumes nothing about the spectrum of the
full Hamiltonian $H$.

An important corollary of Theorem~\ref{thm:2} is that an exact ground state of $H$
can be well approximated by a superposition of a small number of fermionic Gaussian
states. Indeed, we show that all excitations present in the bath
can be ``localized" on a small subset of modes by some Gaussian 
unitary operator. 
Informally, each zero eigenvalue of the covariance
matrix $C$ can be identified with an empty fermionic mode $|0\rangle$ (after a suitable
Gaussian unitary transformation). Thus, if $C$ has at most $k$ non-zero eigenvalues, 
at least $n-k$ modes must be in the vacuum state. Any state of the remaining $k$ modes
can be written as a superposition of at most 
$2^k$ Gaussian states. Choosing a suitable cutoff value to truncate small eigenvalues of $C$
yields a good approximation of $\psi$ by a superposition of a few Gaussian states,
see Section~\ref{sec:ground} for details. Theorem~\ref{thm:2} 
plays a central role in the analysis of our quasi-polynomial algorithm.

The proof of Theorem~\ref{thm:2} proceeds in two steps. First we use a  variational characterization
of ground states to show that the covariance matrix $C$ must be a feasible solution of a certain
semidefinite program that depends only on $H_0$ and the linear subspace spanned by the impurity modes.
Secondly we prove that any feasible solution $C$ of this program exhibits an exponential
decay of eigenvalues as stated in Eq.~(\ref{decay1}). This step exploits 
the machinery of rational approximations developed by Zolotarev~\cite{zolotarev1877application}
in 1877 -- an extension of Chebyshev's well-known theory of polynomial approximation.
A formal proof of the theorem is presented in Section~\ref{sec:Cthm}.
 
\subsection{Discussion and open problems}

We have shown that the structure of quantum impurity models can be exploited to enable fast computation of the ground energy and low energy states. Our work may find application in the hybrid quantum-classical DMFT algorithm proposed by Bauer et al. in Ref.~\cite{Bauer15}.  This algorithm has two steps which are performed on a quantum computer. The first step is to prepare a ground state $\psi$ of a quantum impurity model. Bauer et al. suggest using quantum adiabatic evolution followed by phase estimation for the state preparation~\cite{Bauer15}. The second step is to compute the impurity model Green's functions; in this step one simulates Schr{\"o}dinger time evolution with the quantum impurity model Hamiltonian starting from a state  simply related to  $\psi$.  
Our work suggests that the state preparation step can be simplified 
by  classically computing an approximate version of the ground state $\psi$.
This approximate ground state is specified as a superposition of a small number of Gaussian states
and can prepared efficiently on a quantum computer using techniques discussed in Section~\ref{sec:QCMA}.
This would obviate the need for the quantum adiabatic evolution -- a heuristic which usually cannot be rigorously justified 
due to a lack of lower bounds on the minimal spectral gap. 

One may ask: is there a good classical algorithm to simulate the time evolution of quantum impurity models? If so, this might obviate the need for the second step of the algorithm from Ref.~\cite{Bauer15}. As we now explain, a recent work by Brod and Childs~\cite{BC14} provides evidence that 
such efficient classical simulation may not be possible. Consider a system of $n$ qubits and Hamiltonian 
\begin{equation}
H(t)=g(t)(X_{n-2}X_{n}+Y_{n-2}Y_{n})+ \sum_{i=1}^{n-2} f_{i}(t)(X_i X_{i+1} +Y_i Y_{i+1}).
\label{eq:XY}
\end{equation}
Here $f_{i}(t),g(t)$ are time-dependent coefficients and $X_j,Y_j$ are the Pauli operators acting on the $j$th qubit. 
Section IV of Ref.~\cite{BC14} shows  that Schr{\"o}dinger time evolution with the above Hamiltonian can efficiently simulate a quantum computation on $\Omega(n)$ qubits.  The Hamiltonian Eq.~\eqref{eq:XY} can be rewritten in terms of Majorana operators $\{c_1,c_2,\ldots,c_{2n}\}$ using the standard Jordan-Wigner transformation (see Eqs.~(\ref{eq:JW1}-\ref{eq:JW4})); it takes the form of a quantum impurity model with impurity size $m=6$:
\[
H(t)=-g(t)(c_{2n-4}c_{2n-1}c_{2n-3}c_{2n-2}-c_{2n-5}c_{2n}c_{2n-3}c_{2n-2})-i\sum_{j=1}^{n} f_j(t)c_{2j}c_{2j+1}.
\]
Putting this together we see that Schr{\"o}dinger time evolution with a \textit{time-dependent} impurity model Hamiltonian can perform efficient universal quantum computation (and is therefore unlikely to be efficiently classically simulable). It is an open question whether or not a time-independent impurity Hamiltonian can also perform efficient universal quantum computation.

The most direct open question raised by our work is whether or not our bounds can be improved. For example, we do not know if the exponential decay stated in theorem \ref{thm:2} can be strengthened, for example, by eliminating the dependence on the spectral gap of the bath $\omega$.  A related question is whether or not there is an algorithm for estimating the ground energy of quantum impurity problems which scales polynomially as a function of $\gamma^{-1}$ (where $\gamma$ is the desired precision). 

Our results demonstrate that a ground state of a quantum impurity model can be approximated by a superposition of a small number of fermionic Gaussian states.  In Section~\ref{sec:variational} we describe a  simplified practical version of our algorithm which is based on using such states as a variational ansatz.
The algorithm allows one to minimize the energy of an arbitrary  fermionic Hamiltonian 
with quadratic and quartic interactions over  superpositions of $\chi$ Gaussian states,
where $\chi$ is a fixed parameter. 
We hope that this variational algorithm may be useful in other contexts
beyond the  study of quantum impurity models. For example, in quantum chemistry, the Hartree-Fock approximation
is based on minimizing the energy of a fermionic Hamiltonian over Slater determinant states.
A generalized Hartree-Fock method proposed by Kraus and Cirac in Ref.~\cite{KC10}
is a variational algorithm that minimizes the energy  within the larger class of Gaussian states. 
In contrast to Slater determinants, Gaussian states are capable of describing certain
correlations between electrons such as emergence of Cooper pairs in the BCS theory of superconductivity. 
Our work 
extends the algorithm of Ref.~\cite{KC10} to arbitrary superpositions  of $\chi$ Gaussian states.
In Section~\ref{sec:variational} we use the single impurity Anderson model~\cite{Anderson61} 
as a toy model to benchmark our variational algorithm. We found that the $\chi=2$  algorithm approximates the
ground energy within the first eight significant digits for $n\le 40$.

\subsection{Miscellaneous results}
\label{sec:misc}

In this section we collect results that are not directly related to 
our quasi-polynomial algorithm. These results provide 
additional insights on the structure of ground states of quantum impurity models and
the complexity of estimating their ground energy.

\paragraph{Energy distribution:}
The following theorem states that the ground state of a quantum impurity model has almost all of its support in a subspace consisting of low energy states for the bath Hamiltonian $H_0$. It is an analogue of a result due to Arad, Kuwahara, and Landau in the context of local spin systems \cite{AKL14}.

\begin{restatable}{theorem}{Arad}
\label{thm:Arad2}
Let $Q_\tau$ be the projector onto a subspace spanned by eigenvectors of $H_0$
with energy at most $\tau$. 
Let $\psi$ be any ground state of the full Hamiltonian $H$. 
Then 
\begin{equation}
\label{low_energy3}
\| (I-Q_\tau) \psi\| \le 2 \exp{\left[-\frac{\tau}{4}\log\left(\frac{\tau}{8em}\right)\right]}
\end{equation}
for all $\tau\ge 8em$. Here $e\equiv \exp{(1)}$.
\end{restatable}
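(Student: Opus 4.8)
The plan is to follow the strategy used by Arad, Kuwahara and Landau for local spin systems, but adapted to fermions and, crucially, to the fact that $H_{imp}$ has unbounded operator norm. The only property of $H_{imp}$ that the argument actually uses is that it changes the spectrum of $H_0$ by at most $m$: writing $H_0=\sum_j\epsilon_j b_j^\dagger b_j$ and expanding each impurity Majorana $c_p$ ($p\le m$) in the modes $b_j,b_j^\dagger$, the operator $c_p$ maps the energy-$E$ eigenspace of $H_0$ into the span of the eigenspaces with energy in $[E-1,E+1]$ (since $0\le\epsilon_j\le1$); a product of at most $m$ such operators, hence $H_{imp}$ itself, moves $H_0$-energy by at most $m$.

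The first step is to handle the unboundedness of $H_{imp}$ by re-partitioning the free part. Complete $c_1,\dots,c_m$ to a full Majorana basis and write $H_0=H_0^{bath}+H_0^{imp}+H_0^{coupling}$, where $H_0^{bath}$ is supported on $c_{m+1},\dots,c_{2n}$, $H_0^{imp}$ on $c_1,\dots,c_m$, and $H_0^{coupling}$ is the quadratic cross term. Put $\tilde H_{imp}:=H_0^{imp}+H_{imp}$, so that $H=H_0^{bath}+\tilde H_{imp}+H_0^{coupling}$. The point is that $\tilde H_{imp}$ acts only on the impurity modes, so it commutes with $H_0^{bath}$ and contributes nothing off the block-diagonal of $H$ with respect to the $H_0^{bath}$-energy grading, while $H_0^{coupling}$ — the only term coupling bath and impurity — is genuinely bounded: its operator norm equals one quarter of the trace norm of the relevant block of $h$, which is $O(m)$ because $\|h\|\le1$, and it shifts $H_0^{bath}$-energy by at most $1$. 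One also records, using the product trial state $|\psi^\ast_{imp}\rangle\otimes|\psi^\ast_{bath}\rangle$ (ground states of $\tilde H_{imp}$ and of $H_0^{bath}$), that $|e_g-\lambda_{\min}(\tilde H_{imp})|=O(m)$, again via the operator-norm bounds above together with the pinching inequality for trace norms of principal submatrices.

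With this in place the core is an energy-shell recursion. Fix a constant $R>1$ and let $P_j$ project onto $H_0^{bath}$-energy in $[jR,(j+1)R)$, and $\psi_j=P_j\psi$. Since $H_0^{coupling}$ connects only neighbouring shells, the equation $H\psi=e_g\psi$ reads $B_{j,j-1}\psi_{j-1}+(D_j-e_g)\psi_j+B_{j,j+1}\psi_{j+1}=0$ with $D_j=P_jHP_j$ and $\|B_{j,k}\|\le\|H_0^{coupling}\|=O(m)$. Because $\tilde H_{imp}$ commutes with $P_j$ (so $P_j\tilde H_{imp}P_j\succeq\lambda_{\min}(\tilde H_{imp})P_j$) and $H_0^{bath}\succeq jR$ on the range of $P_j$, one gets the effective-gap bound $D_j-e_g\succeq(jR-O(m))P_j$, which is invertible on the range of $P_j$ once $j\ge j_0=O(m)$; inverting it yields $\|\psi_j\|\le\frac{O(m)}{jR}\big(\|\psi_{j-1}\|+\|\psi_{j+1}\|\big)$ for $j\ge2j_0$. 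This two-sided recursion is solved by writing it as $(I-M)(\psi_j)_{j\ge J}\preceq(\text{boundary term at }j{=}J)$ with $M$ the tridiagonal matrix of coefficients $a/j$, $a=O(m)$: conjugating by the diagonal matrix $\mathrm{diag}\big((a/J)^{j}\,j!/a^j\big)$ turns $M$ into a strict contraction once $J=O(m)$ is large enough (and $\|\psi_j\|\to0$ because $\psi\in\ell^2$), giving $\|\psi_j\|\le\big(O(m)\,e/j\big)^{j}$ for $j\gtrsim O(m)$ and $\|\psi_j\|\le1$ otherwise. Summing $\|\psi_j\|^2$ over $jR>s$ gives $\|(I-Q^{bath}_s)\psi\|\le\exp\!\big[-\Omega(s)\log(s/O(m))\big]$; since $H_0-H_0^{bath}=H_0^{imp}+H_0^{coupling}$ has norm $O(m)$, a short argument passing between the $H_0^{bath}$- and $H_0$-energy distributions of $\psi$ (exploiting the rapid shell decay just obtained) then yields the stated bound for $\tau\ge8em$ after bookkeeping of the constants.

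The main obstacle is exactly the unboundedness of $H_{imp}$: in the spin-system argument the off-diagonal blocks are bounded by the norm of the perturbation, which here is infinite, so the whole purpose of the re-partitioning is to trade the unbounded inter-shell coupling $H_{imp}$ for the bounded one $H_0^{coupling}$ while keeping an effective spectral gap — which survives only because $e_g$ stays within $O(m)$ of $\lambda_{\min}(\tilde H_{imp})$ however large $\|H_{imp}\|$ is. Two secondary technical points are the two-sided nature of the shell recursion, handled by the diagonal-conjugation contraction above, and the passage from the $H_0^{bath}$-energy distribution, which the recursion controls directly, to the $H_0$-energy distribution appearing in the statement.
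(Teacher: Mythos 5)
Your overall strategy is viable and, for its core, genuinely different from the paper's. The re-partitioning you perform is essentially the one the paper uses: after a rotation of the bath modes the paper writes $H = H_{imp} + H_{bath} + H_{mix}$ with $H_{bath}$ supported off the impurity and $\|H_{mix}\|\le 2m$, and it likewise controls $e_g$ by $\|H_{mix}\|$ via a product trial state. Where you diverge is the decay mechanism: you run an AKL-style energy-shell recursion $\|\psi_j\|\le \frac{O(m)}{jR}(\|\psi_{j-1}\|+\|\psi_{j+1}\|)$ and solve it by a diagonal conjugation/contraction, whereas the paper proves a Chernoff-type bound by differentiating the exponential moment $M(s)=\langle\psi|e^{2sH_{bath}}|\psi\rangle$, using $H_{imp}\ge 0$, $[H_{imp},H_{bath}]=0$, and the key estimate $\|e^{sH_{bath}}H_{mix}e^{-sH_{bath}}\|\le m(1+e^{2s})$ (which holds because $H_{mix}$ is quadratic and conjugation acts as $e^{s\tilde h}$ on the single-particle space). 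Both routes give $\exp[-\Omega(\tau)\log(\tau/O(m))]$ for the $H_{bath}$-energy tail; the moment method is shorter and produces explicit constants, while your recursion makes the "each coupling step moves energy by $O(1)$ at cost $O(m)$" combinatorics more transparent.

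There is, however, a genuine gap in your final step, the passage from the $H_0^{bath}$-energy distribution to the $H_0$-energy distribution. The fact that $\|H_0-H_0^{bath}\|=O(m)$ does \emph{not} yield an exponentially small bound on $\|(I-Q_{2\tau})P_{\tau}\|$, because the two operators do not commute: a generic perturbation estimate (Davis--Kahan) gives only $\|(I-Q_{2\tau})P_\tau\|\le O(m/\tau)$, and combining this polynomially small cross term with your super-exponential shell decay leaves you with an overall bound of order $m/\tau$ — far weaker than Eq.~\eqref{low_energy3}. The rapid shell decay does not rescue this, since the problematic contribution comes from the \emph{low} $H_0^{bath}$-shells, where $\|\psi_j\|$ is $O(1)$. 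The paper closes this step with a second, separate exponential-moment argument: it bounds $K(s)=\|e^{sH_0}P_\tau\|$ by the differential inequality $\dot K(s)\le(\tau+2me^{2s})K(s)$, again exploiting that $H_{mix}$ is quadratic so that $\|e^{sH_0}H_{mix}e^{-sH_0}\|\le 2me^{2s}$, and then optimizes $e^{-2s\tau}K(s)$ to get $\|(I-Q_{2\tau})P_\tau\|\le\exp[-\frac{\tau}{2}\log(\tau/2em)]$. You need an argument of this strength (some version of a Chernoff bound on the $H_0$-energy of states in the range of $P_\tau$, using the quadratic structure of the coupling, not just its operator norm); as written, your "short argument" does not exist. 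Secondarily, the theorem asserts specific constants ($\tau/4$, $8em$), so even after the fix you would need to track the constants through the shell recursion rather than leaving them as $O(\cdot)$.
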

While it has no dependence on the norm of $H_{imp}$, the utility of theorem \ref{thm:Arad2} depends on the spectrum of $H_0$. The result can be powerful in certain cases, e.g., if $H_0$ has a constant spectral gap. On the other hand if all excitation energies of $H_0$ are sufficiently close to zero ($\leq 8em/n$, say) then the result is trivial. The proof of theorem \ref{thm:Arad2} is given in Section \ref{subs:Arad2proof}.

\paragraph{Efficient algorithm for gapped impurity models:}
The algorithm from theorem \ref{thm:1} does not require any condition on the spectral gap of the quantum impurity model Hamiltonian.  We show that if the full Hamiltonian $H$ has a constant spectral gap then its ground energy can be approximated efficiently using a different technique.
 \begin{restatable}{theorem}{gap}
Suppose the impurity has size $m=O(1)$. Suppose the full Hamiltonian $H$ has a non-degenerate ground state and a constant energy gap above the ground state. Then
there exists a classical algorithm that approximates 
 the ground energy $e_g$ within a given precision $\delta$  in time $\mathrm{poly}(n,\delta^{-1})$.
\label{thm:gap}
\end{restatable}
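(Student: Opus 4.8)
The plan is to exploit the spectral gap of $H$ via imaginary-time (or resolvent) filtering to project a simple initial state onto the ground state, while maintaining throughout the computation a compact representation as a superposition of Gaussian states. First I would observe that any Gaussian state (for instance the ground state of the bath Hamiltonian $H_0$, which is free) has nonzero overlap with the ground state $\psi$ of $H$ generically; more carefully, since $H_{imp}$ acts on only $m=O(1)$ Majorana modes, one can write $\psi$ in a product-like form across the cut separating the impurity modes from the rest, and a fixed-size search over a polynomial-sized set of candidate Gaussian initial states will find one with overlap at least $\exp[-O(m)] = \Omega(1)$ with $\psi$. Call this state $\phi_0$.

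Next I would apply a low-degree polynomial $p(H)$ in the Hamiltonian to $\phi_0$. By standard results (e.g. a Chebyshev polynomial of degree $t = O(\Delta^{-1}\log(\eta^{-1}))$, where $\Delta$ is the constant gap and $\eta$ the target overlap error), $p(H)\phi_0 / \|p(H)\phi_0\|$ approximates $\psi$ to within $\eta$ in trace distance, and since $\Delta = \Omega(1)$ the degree $t$ is only $O(\log(\eta^{-1}))$. The key structural point is that $H = H_0 + H_{imp}$, $H_0$ is quadratic, and $H_{imp}$ is supported on $O(1)$ modes. Expanding $p(H)$ and pushing all the quadratic factors through, each application of $H$ to a Gaussian state yields a bounded-size superposition of Gaussian states: $H_0$ maps a Gaussian state to a superposition of $O(1)$ Gaussian states (quadratic Hamiltonians acting on Gaussian states produce linear combinations of a constant number of Gaussian states, since $c_pc_q$ applied to a Gaussian vector stays in a low-dimensional space), and $H_{imp}$, being supported on $m$ modes, maps any state to a superposition of at most $2^m = O(1)$ Gaussian states when the remaining $n-m$ modes are left in a Gaussian configuration — here one uses the fact that a Gaussian state restricted appropriately can be brought to a form where $H_{imp}$ acts on a $2^m$-dimensional subsystem tensored with something Gaussian. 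Iterating, $p(H)\phi_0$ is a superposition of $\exp[O(t)] = \mathrm{poly}(\eta^{-1})$ Gaussian states.

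Then, given the explicit representation of $\tilde\psi = p(H)\phi_0/\|p(H)\phi_0\|$ as a superposition of $N = \mathrm{poly}(\eta^{-1})$ Gaussian states, I would compute $\langle \tilde\psi | H | \tilde\psi \rangle$. Both the overlaps $\langle G_i | G_j\rangle$ between Gaussian states and the matrix elements $\langle G_i | H | G_j\rangle$ of a quadratic-plus-$O(1)$-local Hamiltonian can be evaluated in time $\mathrm{poly}(n,N)$ using Wick's theorem / Pfaffian formulas for Gaussian states (this is exactly the primitive needed, and used, by the variational algorithm of Section~\ref{sec:variational}). By the variational principle and the gap, $\langle \tilde\psi | H | \tilde\psi\rangle - e_g \le (\|H_{imp}\| + n)\cdot O(\eta^2)$ — or, more efficiently, one can use the tighter bound that filtering error in energy is $O(\eta)$ times the spectral width — so choosing $\eta = \mathrm{poly}(\delta / (n+\|H_{imp}\|))$ gives an estimate within $\delta$, with total runtime $\mathrm{poly}(n,\delta^{-1})$ once $m = O(1)$.

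The main obstacle I anticipate is controlling the growth of the number of Gaussian states under repeated applications of $H_0$: naively each multiplication could multiply the count, and one must argue the count stays polynomial rather than exponential in the degree $t$. The right way around this is to note that $t = O(\log(\eta^{-1}))$ is only logarithmic because the gap is constant, so even a constant-factor blowup per step yields only $\mathrm{poly}(\eta^{-1})$ states overall; alternatively, one can keep the non-impurity part of the state genuinely Gaussian at every step by absorbing the quadratic evolution $e^{-\beta H_0}$ exactly (it preserves Gaussianity), and only the $H_{imp}$ factors — of which there are at most $t$ in any monomial, each multiplying the count by $2^m = O(1)$ — cause branching, again giving $2^{O(t)} = \mathrm{poly}(\eta^{-1})$. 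A secondary technical point is verifying that the chosen $\phi_0$ has $\Omega(1)$ overlap with $\psi$ and that this can be certified or searched for efficiently; this follows from the $O(1)$ impurity size together with a Schmidt-decomposition argument across the impurity cut, or simply by trying all Gaussian states compatible with a net of the $2^m$-dimensional impurity density matrix.
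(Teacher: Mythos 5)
Your approach has a genuine gap in the step that controls the Gaussian rank of the filtered state. The claim that ``$H_0$ maps a Gaussian state to a superposition of $O(1)$ Gaussian states'' is false: writing $H_0=\sum_{j=1}^n \epsilon_j b_j^\dagger b_j$, each term $b_j^\dagger b_j|\phi\rangle$ is indeed Gaussian, but the sum is a superposition of up to $n$ linearly independent Gaussian states, so a degree-$t$ polynomial in $H$ produces $n^{\Theta(t)}$ terms --- quasipolynomial at best, since you need $t$ to grow with $\log(\eta^{-1})$ and $\eta$ to shrink with $n$ and $\delta$. Your proposed fix --- absorbing $e^{-\beta H_0}$ exactly and letting only the $H_{imp}$ factors branch --- trades this for a worse problem: a Trotterized imaginary-time evolution $(e^{-\beta H_0}e^{-\beta H_{imp}})^t$ needs $t$ to scale at least like $\mathrm{poly}(\eta^{-1})$ (not $\log \eta^{-1}$) to control the Trotter error, and each step multiplies the number of Gaussian terms by up to $2^m$, giving $2^{\Omega(\mathrm{poly}(\eta^{-1}))}$ terms. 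Neither variant yields a $\mathrm{poly}(n,\delta^{-1})$ representation. A second, unresolved issue is the initial state: the Schmidt-rank argument across the impurity cut only shows $\psi=\sum_i \lambda_i|\alpha_i\rangle|\beta_i\rangle$ with $O(1)$ terms, but the bath factors $\beta_i$ need not be close to Gaussian (the paper's low-Gaussian-rank results, Corollaries~\ref{cor:gapped} and~\ref{cor:gaussianrank}, require a spectral gap on the \emph{bath} $H_0$, which is not assumed here), so the existence of a Gaussian $\phi_0$ with $\Omega(1)$ overlap with $\psi$ is not established and your ``net over candidate Gaussian states'' has no justified size bound.

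For contrast, the paper's proof avoids filtering entirely: it builds the Krylov subspaces $\calL_j=\mathrm{span}(\calL_1,h\calL_1,\ldots,h^{j-1}\calL_1)$ of the single-particle matrix $h$ seeded by the impurity modes, which block-tridiagonalizes $h$ with blocks of size at most $m$. After a Jordan--Wigner transformation this maps $H$ \emph{unitarily} (hence gap-preservingly) onto a nearest-neighbor chain of $O(n)$ qudits of dimension $2^m=O(1)$ plus a decoupled free-fermion piece, and then invokes the known polynomial-time algorithm for gapped 1D systems. If you want to salvage a filtering-style argument, you would need both a rigorous $\Omega(1)$-overlap guarantee for some efficiently constructible initial state and a representation of intermediate states whose size does not multiply under applications of $H_0$; the paper's reduction sidesteps both difficulties.
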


The proof of the theorem is given in Section \ref{sec:gappedcase}. It proceeds by establishing an efficiently computable mapping (unitary transformation) between the impurity model and a Hamiltonian which describes a chain of $O(n)$ qudits with nearest neighbor interactions and maximum qudit dimension $2^{m}$. The mapping does not require any condition on the gap, and since it is unitary the spectrum of the two models coincide. In the gapped case the ground energy can be computed efficiently using known algorithms for 1D gapped systems \cite{1Dgapped}.

\paragraph{Approximation with inverse polynomial precision:}
Let us now consider the complexity of estimating the ground energy of a quantum impurity model to inverse polynomial precision. 
Formally, consider a decision version of the problem:

\begin{restatable}{Qproblem}{qprob}
We are given a quantum impurity model $H$ with $n$ fermi modes and impurity size $m=O(1)$, and two energy thresholds $a<b$ such that $b-a=1/\mathrm{poly(n)}$. We are promised that either $e_g\leq a$ (yes instance) or $e_g \geq b$ (no instance) and asked to decide which is the case.
\end{restatable}
Note that the algorithm from Theorem~\ref{thm:1} has quasipolynomial run time if $\gamma$ scales inverse polynomially with $n$. Although our algorithm is not efficient in this precision regime, we are able to prove the following complexity upper bound.
\begin{restatable}{theorem}{qcma}
\label{thm:QCMA}
The quantum impurity problem is contained in QCMA.
\end{restatable}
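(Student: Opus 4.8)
The plan is to exhibit, for every yes-instance, a polynomial-length classical witness from which a polynomial-time quantum verifier can reconstruct a low-energy state of $H$ and then estimate its energy. Recall that a promise problem lies in QCMA if there is a polynomial-time quantum verifier and a polynomial $p$ such that yes-instances admit a witness $w\in\{0,1\}^{p(n)}$ accepted with probability $\ge 2/3$, while no-instances reject every $w$ with probability $\ge 2/3$. I would take the witness to consist of (i) an $O(n)\times O(n)$ real orthogonal matrix $R$, specified to precision $1/\mathrm{poly}(n)$, which defines a Gaussian (matchgate) unitary $W_R$, and (ii) the $2^{r}=\mathrm{poly}(n)$ complex amplitudes of a state $|\phi\rangle$ on an $r=O(\log n)$-mode register. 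The verifier prepares $|\Phi\rangle=W_R(|\phi\rangle\otimes|0^{n-r}\rangle)$ — standard state synthesis on $O(\log n)$ qubits, followed by the polynomial-size matchgate circuit for $W_R$ — estimates $\langle\Phi|H|\Phi\rangle$, and accepts iff the estimate is below $(a+b)/2$.

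Soundness is the easy direction and is independent of the impurity structure: whatever string is supplied, the verifier produces some normalized $|\Phi\rangle$ with $\langle\Phi|H|\Phi\rangle\ge e_g$, so on a no-instance ($e_g\ge b$) it suffices to estimate $\langle\Phi|H|\Phi\rangle$ to additive error $(b-a)/3=1/\mathrm{poly}(n)$ and reject. Such an estimate is obtained by sampling the $O(n^2)$ Majorana bilinears of $H_0$ (coefficients of magnitude $O(1)$, so $\mathrm{poly}(n)$ shots suffice) together with tomography of the $O(1)$-mode reduced density matrix $\rho_{imp}$ on the impurity support, from which $\mathrm{Tr}(\rho_{imp}H_{imp})$ is computed classically; one may assume $\|H_{imp}\|\le\mathrm{poly}(n)$ without loss of generality, or else treat the impurity block exactly. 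The work therefore reduces to completeness: showing that every yes-instance has a state of the above special form with energy at most $a+(b-a)/4$.

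To build such a state I would start from the deformed impurity model underlying Theorem~\ref{thm:1}. Discretizing the single-particle energies of $H_0$ onto a suitable grid, and snapping energies below a $1/\mathrm{poly}(n)$ cutoff to zero, yields a bath that is decoupled from the impurity by a Gaussian unitary $W_R$ except on a set of $r$ ``coupled'' collective modes; in the rotated frame the deformed Hamiltonian is a free part plus an operator supported on those $r$ modes, so its ground state is exactly $W_R(|\phi^\star\rangle\otimes|0^{n-r}\rangle)$ with $|\phi^\star\rangle$ a state of the coupled modes — precisely the witness format. One then bounds two errors: the gap between the ground energies of $H$ and of the deformed model, and the gap between $\langle\Phi|H|\Phi\rangle$ and its value against the deformed Hamiltonian for $|\Phi\rangle$ of the given form. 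For the first I would invoke Theorem~\ref{thm:Arad2}: the ground state of $H$ carries weight only $1/\mathrm{poly}(n)$ outside the $O(\log n)$-energy sector of $H_0$, so high-energy bath modes need only be resolved coarsely by the grid. For the second I would use Theorem~\ref{thm:2}, whose exponential decay of the ground-state covariance spectrum bounds the total bath occupation and hence limits the energy cost of the discretization and of projecting the decoupled modes onto the vacuum. Setting $|\phi\rangle=|\phi^\star\rangle$ then gives a state of the required form and energy, and amplification to the standard $2/3$ versus $1/3$ gap is routine because $b-a=1/\mathrm{poly}(n)$.

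The step I expect to be the main obstacle is forcing $r$ — the number of coupled modes the witness must describe — down to $O(\log n)$ while keeping the total energy error at $1/\mathrm{poly}(n)$. A generic grid fine enough for inverse-polynomial precision has $\mathrm{poly}(n)$ points, hence $\mathrm{poly}(n)$ coupled modes, which would only yield a quasi-polynomial witness and place the problem in ``quasi-polynomial QCMA.'' Extracting a genuinely polynomial witness requires combining the discretization with Theorems~\ref{thm:2} and~\ref{thm:Arad2} so that, after one further Gaussian rotation, the relevant low-energy state is supported on only $O(\log n)$ effective modes and only those must be resolved finely. Carrying out this joint bookkeeping of the grid spacing, the occupation bound, and the vacuum projection is the technical crux; the rest — the verification circuit, the energy estimation, and soundness — is either routine or already available from the earlier sections.
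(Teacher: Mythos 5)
Your soundness argument is fine and essentially matches the paper's (the paper has Arthur run phase estimation on $e^{iH(\gamma)}$ rather than sample the Hamiltonian term by term, but either can be made to work). The genuine gap is in completeness, and it sits exactly where you flag it: you never establish that a yes-instance admits a witness of the format you prescribe, and in fact that format is too restrictive to be supported by anything proved in the paper. Writing down the full list of $2^{r}$ amplitudes of the state on the coupled modes forces $r=O(\log n)$. But the localization statement available here, Corollary~\ref{cor:gapped}, only approximates the ground state by $U|\phi\otimes 0^{n-k}\rangle$ with $k=O\bigl(m\log(\omega^{-1})\left[\log(\delta^{-1})+\log m+\log\log(\omega^{-1})\right]\bigr)$; after the truncation step the relevant gap is $\omega=\gamma/m=1/\mathrm{poly}(n)$, so even at constant $\delta$ this gives $k=\Theta(\log n\cdot\log\log n)$ and $2^{k}=n^{\Theta(\log\log n)}$ — a quasi-polynomial, not polynomial, amplitude list. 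Your hoped-for ``one further Gaussian rotation'' compressing the support to $O(\log n)$ modes is not delivered by Theorem~\ref{thm:2} or Theorem~\ref{thm:Arad2}, and no bookkeeping of the grid spacing produces it; this is not a routine technicality but the entire content of the theorem.

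The paper closes this gap by abandoning the ``few modes'' picture. Corollary~\ref{cor:gaussianrank} shows the ground state of the truncated model is $\delta$-close (with $\delta=0.01$) to a superposition of $\chi=e^{O(m\log(\omega^{-1})\log(\delta^{-1}))}=\mathrm{poly}(n)$ \emph{orthonormal Gaussian states}: Fock basis states in a rotated frame that live on all $n$ modes but whose occupation numbers are constrained bin-by-bin across geometrically decaying scales. It is this binned Markov/union-bound counting — not localization onto a small mode set — that brings the count from $2^{k}$ down to $\mathrm{poly}(n)$. The witness is then a classical description of a polynomial-size circuit preparing this superposition (via a controlled-$U_j$ construction with a success flag), and Arthur verifies with phase estimation; soundness needs no assumption on the witness's structure. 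So the missing ingredient in your proof is precisely Corollary~\ref{cor:gaussianrank} or an equivalent statement, and without it — or with your stricter witness format in place of it — the completeness argument does not close.
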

The proof of theorem \ref{thm:QCMA} is given in Section \ref{sec:QCMA}. Here QCMA is a quantum analog of NP \cite{QNP}. Roughly speaking, it consists of those decision problems where every yes instance has a polynomial-sized classical proof which can be efficiently verified using a quantum computer. 

Finally, in Appendix \ref{app:normest} we include an additional algorithmic tool for manipulating superpositions of Gaussian states. Although we do not use this tool in the present paper, we hope that it finds some application elsewhere. In particular, we describe a classical algorithm which estimates the norm of a state $\phi$ given as a superposition of $\chi$ fermionic Gaussian states. The runtime scales only linearly with $\chi$, improving upon a naive $O(\chi^2)$ algorithm. A similar fast norm estimation algorithm for superpositions of \textit{stabilizer states} was given in Ref~\cite{BG16}.


\section{Background}
\label{sec:background}

To make the paper self-contained, in this section we briefly summarize some basic 
facts concerning free fermion Hamiltonians and fermionic Gaussian states.
The material of this section is mostly based on Refs.~\cite{terhal2002classical,bravyi2004lagrangian}.

\subsection{Canonical modes}
\label{sec:cmodes}

Consider a quadratic  Hamiltonian
\[
H_{0}=e_0 I + \frac{i}4\sum_{p,q=1}^{2n} h_{p,q} c_p c_q
\]
where $h$ is a real anti-symmetric matrix and $e_0$ is an energy shift chosen
such that $H_0$ has zero ground energy.
Given a complex vector $x\in \CC^{2n}$, define an operator
\[
b(x)=\sum_{j=1}^{2n} x_j c_j.
\]
Let us say that $b(x)$ is a {\em canonical mode} of $H_0$ if
\begin{equation}
\label{cmodes1}
[H_0,b(x)]=-\epsilon b(x), \quad \quad \epsilon\ge 0.
\end{equation}
Majorana commutation rules Eq.~(\ref{MCR}) give
\[
[H_0,b(x)]=i b(hx),
\]
that is,  $x$ must be an eigenvector of $h$ with an eigenvalue $i\epsilon$.
Furthermore, since $b(x)$ reduces  the energy of any eigenvector
of $H_0$ by $\epsilon$ and $b(x)^2$ is proportional to the identity
due to Eq.~(\ref{MCR}), we conclude that $b(x)^2=0$ whenever $\epsilon>0$.
Choosing an orthonormal set of eigenvectors of $h$ and noting that 
\[
\{b^\dag(x),b(y)\}=\{b(x^*),b(y)\}=2\sum_{j=1}^{2n} x_j^* y_j
\]
one can construct a complete set of canonical modes $b_1,\ldots,b_n$
such that 
\begin{equation}
\label{cmodes2}
H_0=\sum_{j=1}^n \epsilon_j b_j^\dag b_j, \qquad \epsilon_j\ge 0,
\end{equation} 
\begin{equation}
\label{cmodes3}
 b_j^2=0, \qquad  \{b_i,b_j^\dag\}=\delta_{i,j}I.
\end{equation}
Specifically, if $u_1,\ldots,u_n\in \CC^{2n}$ are orthonormal eigenvectors of $h$  such that
$hu_j=i\epsilon_j u_j$ with $\epsilon_j\ge 0$ then $b_j=b(u^j)$. 
We shall refer to the operators $b_1,\ldots,b_n$ constructed above
as {\em canonical modes} of $H_0$. 
Canonical modes can be computed  in time $O(n^3)$ by
diagonalizing $h$. 

\subsection{Pfaffians}
\label{sec:pf}

Suppose $n=2k$  and $M\in \CC^{n\times n}$ is a complex anti-symmetric matrix.
The Pfaffian of $M$ denoted $\pff{M}$ is a complex number defined as 
\begin{equation}
\label{pf}
\pff{M}=\frac1{2^k k!} \sum_{\sigma\in S_n} (-1)^\sigma M_{\sigma(1),\sigma(2)} \cdots M_{\sigma(n-1),\sigma(n)}.
\end{equation}
Here the sum runs over the symmetric group $S_n$ and $(-1)^\sigma$ is the parity of a
permutation $\sigma$. Let us agree that $\pff{M}=0$ whenever $M$ has odd size. 
For small $n$ one can compute Pfaffians directly from the definition:
\begin{equation}
\label{PF24}
\ba{rcl}
n=2 & : &  \pff{M}=M_{1,2} \\
n=4 & : & \pff{M}=M_{1,2}M_{3,4}-M_{1,3}M_{2,4} + M_{1,4}M_{2,3}\\
\ea
\end{equation}
The well-known properties of the Pfaffian are
\begin{equation}
\label{pfp1}
\pff{M}^2=\det{(M)}
\end{equation}
and
\begin{equation}
\label{pfp2}
\pff{RMR^T}=\det{(R)} \pff{M}
\end{equation}
for any complex matrix $R$.
Using Eq.~(\ref{pfp1}) one can compute $\pff{M}$ up to an overall sign in time $O(n^3)$.
Most of the algorithms for computing the Pfaffian proceed by 
transforming $M$ into a
tri-diagonal form~\cite{Wimmer2011,rubow2011factorization}.
This transformation gives a complex matrix $R$ whose determinant is easy to compute
such that 
\begin{equation}
\label{pfp3}
RMR^T=\sum_{p=1}^{n-1} x_p (|p\rangle\langle p+1| -|p+1\rangle\langle p|)
\end{equation}
for some complex coefficients $x_p$.
Definition Eq.~(\ref{pf}) and Eq.~(\ref{pfp2}) then imply that 
\begin{equation}
\label{pfp4}
\pff{M}=\det{(R)}\cdot (x_1 x_3\cdots x_{n-1}).
\end{equation}
From Eqs.~(\ref{pfp3},\ref{pfp4}) one can compute $\pff{M}$ including the overall sign using $O(n^3)$
arithmetic operations.
A detailed discussion of algorithms and optimized implementations
can be found in Ref.~\cite{Wimmer2011}.

\subsection{Gaussian unitary operators and  gaussian states}
\label{sec:Cliff}

A unitary operator $U$ acting on the Fock space $\calH_n$
is called Gaussian\footnote{Gaussian unitary
operators are sometimes called canonical  or Bogolyubov transformations.}
 if its conjugated action  maps any Majorana operator $c_p$ to a linear
combination of Majorana operators $c_1,\ldots,c_{2n}$ that is, 
\begin{equation}
\label{UvsR}
Uc_pU^\dag =\sum_{q=1}^{2n} R_{p,q} c_q
\end{equation}
for some real orthogonal matrix $R$. Gaussian unitary operators
form a group $\calC_n$ which coincides with $O(2n)$ if one ignores the overall phase
of  operators.  The group $\calC_n$ is generated by operators
$U=\exp{(\frac{\theta}2 c_p c_q)}$ that implement  rotations
\[
Uc_p U^\dag = \cos{(\theta)} c_p - \sin{(\theta)} c_q\qquad
\mbox{and} \quad Uc_q U^\dag =\sin{(\theta)} c_p +  \cos{(\theta)} c_q.
\]
and by reflection-like operators $U=c_p$ that flip the sign of all $c_q$
with $q\ne p$. 

A state  $\phi\in \calH_n$ is called  Gaussian
 iff it is obtained from the vacuum
state $|0^n\rangle$ by a Gaussian unitary, i.e., $|\phi\rangle=U|0^n\rangle$
for some $U\in \calC_n$.
Let $\calG_n$ be the set of all Gaussian states.
Any  state $\phi\in \calG_n$ can be specified up to an overall phase by its 
covariance matrix $M$ of size $2n\times 2n$ which is defined as
\begin{equation}
\label{M}
M_{p,q}=(-i/2)\langle \phi| (c_p c_q-c_q c_p)|\phi\rangle.
\end{equation}
 By definition, $M$ is a real anti-symmetric matrix.
For example, a Fock basis state $|y\rangle$ has a block-diagonal covariance matrix 
\begin{equation}
\label{V}
M_{y}\equiv \bigoplus_{j=1}^n \left[ \ba{cc} 0 & (-1)^{y_j} \\ (-1)^{y_j+1}  & 0 \\ \ea \right].
\end{equation}
A Gaussian state $|\phi\rangle=U|y\rangle$ has covariance matrix $M=RM_{y} R^T$,
where $R\in O(2n)$ is defined by Eq.~(\ref{UvsR}). 
This implies that a valid covariance matrix must satisfy $M^2=-I$. 
Conversely, any real anti-symmetric matrix $M$ such that $M^2=-I$
is a covariance matrix of some Gaussian state. 

Let $N=\sum_{j=1}^n a_j^\dag a_j$ be the
particle number operator and 
\begin{equation}
\label{Parity}
P=(-1)^N = (-i)^n c_1 c_2 \cdots c_{2n-1} c_{2n}
\end{equation}
be the fermionic parity operator. Note that any Gaussian unitary $U\in \calC_n$
either commutes or anti-commutes with $P$ since the generators
$\exp{(\frac{\theta}2 c_p c_q)}$ and $c_p$ commute and anti-commute with $P$
respectively. Since $P|0^n\rangle=|0^n\rangle$, it follows that any Gaussian
state $\phi\in \calG_n$ has a fixed parity: $P\phi=\sigma \phi$ for some
$\sigma=\pm 1$.  From Eq.~(\ref{pfp2}) one infers
that $\sigma=\pff{M}$, where $M$ is the covariance matrix of $\phi$.
 We shall say that $\phi$ is even (odd)
if $\sigma=1$ ($\sigma=-1$).  
Gaussian states that are eigenvectors of the number operator $N$
are sometimes called Slater determinants.

We shall say that a state $\psi\in \calH_n$ has a Gaussian rank $\chi$
if it can be written as a superposition of at most $\chi$ Gaussian states.
Gaussian rank is analogous to the Slater number studied in Ref.~\cite{schliemann2001quantum}.
We shall be mostly interested in low-rank Gaussian states, that is,
states with Gaussian rank $\chi=O(1)$ independent of $n$.

Expectation value of any observable on a Gaussian state $\phi\in \calG_n$
can be efficiently computed  using Wick's theorem.
By linearity, it suffices to consider observables proportional to 
Majorana monomials
\begin{equation}
\label{cx}
c(x)=c_1^{x_1} c_2^{x_2} \cdots c_{2n}^{x_{2n}}, \quad \quad x_p\in \{0,1\}.
\end{equation}
Such monomials  form an orthogonal basis in the algebra of 
operators acting on $\calH_n$.
Wick's theorem can be stated  in terms of Pfaffians defined in Section~\ref{sec:pf} as 
\begin{equation}
\label{Wick}
\langle \phi| c(x) |\phi\rangle =  \pff{iM[x]} \quad \mbox{for all $x$},
\end{equation}
where $\phi$ is a Gaussian state, $M$ is the covariance matrix of $\phi$, and
$M[x]$ is a submatrix of $M$ that includes only rows and columns
$j$ such that $x_j=1$. For example,
\[
\langle \phi |c_p c_q c_r c_s|\phi\rangle= -(M_{p,q} M_{r,s} -M_{p,r} M_{q,s} + M_{p,s} M_{q,r})
\]
for any $p<q<r<s$. Note also that $\langle \phi |P|\phi\rangle = \pff{M}$.

Let $H_0$ be a quadratic Hamiltonian considered in Section~\ref{sec:cmodes}
and $b_1,\ldots,b_n$ be its canonical modes. 
Let us order the canonical modes  such that
\[
0\le \epsilon_1\le \epsilon_2 \le \ldots \le \epsilon_n
\]
and let $k$ be the number of zero-energy modes, that is,
$\epsilon_1=\ldots=\epsilon_k=0$ and $\epsilon_{k+1}>0$. Then the ground subspace
of $H_0$ has a form 
\[
\ker{(H_0)}=\mathrm{span}( U|x_1,\ldots,x_k,0,0,\ldots,0\rangle \, : \,  x_i\in \{0,1\} ),
\]
where $U\in \calC_n$ is a Gaussian unitary such that $b_j=Ua_jU^\dag$
for all $j$. In particular, if $H_0$ has no zero-energy modes then
its unique ground state is $U|0^n\rangle$.
The parity of $U|0^n\rangle$ is determined by the sign of $\pff{h}$,
see~\cite{kitaev2001unpaired} for more details.

\subsection{Inner product formulas}
\label{sec:inner}

In this section  we state several useful formulas for 
various inner products  that involve Gaussian states.
They can be viewed as a slightly generalized version of the standard
inner product formulas for Slater determinants~\cite{lowdin1955quantum}.
For the sake of completeness we provide a proof of all
formulas in Appendix~\ref{app:inner}.

Consider Gaussian states $\phi_1,\phi_2\in \calG_n$
with the same parity $\sigma$ 
and let $\rho_a=|\phi_a\rangle\langle \phi_a|$.
Let $M_a$ be the covariance matrix of $\rho_a$.
The magnitude of the inner product $\langle \phi_1|\phi_2\rangle$   is given by
\begin{equation}
\label{ip2}
|\langle \phi_1|\phi_2\rangle|^2=\trace{(\rho_1 \rho_2)} = \sigma 2^{-n}\cdot  \pff{M_1+M_2}.
\end{equation}
Furthermore, $\trace{(\rho_1 \rho_2)}=0$ if $\phi_0,\phi_1$ have different parity.

Suppose $\trace{(\rho_1 \rho_2)}\ne 0$. Can we compute the inner product
$\langle \phi_1|\phi_2\rangle$ including the overall phase~?
To make this question meaningful, first we need to specify a Gaussian state including
the overall phase. To this end, let us  fix some {\em reference state} $\phi_0\in \calG_n$,
for example the vacuum state or a randomly chosen Gaussian state.
We shall specify a Gaussian  state $\phi\in \calG_n$ by
its covariance matrix $M$ and by its inner product with the reference state 
$\langle \phi_0|\phi\rangle$. Here we assume that $\phi$ and $\phi_0$ are not orthogonal.
We show that
\begin{equation}
\label{ip3a}
\langle \phi_0|\phi_1\rangle \cdot \langle \phi_1|\phi_2\rangle \cdot \langle \phi_2|\phi_0\rangle
=\sigma 4^{-n} i^n \, \pff{\left[  \ba{ccc}
iM_0 &-I  & I \\
I & iM_1 &  -I \\
-I & I & iM_2 \\
\ea
 \right]},
\end{equation}
where $\phi_a\in \calG_n$ are Gaussian states with covariance matrices $M_a$
and parity $\sigma$. 
If states $\phi_0,\phi_1,\phi_2$ do not have the same parity then at least
one inner product $\langle \phi_a|\phi_b\rangle=0$ and the righthand
side of Eq.~(\ref{ip3a}) is zero. 
One can rewrite Eq.~(\ref{ip3a})  in a more compact form
that only involves Pfaffians of matrices of size $2n$,
\begin{equation}
\label{ip3}
\langle \phi_0|\phi_1\rangle \cdot \langle \phi_1|\phi_2\rangle \cdot \langle \phi_2|\phi_0\rangle
= 4^{-n} \cdot \pff{M_1+M_2}\cdot \pff{\Delta+M_0}
\end{equation}
where 
\begin{equation}
\label{Delta}
\Delta= (-2I + iM_1 - iM_2)(M_1+M_2)^{-1}.
\end{equation}
Note that $M_1+M_2$ is invertible iff  $\phi_1,\phi_2$
are not orthogonal, see Eq.~(\ref{ip2}).
One can easily check that $\Delta$ is an anti-symmetric matrix (use the identity $M_a^2=-I$).
Using Eqs.~(\ref{ip3a},\ref{ip3}) one can compute the inner product
$\langle \phi_1|\phi_2\rangle$ including the overall phase in time $O(n^3)$.

In order to compute matrix elements $\langle \phi_1|c(x)|\phi_2\rangle$ we shall
need  a generalized Wick's theorem that involves a pair of Gaussian states.
Suppose $x\in \{0,1\}^{2n}$ is an even-weight string
and let $w=|x|$ be its Hamming weight. 
Define a matrix $J_x$ 
of size $w\times 2n$ such that 
$(J_x)_{i,j}=1$ if $j$ is the position of the $i$-th nonzero element of $x$
and $(J_x)_{i,j}=0$  otherwise. Define a diagonal matrix  $D_x$ 
if size $2n\times 2n$  such that $(D_x)_{j,j}=1-x_j$. 
For example, if $x=0^{2n}$ then $D_x=I$ and $J_x$ is an empty matrix. 
We will show that 
\begin{equation}
\label{Wick2}
\langle \phi_0|\phi_1\rangle \cdot \langle \phi_1|c(x)|\phi_2\rangle \cdot \langle \phi_2|\phi_0\rangle =
 \sigma 4^{-n}  i^n \pff{R_x},
\end{equation}
where
\[
R_x=
 \left[ \ba{c|c|c|c}
iM_0  & -I & I &  \\
\hline
I & iM_1 & -I  &  \\
\hline
-I & I & i D_x M_2  D_x & J_x^T +iD_x M_2 J_x^T  \\
\hline
 &  &  -J_x+ i J_x M_2 D_x  & i J_x M_2 J_x^T  \\
\ea\right]
\]
is an anti-symmetric matrix of size $6n+w$.
As before, $\sigma=\pm 1$ is the common parity of $\phi_0,\phi_1,\phi_2$.

In the special case when $\phi_1,\phi_2$  are not orthogonal
one can use  a simplified formula
\begin{equation}
\label{Wick3}
\frac{\trace{\left( \rho_2 \rho_1 c(x) \right)}}{\trace{(\rho_2 \rho_1)}}
=\pff{i\Delta[x]^*}.
\end{equation}
Here $\Delta$ is defined by Eq.~(\ref{Delta})
and $\Delta[x]$ is a submatrix of $\Delta$ that includes only rows and columns
from the support of $x$. We use a notation $\Delta^*$ for the complex conjugate matrix.
Furthermore, if $M_1=M_2=M$ then $\Delta=M$ so that Eq.~(\ref{Wick3})
reduces to the standard Wick's theorem, see Eq.~(\ref{Wick}).
Note that Eq.~(\ref{Wick3}) can be rewritten as
\[
\langle \phi_1|c(x)|\phi_2\rangle = \langle \phi_1|\phi_2\rangle \cdot \pff{i\Delta[x]^*}.
\]
Computing the inner product $\langle \phi_1|\phi_2\rangle$ using Eq.~(\ref{ip3a})
or Eq.~(\ref{ip3}) this gives $\langle \phi_1|c(x)|\phi_2\rangle$.

Note that once the matrix $\Delta$ has been computed,
Eq.~(\ref{ip3})  determines the inner product $\langle\phi_1|\phi_2\rangle$,
and thus Eq.~(\ref{Wick3}) enables computation of
$\langle \phi_1|c(x)|\phi_2\rangle$ in time $O(|x|^3)$ independent of $n$ for any $x$.
This is particularly useful when $c(x)$ describes a single term in some fermionic
Hamiltonian $H$. In this case $x$ usually has Hamming weight $O(1)$
and one can compute  $\langle \phi_1|H|\phi_2\rangle$ in time $O(k)$,
where $k$ is the number of terms in $H$. 
To the best of our knowledge, Eqs.~(\ref{ip3a},\ref{ip3},\ref{Wick2},\ref{Wick3}) are new.


\section{Ground states of quantum impurity models}
\label{sec:ground}

In this section we study exact ground states of  quantum impurity models
and establish some of their features. We first prove Theorem~\ref{thm:2} and discuss its implications
for the approximation of ground states by superpositions of Gaussians. We will also see how bath excitations can be 
``localized" by a Gaussian unitary operator. Second, we prove Theorem~\ref{thm:Arad2}, i.e., we show that any ground state of the full Hamiltonian  has most of its weight
on a certain low-energy subspace of the bath Hamiltonian $H_0$. This feature of ground states 
has been originally proved   for quantum spin systems
by Arad,  Kuwahara, and  Landau~\cite{AKL14}. Our proof 
borrows many ideas from  Ref.~\cite{AKL14}, although it is technically
different and gives a slightly stronger bound.

\subsection{Proof of Theorem~\ref{thm:2} }
\label{sec:Cthm}

Let $b_j$ and $\epsilon_j$ be the canonical modes and single-particle excitation energies of the bath Hamiltonian $H_0$,
see Section~\ref{sec:cmodes}. Then 
\[
H_0=\sum_{j=1}^n \epsilon_j b_j^\dag b_j, \qquad 0\leq \epsilon_j\leq 1.
\]
We shall first assume that $\epsilon_j\geq \omega>0$ for all $j=1,2,\ldots,n$. In this case we will show that Eq.~\eqref{decay1} holds for any ground state $\psi$ of $H$. At the end of the proof we handle the case where one or more single particle excitation energy is zero and we show that in this case Eq.~\eqref{decay1} holds for at least one ground state of $H$ .

For any complex vector $x\in \CC^n$ define a fermionic operator
\[
b(x)\equiv \sum_{j=1}^n x_j b_j.
\]
Recall that the impurity is formed
by the first $m$ Majorana modes $c_1,\ldots,c_m$.  
Define a linear subspace $\calL\subseteq \CC^n$ such that $x\in \calL$
iff the expansion of $b(x)$ in terms of the Majorana operators
$c_1,\ldots,c_{2n}$  does not include $c_1,\ldots,c_m$. 
We note that 
\begin{equation}
\label{dimL}
\dim{(\calL)}\ge n-m
\end{equation}
since $\calL$ is described by  $m$ linear constraints on $n$ variables
$x_1,\ldots,x_n$.
Majorana commutation rules $c_p c_q=-c_p c_q$ for $p\ne q$ imply that
$b(x)$ anti-commutes with $c_1,\ldots,c_m$ whenever $x\in \calL$.
Since  $H_{imp}$ includes
only even-weight monomials in $c_1,\ldots,c_m$, we get
\begin{equation}
\label{commute1}
[H_{imp},b(x)]=0 \quad \mbox{for all} \quad x\in \calL.
\end{equation}

Suppose $\psi$ is any normalized ground state of $H$. Then 
\begin{equation}
\label{ground1}
\langle \psi| b(x)^\dag [b(x),H] |\psi\rangle \le 0.
\end{equation}
since $b(x)\psi$ cannot have energy smaller than $\psi$.
Combining this and Eq.~(\ref{commute1}) we conclude that 
\begin{equation}
\label{ground2}
\langle \psi| b(x)^\dag [b(x),H_0] |\psi\rangle \le 0 \quad \mbox{for all} \quad x\in \calL.
\end{equation}
Using the commutation rules $[b_j, b_k^\dag b_k]=\delta_{j,k} b_j$ one gets
\begin{equation}
\label{ground3}
\sum_{j,k=1}^n \bar{x}_j x_k \epsilon_k \langle \psi| b_j^\dag b_k|\psi\rangle \le 0
\quad \mbox{for all $x\in \calL$}.
\end{equation}
Define a ground state covariance matrix $C$ and a single-particle energy matrix $E$ such that 
\begin{equation}
\label{CE1}
C_{j,k}=\langle \psi|b_j^\dag b_k|\psi\rangle \quad \mbox{and} \quad
E_{j,k}=\epsilon_j \delta_{j,k}.
\end{equation}
Since for now we assume all single particle energies are at least $\omega$ we have
\begin{equation}
\label{eq:Econstraint}
\omega I \leq E\leq I.
\end{equation}
We note that 
\begin{equation}
\label{CE2}
0\le C\le I 
\end{equation}
Indeed, $C\ge 0$ since $C$ is a covariance matrix. 
Commutation rules $b_j^\dag b_k+b_k b_j^\dag = \delta_{j,k}I$
and the assumption $\langle \psi|\psi\rangle=1$
imply that  $(I-C)_{j,k}= \langle \psi |b_k b_j^\dag|\psi\rangle$, that is,
$I-C\ge 0$ which proves Eq.~(\ref{CE2}). Next we observe that $CE$ must have a real non-positive expectation
value on any vector $x\in \calL$ due to 
 Eq.~(\ref{ground3}).  Equivalently, a restriction of $CE$ onto $\calL$
defines a hermitian negative semi-definite matrix. 
Denoting $\Lambda$ a projector onto $\calL$ one can rewrite Eq.~(\ref{ground3}) in a matrix form as
\begin{equation}
\label{CE3}
\Lambda (CE-EC) \Lambda = 0 \quad \mbox{and} \quad \Lambda CE \Lambda\le 0.
\end{equation}
Finally, 
\begin{equation}
\label{CE4}
\mathrm{rank}{(\Lambda)}\ge n-m \qquad \mbox{and} \qquad \Lambda^2=\Lambda.
\end{equation}
due to Eq.~(\ref{dimL}). We shall use Eqs.~(\ref{eq:Econstraint}-\ref{CE4}) to prove the following  lemma.
It establishes a bound slightly stronger than Eq.~\eqref{decay1} in the case where $\epsilon_j\ge \omega$ for all $j$. 
\begin{lemma}
\label{lemma:decay}
Let $\Lambda, C,E$ be $n\times n$ hermitian matrices and $\omega$ be a positive number such that Eqs.~(\ref{eq:Econstraint},\ref{CE2},\ref{CE3},\ref{CE4}) are satisfied. Let $\sigma_1\ge \sigma_2 \ge \ldots \ge \sigma_n$ be the eigenvalues of $C$.
Then for all $j$
\begin{equation}
\label{decay2}
\sigma_j\le c \exp{\left[ -\frac{j}{7m\log{(2\omega^{-1})}}\right]}.
\end{equation}
Here $c>0$ is some universal constant.
\end{lemma}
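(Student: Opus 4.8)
The plan is to reduce the eigenvalue decay estimate to a statement about rational (specifically, Zolotarev-type) approximation of the function $\mathrm{sgn}(x)$ or equivalently $\sqrt{x}$ on a union of intervals bounded away from zero. The intuition is as follows. Equation~\eqref{CE3} says that, restricted to $\calL$, the operator $CE$ is self-adjoint and negative semi-definite. Since $E$ is positive and invertible (bounded below by $\omega I$ by Eq.~\eqref{eq:Econstraint}), one can conjugate by $E^{1/2}$ and work with $\tilde{C} = E^{1/2} C E^{1/2}$, which is still positive semi-definite with the same rank properties; the commuting/negativity conditions become constraints tying $\tilde C$ to the projector $\Lambda$. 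The key point is that if $C$ has many large eigenvalues, then $C$ is ``close to a projector of large rank'' on a subspace that must substantially overlap $\calL$ (because $\mathrm{rank}(\Lambda) \ge n-m$), and on that overlap the inequality $\Lambda CE\Lambda \le 0$ together with $\Lambda(CE-EC)\Lambda = 0$ and $0 \le C \le I$ forces a contradiction unless the large eigenvalues are few in number.

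The main technical device will be a polynomial (or rational) $p$ that is close to $1$ on a window $[\delta, 1]$ containing the large eigenvalues of $C$, close to $0$ near $0$, and such that $p(C)$ can be compared against $p$ applied through the constraints. Concretely, I would proceed as follows. First, fix a threshold $\delta \in (0,1)$ and let $P_\delta$ be the spectral projector of $C$ onto eigenvalues $\ge \delta$, with rank $r = |\{j : \sigma_j \ge \delta\}|$. Using $0 \le C \le I$, write $C = P_\delta C P_\delta + (\text{small part})$ up to operator-norm error $\delta$. Second, feed this into Eq.~\eqref{CE3}: the negativity $\Lambda CE\Lambda \le 0$ combined with $E \ge \omega I$ shows that $\Lambda P_\delta \Lambda$ must be small in an appropriate sense, hence $P_\delta$ is (nearly) supported on $\calL^\perp$, whose dimension is at most $m$. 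But $\mathrm{rank}(P_\delta) = r$, so if $r$ substantially exceeds $m$ we get a quantitative contradiction. Making this quantitative is where Zolotarev enters: rather than a crude spectral-projector argument, one uses the near-optimal rational approximant $R_k$ of degree $k$ to $\mathrm{sgn}$ on $[\omega,1]\cup[-1,-\omega]$, which achieves error $\exp(-\Omega(k/\log(\omega^{-1})))$. Applying $R_k$ to the relevant operator built from $C$, $E$, $\Lambda$ converts the exact algebraic constraints~\eqref{CE3}–\eqref{CE4} into the inequality $\sigma_{j} \lesssim \exp(-cj/(m\log(\omega^{-1})))$ by tracking how the approximation degree $k$ trades off against the index $j$; the factor $m$ appears because each application of the degree-$k$ rational function can only raise the effective rank of $\Lambda^\perp$-supported pieces by a multiple of $m$, and $7$ (versus $14$ in Theorem~\ref{thm:2}) comes from the explicit constants in Zolotarev's estimate.

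I expect the main obstacle to be the bookkeeping that converts the \emph{exact} operator identities $\Lambda(CE-EC)\Lambda = 0$ and $\Lambda CE\Lambda \le 0$ into a usable inequality after applying a \emph{rational} function of $C$: the projector $\Lambda$ and the matrix $C$ do not commute, so $R_k(C)$ does not simply pass through $\Lambda$, and one must control cross terms of the form $\Lambda R_k(C) \Lambda^\perp$. The cleanest route is probably to set $A = \Lambda C E^{1/2}$ or a symmetrized variant and exploit that the singular values of $\Lambda^\perp C^{1/2}$ are few (at most $m$) and small, so that $C^{1/2}\Lambda^\perp C^{1/2}$ is a rank-$\le m$ perturbation; then a telescoping/interlacing argument (Weyl inequalities for eigenvalues under low-rank perturbation) lets one replace $C$ by $\Lambda C \Lambda$ at the cost of shifting eigenvalue indices by $m$, iterate this $j/(Km\log\omega^{-1})$ times using the Zolotarev degree bound at each stage, and collect the geometric decay. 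The constant $c$ in~\eqref{decay2} will be whatever absolute constant falls out of the base case of this iteration.
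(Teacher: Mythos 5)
There is a genuine gap, and it sits exactly at the point you flag as "the main obstacle." The heart of the lemma is that $C$ and $E$ do not commute, so the constraint $\Lambda CE\Lambda\le 0$ of Eq.~(\ref{CE3}) cannot be played directly against $C\ge 0$: positivity of $C$ controls $\Lambda E^{1/2}CE^{1/2}\Lambda$, not $\Lambda CE\Lambda$, and these differ by $\Lambda[C,E^{1/2}]E^{1/2}\Lambda$. Your first reduction --- that $\Lambda CE\Lambda\le 0$ together with $E\ge\omega I$ forces $\Lambda P_\delta\Lambda$ to be small, so that the top spectral projector $P_\delta$ of $C$ is nearly supported on $\calL^\perp$ and hence has rank at most about $m$ --- is not justified and cannot be correct as stated: it would yield at most $\sim m$ eigenvalues of $C$ above any fixed threshold, i.e.\ a decay rate independent of $\omega$, which the authors explicitly identify as an open problem. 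The matrices $E=\mathrm{diag}(1,\omega,\omega,\dots)$-type examples with $C$ far from commuting with $E$ show that $\Lambda CE\Lambda\le 0$ alone places no such rank bound on $P_\delta$.

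The second, related problem is that you apply the Zolotarev approximant to the wrong operator. Zolotarev needs a spectral gap to bite on, and the only gap in the hypotheses is that of $E$ (Eq.~(\ref{eq:Econstraint})); the spectrum of $C$ is precisely the unknown. The paper's proof approximates $E^{1/2}$ by $Z=EP_d(E)Q_d^{-1}(E)$, factors $Z=M_1\cdots M_{2d+1}$ into monomials $E+\lambda_iI$ or $(E+\lambda_iI)^{-1}$, and uses the fact that Eqs.~(\ref{CE3},\ref{CE4}) force $[C,E]$ to have rank at most $2m$ to build, by intersecting $\calD_0=\calL\cap\ker[C,E]$ with the preimages $M_j^{-1}\cdots M_1^{-1}\calD_0$, a subspace $\calD$ of codimension $O(md)$ on which $[C,Z]=0$ \emph{exactly}. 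On $\calD$ one then has $\Delta E^{1/2}CE^{1/2}\Delta\ge 0$ from $C\ge 0$ and $\Delta E^{1/2}CE^{1/2}\Delta\le O(\exp[-d/\log(2\omega^{-1})])$ from $\Delta CE\Delta\le 0$ plus the approximation error, and Cauchy interlacing converts this into the stated decay (the constant $7$ comes from the codimension count $6m(d+1)$ versus $j$, not from the Zolotarev constants). Your proposed iteration of Weyl inequalities under rank-$m$ perturbations of $R_k(C)$ does not supply a substitute for this commutator-localization step, so as written the argument does not go through.
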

\begin{proof}
Let us first  sketch the main steps of the proof.
Suppose one can find a subspace $\calD\subseteq \calL$ such that 
\begin{equation}
\label{sketch1a}
[C,E^{1/2}]\cdot \calD=0.
\end{equation}
Here and below we define the square root $E^{1/2}$ such that its eigenvalues  are non-negative. 
Let $\Delta$ be the projector onto $\calD$. 
Clearly, $\Delta E^{1/2}CE^{1/2}\Delta\ge 0$ since $C\ge 0$. On the other hand,
\[
\Delta E^{1/2}CE^{1/2}\Delta =\Delta CE \Delta - \Delta [C,E^{1/2}]E^{1/2}\Delta=\Delta CE\Delta
\]
due to Eq.~(\ref{sketch1a}). The inclusion $\calD\subseteq \calL$ and Eq.~(\ref{CE3}) imply
$\Delta CE\Delta=\Delta \Lambda CE \Lambda \Delta \le 0$ and thus
\begin{equation}
\label{sketch1b}
\Delta E^{1/2}CE^{1/2} \Delta\le 0.
\end{equation}
To avoid a contradiction one has to assume that $\Delta E^{1/2}CE^{1/2} \Delta=0$.
Since $C\ge 0$,   this is possible only if 
$E^{1/2}\cdot \calD \subseteq \mathrm{ker}(C)$.
If the subspace $\calD$ is sufficiently large, this would show that $C$ has sufficiently many zero
eigenvalues. Unfortunately, condition Eq.~(\ref{sketch1a}) appears to be too strong.
Instead we shall realize an approximate version of Eq.~(\ref{sketch1a}). In other words, we shall construct a subspace $\calD\subseteq \calL$ and
a hermitian operator $Z$ acting on $\CC^n$ such that 
\begin{equation}
\label{sketch2}
[C,Z]\cdot \calD=0 \quad \mbox{and} \quad Z\approx E^{1/2}.
\end{equation}
The desired operator $Z$  will be constructed using a low-degree
rational approximation to the square root function due to
Zolotarev~\cite{zolotarev1877application}.
This approximation has a form 
\[
x^{1/2} \approx \frac{x P_d(x)}{Q_d(x)} \qquad \mbox{for $\omega \le x\le 1$}
\]
where $P_d(x),Q_d(x)$ are degree-$d$ polynomials and the approximation error
scales as $\exp{\left[-cd/\log{(2\omega^{-1})}\right]}$ for some constant $c>0$.
(For comparison, approximating
 $x^{1/2}$  by the Taylor series at $x=1$ truncated at some order $d$
achieves an approximation error $\exp{\left[-cd \omega\right]}$
which is significantly worse if $\omega$ is small.)
We shall approximate $E^{1/2}$ by an operator $Z=E P_d(E) Q_d^{-1}(E)$.
Let us write 
\begin{equation}
\label{sketch3}
Z=M_1M_2\ldots M_{2d+1},
\end{equation}
where each $M_i$ is a "monomial" $M_i=E+\lambda_i I$ or $M_i=(E+\lambda_iI)^{-1}$
for some real numbers $\lambda_i$.
A subspace $\calD\subseteq \calL$ satisfying Eq.~(\ref{sketch2})
 is constructed in two steps. First, construct a subspace
$\calD_0\subseteq \calL$ such that $[C,E]\cdot \calD_0=0$. 
We shall choose $\calD_0$ as the intersection of $\calL$ and $\mathrm{ker}([C,E])$.
Secondly define
\begin{equation}
\label{Dspace}
\calD=\calD_0 \cap \left( M_1^{-1} \calD_0 \right) \cap \left( M_2^{-1} M_1^{-1} \calD_0 \right) \cap \cdots
\cap \left(M_{2d+1}^{-1} \cdots M_2^{-1} M_1^{-1} \calD_0\right).
\end{equation}
We claim that $[C,Z]\phi=0$ for any  state $\phi\in \calD$. 
Using the chain rule for the commutator $[C,Z]=[C,M_1\cdots M_{2d+1}]$ it suffices to check that 
\[
[C,M_{j+1}]M_j \cdots M_2 M_1 \phi=0 \quad \mbox{for all $\phi\in \calD$}
\]
for all $j$. Suppose first that $M_{j+1}=E+\lambda_{j+1}I$. 
By construction of $\calD$, there exists a state $\phi'\in \calD_0$ such that
$\phi = M_j^{-1} \cdots M_2^{-1} M_1^{-1} \phi'$. Since all $M$'s commute, one has
\[
[C,M_{j+1}]M_j \cdots M_2 M_1 \phi= [C,M_{j+1}] \phi'=[C,E] \phi'=0
\]
since $\phi'\in \calD_0$ and $[C,E]\cdot \calD_0=0$.
Suppose next that $M_{j+1}=(E+\lambda_{j+1} I)^{-1}$. Then
\[
[C,M_{j+1}]M_j \cdots M_2 M_1 \phi
=-M_{j+1}[C,M_{j+1}^{-1}] M_{j+1} M_j \cdots M_2 M_1 \phi.
\]
Again, by construction of $\calD$, there exists $\phi'\in \calD_0$
such that $\phi=M_{j+1}^{-1} \cdots M_2^{-1} M_1^{-1} \phi'$.
Taking into account that $M_{j+1}^{-1}=E+\lambda_{j+1}I$ one gets
\[
[C,M_{j+1}]M_j \cdots M_2 M_1 \phi=-M_{j+1} [C,M_{j+1}^{-1}] \phi'=-M_{j+1} [C,E] \phi'=0
\]
since $\phi'\in \calD_0$ and $[C,E]\cdot \calD_0=0$.
This proves that $[C,Z]\cdot \calD=0$ as promised.
How large is the subspace $\calD$~?
We shall use the fact that $\calL$ has dimension at least $n-m$ to show
that $\calD_0$ has dimension at least $n-3m$ and  $\calD$ has dimension  $r\ge n-6md+O(1)$.
Repeating the steps that has lead to Eq.~(\ref{sketch1b}) with $E^{1/2}$
replaced by $Z$ we will show that $\Delta E^{1/2} CE^{1/2}\Delta$ 
has an exponentially small norm. We then use Cauchy's interlacing theorem  to show that
$C$ has at least $r=n-O(md)$ eigenvalues with magnitude at most
$\exp{\left[-cd/\log{(2\omega^{-1})}\right]}$. 
Since we are free to choose $d$ arbitrarily, this is possible only if the eigenvalues of $C$
decay exponentially.

Let us now proceed to a formal proof of Lemma~\ref{lemma:decay}. Define a subspace
\begin{equation}
\label{D0}
\calD_0=\calL \cap   \ker{(CE-EC)}
\end{equation}
and let $\Delta_0$ be the projector on $\calD_0$.
By definition, $[C,E]\Delta_0=0$. From Eqs.~(\ref{CE3},\ref{CE4}) one infers that  the commutator $[C,E]$ has rank at most $2m$,
that is, $\ker{(CE-EC)}$ has dimension at least $n-2m$.
Therefore 
\[
\dim{(\calD_0)} \ge \dim{(\calL)} + \dim{( \ker{[C,E]})} -n \ge (n-m)  + (n-2m) -n = n-3m.
\]
We conclude that any matrix $C$ that satisfies Eqs.~(\ref{CE2},\ref{CE3})
must also satisfy 
\begin{eqnarray}
0\le C\le I \label{SDP1}\\
(CE-EC)\Delta_0 =0 \label{SDP2}\\
\Delta_0 CE \Delta_0  \le  0 \label{SDP3} 
\end{eqnarray}
and
\begin{equation}
\label{SDP4}
\mathrm{rank}(\Delta_0) \ge  n-3m.
\end{equation}
Here in the third line we noted that $\Delta_0 CE \Delta_0$
is a restriction of a negative semi-definite operator $\Lambda CE \Lambda$ onto  $\calD_0$.
Our goal is to show that any matrix $C$ satisfying Eqs.~(\ref{SDP1}-\ref{SDP3})
has exponentially decaying eigenvalues as claimed in Eq.~(\ref{decay2}).

We shall approximate the matrix square root $E^{1/2}$ by rational functions of a sufficiently small degree. 
Optimal rational approximations of a given degree to the sign and square root functions were found
in 1877 by Zolotarev~\cite{zolotarev1877application}. 
They have been used in more recent times for high-energy physics simulations, 
see~\cite{Kennedy2004approximation,chiu2002note}.
The following lemma quantifies the quality of  Zolotarev's approximation. 

\begin{restatable}[\bf Zolotarev's approximation]{lemma}{zolo}
\label{lemma:Zo}
For any integer $d\ge 1$ and real number $0<\omega<1$
there exist degree-$d$ polynomials 
$P_d(x)$, $Q_d(x)$ such that
\begin{equation}
\label{Zo1}
\left| \sqrt{x} - x\frac{P_d(x)}{Q_d(x)} \right|
 \le 2\sqrt{x} \cdot \exp{\left[ -\frac{d}{\log{(2\omega^{-1})}}\right]}.
\end{equation}
for all $x\in [\omega,1]$.
All roots of the polynomials $P_d(x)$ and $Q_d(x)$ are negative real numbers.
\end{restatable}
It should be pointed out that the  construction of the polynomials $P_d(x),Q_d(x)$ depends on $\omega$.
For the sake of readability  we omit the dependence  of $P_d(x),Q_d(x)$ on $\omega$ in our notations.
Since the proof of  Lemma~\ref{lemma:Zo} is a simple combination of known 
facts~\cite{gonvcar1969zolotarev,nakatsukasa2015computing}, we postpone it until
Section~\ref{subs:rational}. This section also provides  explicit formulas for the
roots of $P_d(x),Q_d(x)$. We note that it is possible to improve the exponent on the right-hand side of Eq.~\eqref{Zo1} to $-d\pi^2/\log(256\omega^{-1})$ (see the proof), but for ease of notation we use the above bound.

We shall use Zolotarev's approximations to prove the following.
\begin{prop}
\label{prop:decay}
Let  $d\ge 1$ be any integer and let $r\equiv n-6m(d+1)$.
Then  there exists a projector $\Delta$ of rank at least $r$ such that 
any solution $C$ of Eqs.~(\ref{SDP1}-\ref{SDP3}) obeys
\begin{equation}
\label{eq:deltaC}
\| \Delta C\Delta \| \le 10 \exp{\left[ -\frac{d}{\log{(2\omega^{-1})}}\right]}.
\end{equation}
\end{prop}
\begin{proof}
Consider some fixed integer $d$ and 
a gap $\omega>0$ such that all eigenvalues of $E$ lie in the interval
$[\omega,1]$. 
Let $P_d,Q_d$ be the degree-$d$ polynomials
from Lemma~\ref{lemma:Zo} 
such that  $xP_d(x)Q_d^{-1}(x)$ approximates $x^{1/2}$ for all $\omega \le x\le 1$. Define
\[
Z\equiv E P_d(E) Q_d^{-1}(E).
\]
Then Lemma~\ref{lemma:Zo} implies
\begin{equation}
\label{eq:sqrtE1}
Z=E^{1/2} (I+A)  \qquad  \|A\|\leq 2\exp{\left[ -\frac{d}{\log{(2\omega^{-1})}}\right]}
\end{equation}
Without loss of generality we shall assume that the right hand side of Eq.~\eqref{eq:sqrtE1} is at most $\frac{1}{2}$, forcing $\|A\|\leq \frac{1}{2}$. Indeed, if the right hand side of Eq.~\eqref{eq:sqrtE1} is larger than $\frac{1}{2}$, the right-hand side of Eq.~\eqref{eq:deltaC} is $>1$ and the proposition holds trivially. Defining $B=(I+A)^{-1}-1$ and rearranging Eq.~\eqref{eq:sqrtE1} we get
\begin{equation}
\label{eq:sqrtE2}
E^{1/2}=Z(I+B) \qquad   \qquad \|B\|=\|A(I+A)^{-1}\|\leq 2\|A\|.
\end{equation}
Moreover, the operators $Z,A,B,E$ are diagonal over the same basis and mutually commute.

We may decompose
\begin{equation}
\label{Zo6}
Z=\prod_{i=1}^{2d+1} M_i
\end{equation}
where 
$M_i=E+\lambda_i I$ or $M_i=(E+\lambda_iI)^{-1}$.
The operators $M_i$ are hermitian and non-singular since all
roots of $P_d,Q_d$ lie on the negative real axis while $E\ge \omega I>0$. 
Furthermore, all monomials $M_i$ commute with each other.
Let $\calD\subseteq \calD_0$ be the subspace defined by Eq.~(\ref{Dspace}).
We claim that 
\begin{equation}
\label{Dspace1}
\dim{(\calD)} \ge n-6m(d+1).
\end{equation}
Indeed,  since $\calD_0$ has dimension at least $n-3m$,
the orthogonal complement $\calD_0^\perp$ has dimension at most $3m$. 
Likewise, the orthogonal complement
to $(M_j^{-1} \cdots M_1^{-1} \calD_0)^\perp$ has dimension at most $3m$.
Since $\calD^\perp$ is contained in the sum of  $2d+2$ 
subspaces $\calD_0^\perp$ and $(M_j^{-1} \cdots M_1^{-1} \calD_0)^\perp$
with $j=1,\ldots,2d+1$, see Eq.~(\ref{Dspace}), one infers that $\calD^\perp$
has dimension at most $3m(2d+2)$. This proves Eq.~(\ref{Dspace1}).
Below we assume that $d$ is small enough such that $\calD$ is non-empty.
Let $\Delta$ be the projector onto $\calD$.

The arguments below Eq.~(\ref{Dspace}) prove that 
$C$ commutes with $Z$ if restricted onto
the subspace $\calD$, that is,
\begin{equation}
[C,Z]\Delta=0.
\label{eq:commuteCZ}
\end{equation}

Using Eqs.~(\ref{eq:sqrtE1},\ref{eq:commuteCZ}) and the fact that $E,A,Z$ mutually commute we obtain
\begin{align}
\Delta E^{1/2} C E^{1/2} \Delta& =\Delta E^{1/2} C Z\Delta-\Delta E^{1/2} C E^{1/2}A \Delta\nonumber\\
&=\Delta E^{1/2}Z C\Delta-\Delta E^{1/2} C A E^{1/2}\Delta \nonumber\\
&=\Delta E^{1/2} E^{1/2}(I+A) C\Delta-\Delta E^{1/2} C A E^{1/2}\Delta \nonumber\\
&=\Delta EC\Delta+\Delta E^{1/2}A E^{1/2}C\Delta-\Delta E^{1/2} C A E^{1/2}\Delta
\label{eq:threeterms}
\end{align}
Here the first line uses $E^{1/2}=Z-E^{1/2}A$.
Again using the fact that $E,A,Z,B$ mutually commute along with Eqs. ~(\ref{eq:sqrtE1},\ref{eq:sqrtE2},\ref{eq:commuteCZ}) to rearrange the second term:
\begin{align}
\Delta E^{1/2}A E^{1/2}C\Delta & =\Delta E^{1/2}A (I+B)ZC\Delta\nonumber\\
&=\Delta E^{1/2}A (I+B)C Z\Delta\nonumber\\
&=\Delta E^{1/2}A (I+B)C (I+A) E^{1/2}\Delta
\label{eq:firstterm}
\end{align}
Combining Eqs.~(\ref{eq:threeterms},\ref{eq:firstterm}) gives
\begin{equation}
\Delta E^{1/2} C E^{1/2} \Delta-\Delta E C \Delta =\Delta E^{1/2}\bigg(A (I+B)C (I+A)-CA \bigg) E^{1/2} \Delta
\label{eq:Edelta}
\end{equation}
From Eqs.~(\ref{SDP2},\ref{SDP3}) and $\Delta\leq \Delta_0$ we infer that $\Delta E C \Delta$ is Hermitian and negative semidefinite. Using this fact in Eq.~\eqref{eq:Edelta} we obtain the operator inequality
\begin{equation}
\Delta E^{1/2} C E^{1/2} \Delta \leq \Delta E^{1/2}\bigg(A (I+B)C (I+A)-CA \bigg) E^{1/2} \Delta.
\label{eq:finalEdelta}
\end{equation}

Let $\Delta'$ be the projector onto the support of 
$E^{1/2} \Delta E^{1/2}$. 
From Eq.~(\ref{eq:finalEdelta}) one gets
\begin{align}
\label{Gamma2}
\| \Delta'C \Delta' \| &=\max_{\phi \in \Delta} \frac{ \langle \phi|E^{1/2} C E^{1/2}|\phi\rangle}
{\langle \phi|E|\phi\rangle} \nonumber\\
&\le \max_{\phi \in \Delta} \frac{ \langle \phi| E^{1/2}\big(A (I+B)C (I+A)-CA \big) E^{1/2} |\phi\rangle}
{\langle \phi|E|\phi\rangle} \nonumber\\
&\leq \|A (I+B)C (I+A)-CA\|\nonumber\\
&\leq \|A (I+B)C (I+A)\|+\|CA\|.
\end{align}
Noting that $\|I+B\|,\|I+A\|\leq 2$ and $\|C\|\leq 1$, and using 
Eq.~\eqref{eq:sqrtE1} to bound $\|A\|$ we get
\begin{equation}
\| \Delta'C \Delta' \| \leq 10 \exp{\left[ -\frac{d}{\log{(2\omega^{-1})}}\right]}.
\end{equation}
Since $E$ is invertible, $\Delta'$ has rank at least $n-6m(d+1)$, see Eq.~(\ref{Dspace1}).
This proves Eq.~(\ref{eq:deltaC}) with $\Delta=\Delta'$.
\end{proof}

We can now complete the proof of Lemma~\ref{lemma:decay}. Note that it suffices to consider the case where
\[
j\geq c_1 m
\]
for some universal constant $c_1$. Indeed, we may then choose the constant $c$ in Eq.~\eqref{decay2} to satisfy $c>e^{c_1/7\log(2)}$; with this choice for all $j\leq c_1 m$ the right hand side of Eq.~\eqref{decay2} is $\geq 1$ and the claim holds trivially. 

Let $\Delta$ be the projector of rank $r\ge n-6m(d+1)$ from Proposition~\ref{prop:decay}.
Let $\sigma_j(C)$ and $\sigma_j( \Delta  C\Delta )$
be the $j$-th largest eigenvalues of the respective operators.
Cauchy's interlacing theorem implies 
\[
\sigma_j(C)\le \sigma_{j-(n-r)}( \Delta  C\Delta ).
\]
Choose 
\[
d+1=\left\lfloor \frac{j-1}{6m}\right\rfloor
\]
so that $r\ge  n-j+1$. Note that we may choose a universal constant $c_1$ such that $d\geq j/7m$ for all $j\geq c_1m$. Then, for all $j\geq c_1m$, Proposition~\ref{prop:decay} gives
\[
\sigma_j(C)\le \sigma_{j-(n-r)}( \Delta  C\Delta )\le \sigma_1(\Delta C\Delta)=\| \Delta C\Delta\|
\le 
10 \exp{\left[ -\frac{j}{7m\log{(2\omega^{-1})}}\right]}.
\]
As noted above, it is sufficient to establish the bound only for $j\geq c_1 m$. This proves the lemma.
\end{proof}

Let us now handle the case when
$H_0$ has zero-energy modes. Suppose
\[
\epsilon_1=\epsilon_2=\ldots =\epsilon_T=0
\]
and $\epsilon_j\geq \omega$ for $j>T$. We shall first reduce to the case where the total number $T$ of zero energy modes satisfies $T\leq m$.

Recall the subspace $\calL\subseteq \mathbb{C}^n$ defined around Eq.~(\ref{commute1}). Define another linear subspace $\mathcal{Q}\subseteq \mathbb{C}^n$ such that $x\in \mathcal{Q}$ iff $x_{j}=0$ for all $j>T$. We have
\[
\dim(\mathcal{L}\cap \mathcal{Q} )\geq T-m
\]
since $\dim(\mathcal{L})\geq n-m$ and $\dim(Q)=T$. Note that $[b(x),H_0]=0$ for all $x\in \mathcal{Q}$ and $[b(x),H_{imp}]=0$ for all $x\in \mathcal{L}$ and therefore
\[
[b(x),H]=0 \quad \text{ for all } x\in \mathcal{L}\cap \mathcal{Q}. 
\]
We may form a $T\times T$ unitary matrix $U$ where the rows (when padded with $n-T$ zeros) span $\mathcal{Q}$ and the first $\dim(\mathcal{L}\cap \mathcal{Q} )$ rows span $\mathcal{L}\cap \mathcal{Q}$. The new modes
\[
\tilde{b}_j =\begin{cases} \sum_{k=1}^{T}  U_{jk} b_k & j=1,\ldots, T \\
b_j & T<j\le n
\end{cases}
\]
then satisfy 
\[
[\tilde{b}_j,H]=0 \qquad \text{for all } j=1,\ldots, T-m.
\]
Thus we may write $H=I\otimes H'$ where $H'=H_0'+H_{imp}'$ describes the nontrivial action of $H$ on modes $\tilde{b}_j$ for $j>T-m$. Here $H_0'$ has $m$ zero energy modes and spectral gap $\omega$.
We may therefore choose a ground state $\psi$ of the full Hamiltonian $H$ which satisfies $|\psi\rangle=|\tilde{0}^{(T-m)}\otimes \phi\rangle$ where $\phi$ is a ground state of $H'$ and $\tilde{0}^{(T-m)}$ is the vacuum state for all modes $\tilde{b}_j$ with $j\leq (T-m)$.

The covariance matrix $\tilde{C}$ of $\psi$ defined with respect to the new modes, i.e.,
\[
\tilde{C}_{jk}=\langle \psi|\tilde{b}^{\dagger}_j \tilde{b}_k|\psi\rangle
\]
is unitarily equivalent to the matrix $C$ defined in Eq.~\eqref{Cmat} and therefore has the same spectrum. Moreover, $\tilde{b}_j|\psi\rangle=0$ for all $j\leq (T-m)$ and therefore
\[
\tilde{C}=\left(\begin{array}{cc}0_{(T-m)\times (T-m)} & 0_{(T-m)\times (n+m-T)}\\ 0_{ (n+m-T)\times (n+m-T)} & C' \end{array}\right)
\]
where $C'$ is the covariance matrix for $\phi$. Thus, to prove theorem \ref{thm:2} for the original impurity model $H$ it suffices to prove that Eq.~\eqref{decay1} holds for $C'$, the covariance matrix of an (arbitrary) ground state $\phi$ of $H'$. This shows that without loss of generality we may consider the case where $T\leq m$.

To complete the proof it remains to handle the case where $E$ has at most $m$ zero eigenvalues
and all other eigenvalues are at least $\omega$. In this case we show that Eq.~\eqref{decay1} holds for all ground states $\psi$ of $H$.
Define a subspace 
\[
\calL'=\calL\cap \ker{(E)}^\perp.
\]
Recall that $\dim{(\calL)}\ge n-m$. Then 
$\calL'$ has dimension at least $n-2m$. Let $\Lambda'$ be the projector onto $\calL'$, so
\[
\mathrm{rank}(\Lambda')\geq n-2m.
\]
Define a deformed energy matrix $E'$ by setting
$\epsilon_j=\omega$ for $j=1,\ldots,m$. Now 
\[
\omega I \leq E'\leq I.
\]
Furthermore, $E\Lambda'=E'\Lambda'$. Combining this and  Eq.~(\ref{CE2}) gives
\[
\Lambda'(CE'-E'C)\Lambda'=0 \quad \mbox{and} \quad \Lambda' CE'\Lambda'\le 0.
\]

Now apply Lemma~\ref{lemma:decay} with $\Lambda$ and $E$ replaced by
$\Lambda'$ and $E'$, and with $m$ replaced by $2m$. This gives Eq.~\eqref{decay1} and completes the proof.

{\em Remark:} We observe that Eqs.~(\ref{CE2},\ref{CE3}) define a semi-definite
program (SDP) with a variable $C$. The SDP depends on the energy matrix $E$ and the
projector $\Lambda$. Thus  the ground state covariance matrix $C$ must be
a feasible solution of the SDP.  Choosing an objective function
to be maximized one can extract  some useful information
about ground states of a particular impurity model. For example,
maximizing $\trace{(C)}$ subject to Eqs.~(\ref{CE2},\ref{CE3})   provides
an upper bound on the average  number of excitations in the bath
since $\trace{(C)}=\langle \psi |\sum_{j=1}^n b_j^\dag b_j|\psi\rangle$.
We observed numerically that this upper bound is typically much 
better than what one would expect from our  general results, e.g. Lemma~\ref{lemma:decay}. In fact, for the vast majority of impurity models that we examined (but not for all) we observed that $\trace{(C)}\le 2m$
regardless of the gap of $H_0$.

\subsection{Rational approximations to the square root function}
\label{subs:rational}

In this section we prove Lemma~\ref{lemma:Zo}
and provide explicit formula for the roots of the polynomials $P_d(x),Q_d(x)$.
For convenience, we repeat the statement of the lemma.
\zolo*
\begin{proof}
We shall need  bounds on the convergence of Zolotarev's approximations
to the sign function. The following fact 
was proved in Ref.~\cite{gonvcar1969zolotarev} and stated more explicitly 
 in Ref.~\cite{nakatsukasa2015computing} (see page~9).
 \begin{fact}
\label{fact:1}
For each $0<\delta<1$ and integer $d\ge 1$  there exist degree-$d$ polynomials $P_d,Q_d$  such that  
\begin{equation}
\label{Zo2}
\max_{x\in J(\delta)} \left| \mathrm{sgn}{(x)} - x\frac{P_d(x^2)}{Q_d(x^2)} \right|
 \le 2\exp{\left[ -\frac{d \pi K(\sqrt{1-\mu^2})}{4 K(\mu)}\right]}.
\end{equation}
Here $J(\delta)=[-1,-\delta] \cup [\delta,1]$,
\begin{equation}
\label{Zo3}
\mu \equiv \frac{1-\sqrt{\delta}}{1+\sqrt{\delta}}
\end{equation}
and $K(\mu)$ is the complete elliptic integral of the first kind: 
\[
K(\mu)=\int_{0}^{\pi/2} \frac{d\theta}{\sqrt{1-\mu^2 \sin^2{(\theta)}}}
\]
Furthermore, all roots of $P_d$ and $Q_d$ are real and non-positive.
\end{fact}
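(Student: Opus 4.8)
The plan is to prove Fact~\ref{fact:1} constructively: since the statement only asserts the \emph{existence} of degree-$d$ polynomials $P_d,Q_d$ meeting the bound in Eq.~(\ref{Zo2}), I do not need Zolotarev's full optimality theory, only a single explicit approximant whose error I can estimate directly and whose roots I can locate. First I would record the odd symmetry of the problem: $\mathrm{sgn}$ is odd and $J(\delta)$ is symmetric, so it is natural to seek an odd approximant $r(x)=xP_d(x^2)/Q_d(x^2)$, which is precisely the form appearing in Eq.~(\ref{Zo2}). Writing $R=P_d/Q_d$ and substituting $y=x^2$ (which carries $[\delta,1]$ onto $[\delta^2,1]$), one checks that for $x>0$
\[
\mathrm{sgn}(x)-r(x)=1-\sqrt{y}\,R(y)=\sqrt{y}\,\bigl(y^{-1/2}-R(y)\bigr),
\]
and that the negative interval contributes the same values by symmetry. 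Hence the sup-norm of the sign error over $J(\delta)$ equals the sup-norm over $[\delta^2,1]$ of the \emph{relative} error of $R$ as an approximation of $y^{-1/2}$, and it suffices to build one type-$(d,d)$ rational $R$ with small relative error and with all numerator/denominator roots real and non-positive.

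Next I would write down the explicit Zolotarev approximant via Jacobi elliptic functions. Set $\kappa=\sqrt{1-\delta^2}$ with complete elliptic integrals $K=K(\kappa)$ and $K'=K(\delta)$, place the nodes at equally spaced arguments $u_j=jK/(2d)$, and form the positive constants $c_j=\delta^2\,\mathrm{sn}(u_j;\kappa)^2/\mathrm{cn}(u_j;\kappa)^2$. I would then take
\[
r(x)=M\,x\prod_{i=1}^{d}\frac{x^2+c_{2i}}{x^2+c_{2i-1}},
\]
with the scalar $M$ fixed to center the error. Because every $c_j>0$, the roots of $P_d$ and $Q_d$ sit at $y=-c_j\le 0$, which is exactly the last assertion of the Fact. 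The structural reason this choice works is the uniformization $x=\mathrm{sn}(u;\kappa)$, under which a period rectangle maps onto the two intervals and the pulled-back error $\mathrm{sgn}(x)-r(x)$ becomes an elliptic function of $u$ whose zeros and poles are placed periodically at the $u_j$; this periodicity forces the error to equioscillate between $\pm E_d$ on $J(\delta)$, so its supremum is the single value $E_d$ attained at any extremal node.

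The main work, and the step I expect to be the obstacle, is to evaluate $E_d$ in closed form and simplify it to the right-hand side of Eq.~(\ref{Zo2}). Expressing $E_d$ through Jacobi theta functions---equivalently through the nome $q=\exp(-\pi K'/K)$---gives an estimate of the shape $E_d\le 2q^{\Theta(d)}$. To put the rate into the normalized form of the Fact I would apply a descending Landen transformation, which replaces $\kappa$ by the modulus $\mu=(1-\sqrt{\delta})/(1+\sqrt{\delta})$ of Eq.~(\ref{Zo3}) and converts the exponent into $d\pi K(\sqrt{1-\mu^2})/(4K(\mu))$; monotonicity of the theta-function tail in $d$ then yields
\[
E_d\le 2\exp\!\left[-\frac{d\,\pi K(\sqrt{1-\mu^2})}{4K(\mu)}\right],
\]
with the prefactor $2$ absorbing the lower-order theta correction. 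The genuinely delicate points are (i) verifying the equioscillation directly from the elliptic uniformization rather than from an optimality argument, and (ii) the bookkeeping of the Landen transformation so that the modulus emerging in the exponent is exactly the $\mu$ of Eq.~(\ref{Zo3}). As a sanity check I would confirm the small-$\delta$ asymptotics $\sqrt{1-\mu^2}\sim 2\delta^{1/4}$ and $K(\mu)\sim \log(4/\sqrt{1-\mu^2})$, which make the exponent behave like $d\,\pi^2/\Theta(\log\delta^{-1})$ and account for the mild gap dependence exploited in the downstream application.
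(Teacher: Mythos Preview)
The paper does not prove Fact~\ref{fact:1}; it is quoted as a known result from Gon\v{c}ar~\cite{gonvcar1969zolotarev} and Nakatsukasa et al.~\cite{nakatsukasa2015computing}, and Lemma~\ref{lemma:Zo} is then derived from it by the substitution $y=x^2$ together with an elementary bound on the elliptic-integral ratio. So there is nothing in the paper to compare your argument against---you are attempting to supply a proof the paper deliberately outsources.

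Your outline is the classical Zolotarev route and is structurally sound: the odd ansatz and reduction to a relative approximation of $y^{-1/2}$ on $[\delta^2,1]$ are correct, your explicit node formula $c_j=\delta^2\,\mathrm{sn}^2/\mathrm{cn}^2$ with modulus $\kappa=\sqrt{1-\delta^2}$ agrees with the paper's own Eqs.~(\ref{roots})--(\ref{PQd}) under $\omega=\delta^2$, the root-location claim is immediate from $c_j>0$, and reading the equioscillation amplitude off the nome is the right mechanism. The one concrete slip is the Landen step you yourself flag as delicate: a single descending Landen from $\kappa$ (complementary modulus $\kappa'=\delta$) produces $(1-\delta)/(1+\delta)$, not the $\mu=(1-\sqrt{\delta})/(1+\sqrt{\delta})$ of Eq.~(\ref{Zo3}). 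That $\mu$ is instead the descending Landen transform of $\sqrt{1-\delta}$ (complementary modulus $\sqrt{\delta}$), which is precisely what underlies the identity $F(\mu)=\pi^2/(2F(\sqrt{\delta}))$ the paper invokes downstream. So the passage from your $\kappa$-nome to the stated exponent requires more than the single step you describe, and this interacts with your node spacing $K/(2d)$ versus the paper's $K/(2d+1)$ in Eq.~(\ref{roots}). This is a real gap in the sketch---exactly the bookkeeping you anticipated---but it is orthogonal to anything the paper itself argues.
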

The lemma follows easily from the above fact. 
Indeed, let $r(d,\delta)$ be right-hand side of Eq.~(\ref{Zo2}).
Then
\[
\left| |x| - \frac{x^2 P_d(x^2)}{Q_d(x^2)} \right| = |x|\cdot \left| \mathrm{sgn}(x)- \frac{x P_d(x^2)}{Q_d(x^2)} \right|
\le   |x|\cdot r(d,\delta)
\]
for all $x\in J(\delta)$.  Changing variables $y=x^2$ one gets 
\begin{equation}
\label{Zo4}
\left| \sqrt{y} - y\frac{P_d(y)}{Q_d(y)} \right| \le \sqrt{y} \cdot  r(d,\sqrt{\omega})
\end{equation}
for all $\omega\le y\le 1$.
It remains to explicitly  compute the error bound $r(d,\delta)$. We shall use some bounds on the elliptic integrals from Ref. \cite{elliptic}. Define
\[
F(\mu)=\frac{\pi}{2} \frac{K(\sqrt{1-\mu^2})}{K(\mu)}.
\]
Using equations (1.3, 1.5) from \cite{elliptic} we have
\begin{equation}
F(\mu)=\frac{\pi^2}{2F(\frac{1-\mu}{1+\mu})}=\frac{\pi^2}{2F(\sqrt{\delta})}.
\label{eq:fmu}
\end{equation}
Now equation (1.6) from \cite{elliptic} states that $F(r)<\log(4/r)$ for all $r\in (0,1)$. Using this bound in the denominator of Eq.~\eqref{eq:fmu} gives
\[
F(\mu)>\frac{\pi^2}{\log(16\delta^{-1})}.
\]
We arrived at
\[
r(d,\delta)=2\exp{\left[-\frac{d}{2}F(\mu)\right]} \le 2\cdot \exp{\left[ -\frac{d\pi^2}{2\log{(16\delta^{-1})}}\right]},
\]
and therefore 
\[
r(d,\sqrt{\omega})\leq 2\cdot \exp{\left[ -\frac{d\pi^2}{\log{(256\omega^{-1})}}\right]}\leq 2\cdot \exp{\left[ -\frac{d\pi^2}{8\log{(2\omega^{-1})}}\right]}.
\]
Substituting this into Eq.~(\ref{Zo4}) and using the fact that $\pi^2/8>1$ completes the proof.
\end{proof}

\begin{figure}[htbp]
\centerline{\includegraphics[width=10cm]{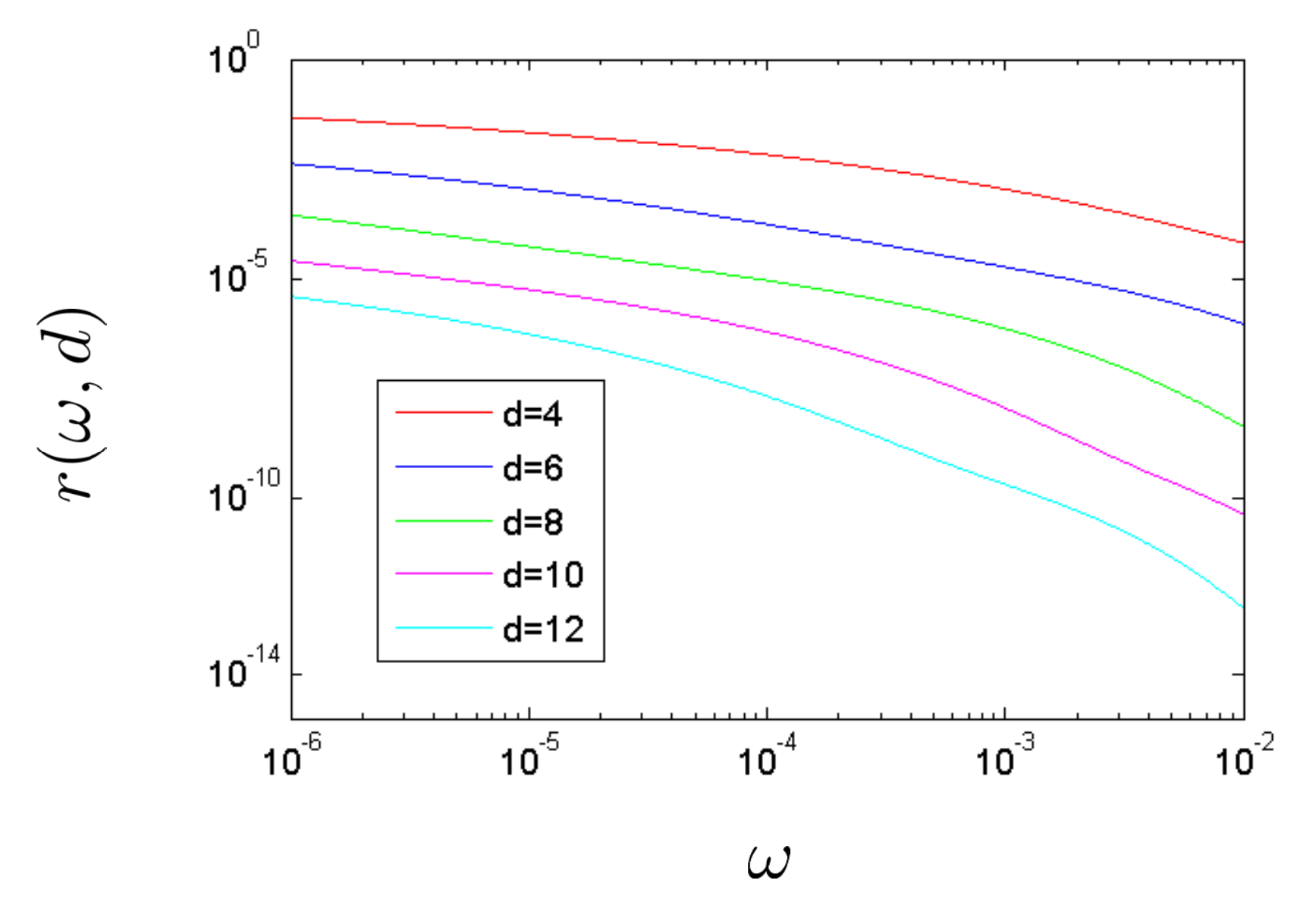}}
\caption{Approximating $\sqrt{x}$ on the interval
$\omega\le x\le 1$ by Zolotarev's rational function $xP_d(x)Q_d^{-1}(x)$
with degree-$d$ polynomials $P_d(x),Q_d(x)$.
The plot shows the worst-case relative approximation error defined in Eq.~(\ref{relative_error}).
}
\label{fig:plot1}
\end{figure}

Next let us describe an explicit construction of the polynomials $P_d(x),Q_d(x)$
from Lemma~\ref{lemma:Zo}.
This material is based on 
Refs.~\cite{zolotarev1877application,gonvcar1969zolotarev,nakatsukasa2015computing}.
Below we assume that $0<\omega\le 1$. Define
\[
\mu\equiv \sqrt{1-\omega}.
\]
We shall  need  Jacobi elliptic functions $\sn{u}{\mu}$ and $\cn{u}{\mu}$. They are defined
for $u\ge 0$  by
\[
\sn{u}{\mu}=\sin{(\phi(u))} \quad  \mbox{and} \quad  \cn{u}{\mu}=\cos{(\phi(u))},
\]
where $\phi(u)\ge 0$ is the unique solution  of
\[
u=\int_0^{\phi(u)}  \frac{d\theta}{\sqrt{1-\mu^2 \sin^2{(\theta)}}}.
\]
For each $j=1,\ldots,2d$ define
\begin{equation}
\label{roots}
\lambda_j=\omega \left[  \frac{\sn{\frac{j K(\mu)}{2d+1}}{\mu}}{\cn{\frac{j K(\mu)}{2d+1}}{\mu}} \right]^2.
\end{equation}
Define
\[
M=2\left[ \prod_{j=1}^{2d} \frac{(1+\lambda_{2j})}{(1+\lambda_{2j-1})} + 
\sqrt{\omega} \prod_{j=1}^{2d} \frac{(\omega+\lambda_{2j})}{(\omega+\lambda_{2j-1})}  \right]^{-1}
\]
Then
\begin{equation}
\label{PQd}
P_d(x)=M \prod_{j=1}^{d} (x+\lambda_{2j})
\quad \mbox{and} \quad Q_d(x)=\prod_{j=1}^{d} (x+\lambda_{2j-1}).
\end{equation}
The coefficients $\lambda_j$ can be easily  computed using any computer
algebra system such as  MATLAB. 
We plot the worst-case relative error 
\begin{equation}
\label{relative_error}
r(\omega,d)\equiv \max_{\omega\le x\le 1} x^{-1/2}\left| x^{1/2} - xP_d(x) Q_d^{-1}(x) \right|
\end{equation}
as a function of $\omega$ for a few small values of $d$ on Fig.~\ref{fig:plot1}.

Finally, let us point out that  $\sqrt{x}$ can  be approximated by a rational function
of a given degree $d$ on the {\em full} interval $[0,1]$. However,   in this 
case  the 
approximation error decays exponentially with $\sqrt{d}$.
A simple example of such approximation has been proposed by Newman~\cite{newman1964rational}
 (although Ref.~\cite{newman1964rational} concerns with approximating the $|x|$ function, a simple
change of variable shows that the same results hold for $\sqrt{x}$).
In contrast, Zolotarev's approximation holds only on the interval $\omega\le x\le 1$ but
the approximation error decays exponentially with $d$. 
Ref.~\cite{newman1964rational}   also demonstrates that 
rational approximations can be exponentially more accurate compared with 
the polynomial approximations of the same degree.

\subsection{Approximation by low-rank Gaussian states}
\label{subs:approx}

In this section we derive two corollaries of  Theorem~\ref{thm:2} that characterize 
ground states of quantum impurity models.  
Let $\psi$ be some ground state of the full Hamiltonian $H$
and $C$ be the covariance matrix of $\psi$ defined in  
 Theorem~\ref{thm:2}.
It is not hard to see that, roughly speaking, eigenvalues of $C$ are related to the number of bath excitations present in $\psi$.  For example, $\mathrm{Tr}(C)$ is the expected value of the number operator $\sum_{j}b^{\dagger}_j b_j$ in the state $\psi$.  In the extreme case when $\mathrm{Tr}(C)=0$ the ground state $\psi$ itself is
a Gaussian state (the ground state of $H_0$). More generally, the exponential decay stated in Theorem~\ref{thm:2} is sufficient to show that $\psi$ is approximated by a relatively small number $\chi\ll 2^n$ of Gaussian states.
Furthermore, all excitations present in the bath can be ``localized" on a small subset of modes
by some Gaussian unitary operator. 
Recall that $\omega$ denotes the spectral gap of $H_0$.
\begin{corol}
\label{cor:gapped}
There exists a ground state $\psi$ of $H$ such that the following holds. For any $\delta>0$ there exists an integer 
\begin{equation}
\label{chi}
k=28m \log{(2\omega^{-1})}
\left[ \log{(\delta^{-1})} + \log{(m)} + \log{\log{(2\omega^{-1})}} +O(1)\right],
\end{equation}
a Gaussian  unitary operator $U$, and 
some (non-Gaussian) state $\phi\in \calH_k$  such that 
\begin{equation}
\label{main}
\| \psi - U|\phi\otimes 0^{n-k}\rangle \| \le \delta.
\end{equation}
\end{corol}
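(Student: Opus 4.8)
The plan is to use Theorem~\ref{thm:2} to choose a ground state $\psi$ whose covariance matrix $C$ has exponentially decaying eigenvalues, then truncate. First I would fix the threshold $k$ by requiring that the tail $\sum_{j>k}\sigma_j$ be at most $\delta^2$ (up to constants); using the bound $\sigma_j\le c\exp[-j/(14m\log(2\omega^{-1}))]$ and summing a geometric series, this forces $k$ of the order stated in Eq.~\eqref{chi}, where the factor $28 = 2\cdot 14$ arises because we pay a factor of two in the exponent when we pass from the sum to an individual term and square for the norm. Let $b_1,\ldots,b_n$ be the canonical modes of $H_0$ in which $C_{jk}=\langle\psi|b_j^\dagger b_k|\psi\rangle$, and let $v_1,\ldots,v_n$ be orthonormal eigenvectors of $C$ with eigenvalues $\sigma_1\ge\cdots\ge\sigma_n$. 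Rotating the modes by this eigenbasis is a Gaussian unitary $U_1$ (it is a linear, number-conserving change of the $b_j$), so that in the new modes $\tilde b_j$ one has $\langle\psi|\tilde b_j^\dagger \tilde b_k|\psi\rangle=\sigma_j\delta_{jk}$.

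The key step is then to argue that the modes $\tilde b_j$ with $j>k$ are nearly in the vacuum, so $\psi$ can be replaced by a state supported entirely on modes $1,\ldots,k$. Concretely, I would use $\langle\psi|\tilde b_j^\dagger \tilde b_j|\psi\rangle=\sigma_j$ to bound $\|\tilde b_j|\psi\rangle\|^2=\sigma_j$, and, since the $\tilde b_j$ for distinct $j$ anticommute, conclude that the state $\psi$ has overlap at least $1-\sum_{j>k}\sigma_j$ with the subspace $\calV$ on which $\tilde b_j=0$ for all $j>k$. More carefully: let $\Pi$ be the projector onto $\calV$; then $\langle\psi|(I-\Pi)|\psi\rangle\le \sum_{j>k}\langle\psi|\tilde b_j^\dagger\tilde b_j|\psi\rangle=\sum_{j>k}\sigma_j$, because $I-\Pi$ is dominated by the sum of the number operators $\tilde b_j^\dagger\tilde b_j$ for $j>k$. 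Hence $\|\psi-\Pi\psi/\|\Pi\psi\|\,\|\le \sqrt{2\sum_{j>k}\sigma_j}$ by the standard normalized-projection estimate. Finally $\Pi\psi/\|\Pi\psi\|$ lies in $\calV$, and $\calV$ is exactly the image under the Gaussian unitary $U_1$ of the set of states of the form $|\phi\otimes 0^{n-k}\rangle$ with $\phi\in\calH_k$ (the Gaussian unitary $U_1$ maps the vacuum of modes $k+1,\ldots,n$ to $\calV$ tensored with an arbitrary state on the first $k$ modes). Setting $U=U_1$ and choosing $\phi$ accordingly gives Eq.~\eqref{main} provided $2\sum_{j>k}\sigma_j\le\delta^2$, which is guaranteed by the choice of $k$.

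The main obstacle I anticipate is the bookkeeping in the second paragraph: making precise that the overlap of $\psi$ with $\calV$ is controlled by the tail $\sum_{j>k}\sigma_j$, and that $\calV$ is genuinely the Gaussian-unitary image of states of the form $\phi\otimes 0^{n-k}$. The first point requires the operator inequality $I-\Pi\le\sum_{j>k}\tilde b_j^\dagger\tilde b_j$, which holds because $\sum_{j>k}\tilde b_j^\dagger\tilde b_j$ is the number operator for modes $k+1,\ldots,n$ and $\Pi$ projects onto its kernel. The second point is essentially the statement (from Section~\ref{sec:Cliff}) that a number-conserving linear transformation of the modes is implemented by a Gaussian unitary, together with the fact that such a unitary maps the Fock vacuum of a subset of modes to itself; one must be a little careful that $U_1$ as defined acts correctly on all $n$ modes, not just a subset. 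Everything else — the geometric-series estimate fixing $k$, the normalized-projection bound — is routine, so I would state those briefly and spend the bulk of the argument on the reduction to $\calV$.
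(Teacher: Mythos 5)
Your proposal is correct and follows essentially the same route as the paper: take $\psi$ and $C$ from Theorem~\ref{thm:2}, rotate to modes $\tilde b_j$ diagonalizing $C$, and discard the nearly empty modes $j>k$, with $k$ fixed by the geometric tail of Eq.~\eqref{decay1}. The only (cosmetic) difference is the truncation bookkeeping: the paper applies the single-mode projectors $\Pi_j=\tilde b_j\tilde b_j^\dag$ one at a time and telescopes to get $\|\psi-\Pi_n\cdots\Pi_{k+1}\psi\|\le\sum_{j>k}\sqrt{\sigma_j}\le\delta/2$, whereas you use the operator inequality $I-\Pi\le\sum_{j>k}\tilde b_j^\dag\tilde b_j$ for the joint-vacuum projector together with the normalized-projection estimate, requiring $2\sum_{j>k}\sigma_j\le\delta^2$ — both give the same $k$ up to the $O(1)$ in Eq.~\eqref{chi}.
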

\begin{proof}Let $\psi$ and $C$ be the state and corresponding matrix from Theorem~\ref{thm:2}.
 Let $V$ be a unitary operator such that 
\[
V^\dag CV  =\sum_{j=1}^n \sigma_j |j\rangle\langle j|, \quad  \quad \sigma_1\ge \sigma_2 \ge \ldots \ge \sigma_n.
\]
Define a new  set of fermi modes $\tilde{b}_1,\ldots,\tilde{b}_n$ such that 
$\tilde{b}_j=\sum_{k=1}^n V_{k,j} b_k$.
Then 
\begin{equation}
\label{diagonal1}
\langle \psi| \tilde{b}_i^\dag \tilde{b}_j |\psi\rangle = \sigma_i \delta_{i,j}.
\end{equation}
Define a projector $\Pi_j=\tilde{b}_j \tilde{b}_j^\dag$ so that
$\langle \psi |\Pi_j|\psi\rangle =1-\sigma_j$.
Then $\|(I-\Pi_j)\psi\|=\sqrt{\sigma_j}$ and thus
\[
\| \psi - \Pi_n \Pi_{n-1} \cdots \Pi_{k+1}\psi \| \le
\sum_{j=k+1}^n \sqrt{\sigma_j} \leq O(1) \cdot m\log{(2\omega^{-1})}
\exp{\left[ -\frac{k}{28m\log{(2\omega^{-1})}}\right]}
\]
where in the last inequality we used Theorem~\ref{thm:2}. Thus
\[
\| \psi -\Pi_n \Pi_{n-1} \cdots \Pi_{k+1}\psi \| \le \delta/2
\]
for $k$ given by Eq.~\eqref{chi}. Choose 
\[
|\psi'\rangle=\gamma \Pi_n \Pi_{n-1} \cdots \Pi_{k+1} |\psi\rangle,
\]
where $\gamma$ is the normalizing coefficient. The above shows that $\gamma^{-1}\ge 1-\delta/2$.
Thus $\psi'$ approximates $\psi$ within error $\delta$. Finally,
since $\Pi_j$ projects onto a subspace in which the mode $\tilde{b}_j$ is empty,
the state  $\psi'$ is the vacuum state for the subset of  $n-k$  fermi modes
$\tilde{b}_n,\ldots,\tilde{b}_{k+1}$.
Therefore $|\psi'\rangle= U |\phi\otimes 0^{n-k}\rangle$ for some Gaussian
unitary operator $U$ and some (non-Gaussian) state $\phi\in \calH_k$. 
\end{proof}

From Eq.~(\ref{main}) one infers that $\psi$ is $\delta$-close to a superposition
of $\chi=2^k$ Gaussian states where $k$ is defined by Eq.~(\ref{chi}).
The next corollary states our best asymptotic upper bound on  the number
of Gaussian states one needs to approximate an exact ground state within 
a specified precision. This corollary provides a partial justification for the
variational algorithm of Section~\ref{sec:variational} that minimizes
the energy of $H$ over low-rank superpositions of Gaussian states.
It also confirms our intuition (stated above) that eigenvalues of $C$ are related to the number of bath excitations in the ground state.
\begin{corol}
There exists a ground state $\psi$ of $H$ such that the following holds. For any $\delta\in (0,1/2]$ there exists a 
normalized state $\phi\in \calH_n$ such that $\|\phi-\psi\|\leq \delta$ and
\begin{equation}
|\phi\rangle=\sum_{a=1}^{\chi} z_a |\theta_a\rangle, \qquad \chi=e^{O(m\log(\omega^{-1})\log(\delta^{-1}))}
\label{eq:thetastates}
\end{equation}
where $\theta_1,\ldots,\theta_\chi\in \calG_n$ are orthonormal Gaussian states.
Each state $\theta_a$ is an eigenvector of the particle number operator
$N =\sum_{j=1}^n b_j^\dag b_j$ such that 
\begin{equation}
\label{particle_number_bound}
N |\theta_a\rangle = k_a|\theta_a\rangle,
\qquad k_a\le cm\log{(2\omega^{-1})} \log{(\delta^{-1})}.
\end{equation} 
Here $c>0$ is a universal constant.
Furthermore, the projector 
\begin{equation}
\label{Ptheta}
P\equiv \sum_{a=1}^\chi |\theta_a\rangle\langle \theta_a|
\end{equation}
commutes with $H_{imp}$.
\label{cor:gaussianrank}
\end{corol}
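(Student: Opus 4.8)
The plan is to combine the exponential decay of the covariance-matrix eigenvalues (Theorem~\ref{thm:2}) with a tail bound on the bath occupation number $N$, choosing the single-particle basis so that the truncating projector is manifestly compatible with $H_{imp}$. I would take the ground state $\psi$ and covariance matrix $C$ furnished by Theorem~\ref{thm:2}. Let $\calL\subseteq\CC^n$ be the subspace of Eq.~\eqref{commute1} (so $b(x)$ commutes with $H_{imp}$ for $x\in\calL$ and $m_0:=\dim\calL^\perp\le m$), and let $\Lambda$ project onto $\calL$. Choose orthonormal modes $\tilde b_1,\dots,\tilde b_{m_0}$ spanning $\calL^\perp$ and $\tilde b_{m_0+1},\dots,\tilde b_n$ diagonalizing the compression $\Lambda C\Lambda$, ordered by decreasing eigenvalue. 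With $\tilde n_j:=\tilde b_j^\dagger\tilde b_j$ and $N=\sum_j b_j^\dagger b_j=\sum_j\tilde n_j$ we get $\langle\psi|\tilde n_j|\psi\rangle=\tilde\sigma_j$, where Cauchy's interlacing theorem together with Theorem~\ref{thm:2} gives $\tilde\sigma_j\le c'\exp[-j/(14m\log(2\omega^{-1}))]$ for $j>m_0$, with $c'$ a universal constant.

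\textbf{Tail bound (the crux).} The key claim is $\mathrm{Pr}_\psi[N>t]\le\delta^2$ for some $t=O(m\log(2\omega^{-1})\log(\delta^{-1}))$. Put $L=14m\log(2\omega^{-1})$ and $N'=\sum_{j>m_0}\tilde n_j$, so $N\le m_0+N'$. Since the $\tilde n_j$ are commuting orthogonal projectors, for any integer $p\ge1$ one has $(N')^p=\sum_S\mathrm{Surj}(p,|S|)\,\Pi_S$ over nonempty $S\subseteq\{m_0+1,\dots,n\}$, with $\Pi_S=\prod_{j\in S}\tilde n_j$ and $\mathrm{Surj}(p,q)\le q^p$. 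The essential input is $\langle\psi|\Pi_S|\psi\rangle\le\min_{j\in S}\langle\psi|\tilde n_j|\psi\rangle=\tilde\sigma_{\max S}$ --- valid because $\Pi_S\le\tilde n_j$ for every $j\in S$ --- which substitutes for an unavailable independence assumption. Grouping subsets by size $q$ and largest element $\ell$ (with $\binom{\ell-m_0-1}{q-1}$ of each), and using $\sum_{\ell\ge q}\binom{\ell-1}{q-1}z^{\ell}=(z/(1-z))^q$ together with $1-e^{-1/L}\ge\frac1{2L}$, one obtains $\langle\psi|(N')^p|\psi\rangle\le c'\sum_{q=1}^{p}q^p(2L)^q\le c'\,p\,(2pL)^p$. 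Markov's inequality with $p=\Theta(\log(\delta^{-1}))$ and $t=2e^2pL+m$ then gives the tail bound, so the normalized projection of $\psi$ onto $\{N\le t\}$ is $O(\delta)$-close to $\psi$. Expanding this projection in the $\tilde b$-Fock basis writes it as a superposition of Slater determinants $|y\rangle_{\tilde b}$ with $|y|\le t$: these are Gaussian states, pairwise orthonormal, and eigenvectors of $N$ with eigenvalue $|y|\le t=O(m\log(2\omega^{-1})\log(\delta^{-1}))$, which is Eqs.~\eqref{eq:thetastates}--\eqref{particle_number_bound} once $\chi$ is controlled.

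\textbf{Controlling $\chi$ and the commutation.} To remove the $n$-dependence, I would additionally discard configurations in which some $\tilde b_\ell$ with $\ell>k_0$ is occupied; the discarded weight is at most $\sum_{\ell>k_0}\tilde\sigma_\ell\le2c'Le^{-k_0/L}$, so $k_0=O(m\log(2\omega^{-1})[\log(\delta^{-1})+\log m+\log\log(2\omega^{-1})])$ keeps the extra error $O(\delta)$ (this step is essentially Corollary~\ref{cor:gapped}). The surviving Fock states have support in $\{1,\dots,k_0\}$, so there are at most $2^m\binom{k_0}{\le t}=\exp[O(m\log(\omega^{-1})\log(\delta^{-1}))]$ of them; taking $P$ to be the projector onto their span and $\phi=P\psi/\|P\psi\|$ gives Eq.~\eqref{eq:thetastates}. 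For the commutation claim: for $j>m_0$ the mode $\tilde b_j=b(x)$ with $x\in\calL$ satisfies $\{c_p,\tilde b_j\}=\{c_p,\tilde b_j^\dagger\}=0$ for all $p\le m$, and since the coefficient vectors of $\tilde b_1,\dots,\tilde b_n,\tilde b_1^\dagger,\dots,\tilde b_n^\dagger$ form an orthogonal basis of $\CC^{2n}$, this forces $c_1,\dots,c_m$ into the span of $\tilde b_1,\dots,\tilde b_{m_0}$ and their adjoints. Hence $H_{imp}$ acts only on the Fock space of $\tilde b_1,\dots,\tilde b_{m_0}$; since $P$ imposes no constraint on those modes, $P=I\otimes P'$ across the corresponding tensor factorization and $[P,H_{imp}]=0$.

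\textbf{Main obstacle.} The one genuinely delicate point is the tail bound on $N$. A plain second-moment argument using only $\langle N\rangle_\psi=\mathrm{Tr}(C)=O(m\log(2\omega^{-1}))$ gives $t=O(m\log(2\omega^{-1})/\delta^2)$, which is polynomial rather than logarithmic in $\delta^{-1}$, and routing through the energy tail bound of Theorem~\ref{thm:Arad2} instead costs an unwanted factor $\omega^{-1}$. Obtaining the clean $\log(\delta^{-1})$ dependence requires feeding the exponential eigenvalue decay into arbitrarily high moments of $N$, and the care is in checking that the weak inequality $\langle\Pi_S\rangle_\psi\le\tilde\sigma_{\max S}$ --- rather than a product bound, which one does not have --- together with the negative-binomial summation still yields a controlled estimate of the form $c'\,p\,(2pL)^p$ instead of something that blows up combinatorially.
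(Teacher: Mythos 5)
Your moment-method tail bound is a genuinely different (and correct) route to the paper's key estimate: the paper partitions the decoupled modes into geometric blocks of size $Q=\lceil 14m\log(2\omega^{-1})\rceil$, applies Markov to each block occupation $N_k$ separately, and union-bounds; you instead bound $\langle\psi|(N')^p|\psi\rangle$ directly via $\langle\Pi_S\rangle\le\tilde\sigma_{\max S}$ and a negative-binomial resummation, and your estimate $c'p(2pL)^p$ with $p=\Theta(\log\delta^{-1})$ does deliver $\mathrm{Pr}[N'>t']\le\delta^2$ with $t'=O(m\log(2\omega^{-1})\log(\delta^{-1}))$. However, your construction of $P$ has a genuine error: you cut off the \emph{total} particle number $N\le t$, which includes the occupations of the impurity-coupled modes $\tilde b_1,\dots,\tilde b_{m_0}$, and then assert that ``$P$ imposes no constraint on those modes'' so that $P=I\otimes P'$. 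These two statements are inconsistent, and the first one is fatal to the commutation claim: $H_{imp}$, rewritten in the $\tilde b$ operators, generically contains pair-creation terms $\tilde b_i^\dagger\tilde b_j^\dagger$ with $i,j\le m_0$ (it conserves parity, not particle number), so it does not commute with $\sum_{j\le m_0}\tilde b_j^\dagger\tilde b_j$ and hence not with the spectral projector of $N$ onto $\{N\le t\}$. The fix is exactly what the paper does implicitly: impose the occupation cap only on the decoupled modes (i.e., on $N'=\sum_{j>m_0}\tilde n_j$), leave the first $m_0\le m$ modes completely unconstrained, and recover \eqref{particle_number_bound} from $N\le m_0+N'$ on the image of $P$. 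With that change your commutation argument (that $c_1,\dots,c_m$ lie in the span of $\tilde b_1,\dots,\tilde b_{m_0}$ and their adjoints) is fine.

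There is a second, quantitative shortfall in the count of $\chi$. Your $P$ allows up to $t=\Theta(L\log\delta^{-1})$ excitations placed \emph{anywhere} among the first $k_0$ modes, and the mode-truncation threshold must satisfy $k_0\gtrsim L\bigl(\log\delta^{-1}+\log L\bigr)$ with $L=14m\log(2\omega^{-1})$, since $\sum_{\ell>k_0}\tilde\sigma_\ell=\Theta(Le^{-k_0/L})$. Then $\log\binom{k_0}{\le t}=\Theta\!\left(t\log(k_0/t)\right)$ carries an extra factor $\log\bigl(1+\log L/\log\delta^{-1}\bigr)$, so for constant $\delta$ and large $m\log\omega^{-1}$ you get $\chi=e^{O(m\log(\omega^{-1})\log\log(m\log\omega^{-1}))}$ rather than the claimed $e^{O(m\log(\omega^{-1})\log(\delta^{-1}))}$. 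This is not cosmetic: the QCMA argument in Section~\ref{sec:QCMA} needs $\chi=\mathrm{poly}(n)$ for $\omega=1/\mathrm{poly}(n)$ and constant $\delta$, and your bound gives $n^{O(\log\log\log n)}$. The paper removes this factor by imposing geometrically decaying per-block caps $n_k\le R_k\sim Le^{-(k-1)/2}$ for $k\ge J=O(\log\delta^{-1})$: the far blocks then contribute only $Q\sum_{k\ge J}H(R_k/Q)=O(L)$ to the log-dimension, whereas a single uniform weight cap over $[k_0]$ cannot see that the allowed excitations are concentrated in the early modes. If you replace your single constraint $\{N'\le t'\}$ by the blockwise constraints (your moment bound already contains all the estimates needed to verify them via per-block Markov), both defects disappear and the argument matches the stated corollary.
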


\begin{proof}
We shall reuse some definitions and notation from the proof of Theorem~\ref{thm:2} 
(repeated here for convenience). For any complex vector $x\in \CC^n$ define a fermionic operator
\[
b(x)\equiv \sum_{j=1}^n x_j b_j.
\]
Each canonical bath mode $b_j$ can be written as a linear combination 
of Majorana modes $c_1,\ldots,c_{2n}$. 
Recall that the impurity is formed
by the first $m$ modes $c_1,\ldots,c_m$.  
Define a linear subspace $\calL\subseteq \CC^n$ such that $x\in \calL$
iff the expansion of $b(x)$ in terms of the Majorana operators does not include $c_1,\ldots,c_m$. 

As noted in Eqs.~(\ref{commute1},\ref{dimL}) we have $[H_{imp},b(x)]=0$ for all $ x\in \calL$ and $\dim(\mathcal{L})\geq n-m$.
Define $\Lambda$ to be the projector onto $\mathcal{L}$. 
Let $C$ and $\psi$ be as in Theorem~\ref{thm:2}. Write
\[
\lambda_1 \geq \lambda_2\geq \ldots \lambda_n\geq 0
\]
for the eigenvalues of $\Lambda C \Lambda$. By Cauchy's interlacing theorem and Theorem~\ref{thm:2} we have 
\[
\lambda_i \leq \sigma_i \leq c\exp{\left[  -\frac{j}{14m\log{(2\omega^{-1})}}\right]}.
\]
where $\sigma_i$ is the $i$-th largest  eigenvalue of $C$. Let $\{x^{j}:j\in [n]\}$ be an orthonormal basis of eigenvectors of $\Lambda C \Lambda$, with
\[
\mathrm{span}\{x^1,\ldots,x^L\}=\mathcal{L} \qquad L=\dim{\mathcal{L}}.
\]
and $\Lambda C \Lambda x^{j}=\lambda_j x^{j}$. Define a new set of  creation-annihilation operators $\hat{b}^\dagger_j,\hat{b}_j$ such that
\[
\hat{b}_j=b(x^{j})=\sum_{k=1}^{n} x^{j}_k b_k \qquad j=1,2,\ldots, n.
\]
We have
\begin{equation}
[\hat{b}_j,H_{imp}]=0 \qquad j=1,2,\ldots ,L.
\label{eq:hatcommute}
\end{equation}
Now divide up the set of positive integers into intervals of size $\sim 14m\log(2\omega^{-1})$. That is, let $Q=\lceil 14m\log(2\omega^{-1})\rceil$ and define
\begin{align}
I_1&=\{1,2,\ldots,Q\}\nonumber\\
I_2&=\{Q+1,\ldots,2Q\}\nonumber\\
I_3&=\{2Q+1,\ldots,3Q\}\nonumber\\
\vdots\nonumber
\end{align}
Let $[L]\equiv \{1,2,\ldots,L\}$.
For each positive integer $s$ define
\[
N_s=\sum_{j\in I_s\cap [L]} \hat{b}^{\dagger}_j \hat{b}_j.
\]
There are $\sim L/Q$ of these operators which are nonzero. Also note that Eq.~\eqref{eq:hatcommute} implies
\begin{equation}
[N_s,H_{imp}]=0 
\label{eq:commuteN}
\end{equation}
for all $s$. We have 
\begin{equation}
\langle \psi|N_k |\psi\rangle=\sum_{j\in I_k\cap [L]} \lambda_j\leq c_0 m\log(2\omega^{-1}) e^{-(k-1)}
\label{eq:meank}
\end{equation}
where $c_0$ is another universal constant.
Since the operators $\{N_k\}$ mutually commute, we can view $\psi$ as inducing a probability distribution $\Psi$ given by
\[
\Psi(n_1,n_2\ldots)=\langle \psi |M_{n_1,n_2,\ldots}|\psi\rangle
\]
where $M_{n_1,n_2,\ldots}$ is the projector onto Fock basis states with simultaneous eigenvalues $n_1,n_2,\ldots$ for operators $N_1,N_2,\ldots$.
Define
\[
R_k=c_0m\log(2\omega^{-1})e^{-(k-1)/2} \quad \text{and} \quad p_k=\frac{R_k}{Q}\leq c_1 e^{-(k-1)/2}
\]
where $c_1$ is another universal constant. From Markov's inequality and Eq.~\eqref{eq:meank} we obtain
\[
\mathrm{Pr}_{\Psi}[n_k \geq R_k]\leq e^{-(k-1)/2}.
\]
Applying a union bound we arrive at
\[
\mathrm{Pr}_{\Psi}[n_k \leq R_k \text{ for all } k\geq J]\geq 1- \sum_{j\geq J} e^{-(j-1)/2}
\]
Let us choose $J$ so that
\begin{equation}
\sum_{j\geq J} e^{-(j-1)/2}\leq \delta^{2}/2 \quad \text{and} \quad p_k<\frac{1}{2} \text{ for all } k\geq J,
\label{eq:sumj}
\end{equation}
i.e., $J=O(\log(\delta^{-1}))$. Then
\[
\mathrm{Pr}_{\Psi}[n_k \leq R_k \text{ for all } k\geq J]\geq 1- \delta^{2}/2.
\]
Given a binary string $\theta \in \{0,1\}^n$ let $|\theta\rangle \in \calH_n$ be 
the Fock basis vector with respect to the modes  $\hat{b}_1,\ldots,\hat{b}_n$.
In other words, $\hat{b}_j^\dagger \hat{b}_j |\theta\rangle =\theta_j |\theta\rangle$
for all $j$. 
Define a projector $P$ onto a linear subspace 
\begin{equation}
\mathrm{span}\left( |\theta\rangle \, : \, 
N_k |\theta\rangle = n_k|\theta\rangle, \qquad n_k\le R_k  \qquad \mbox{for $J\le k \le n$} \right).
\label{eq:Pspace}
\end{equation}
From the above we have $\langle \psi |P |\psi\rangle \geq 1-\delta^2/2$. Defining 
$|\phi\rangle=\|P|\psi\rangle\|^{-1}P|\psi\rangle$ we therefore have
\[
\|\psi-\phi\|^2\leq 2(1-\sqrt{1-\delta^2/2})\leq \delta^2.
\]
Thus $\psi$ is $\delta$-close to $\phi$. We now show that $P$ projects onto a space of dimension upper bounded as
\[
e^{O(m\log(\omega^{-1})\log(\delta^{-1}))}.
\]

Fock basis states $|\theta\rangle$ in the image of $P$ are indexed by bit strings of length $n$ where the last $n-L$ bits are unrestricted while the first $L$ bits are divided up into the intervals $I_s\cap [L]$. 
The hamming weight of $\theta$  in interval $I_s$ is required to be at most $R_s$ when $s\geq J$. Letting $F(m,l)$ denote the number of bit strings of length $m$ with hamming weight at most $l$ we obtain
\[
\mathrm{Tr}(P)\leq 2^{(n-L)} 2^{Q(J-1)}\prod_{k\geq J} F(Q,R_k)\leq 2^{(m+Q(J-1))}\prod_{k\geq J} F(Q,R_k),
\]
where in the second inequality we used the fact that $L\geq n-m$. Recall that $p_k=R_k/Q$.
Since $p_k<1/2$ for all $k\geq J$ we may bound
\[
F(Q,R_k)=\sum_{s=0}^{R_k}\binom{Q}{s}\leq 2^{H(p_k)Q} \quad \text{for all } k\geq J
\]
where $H(p)=-p\log_2(p)-(1-p)\log_2(1-p)$ is the binary entropy function.
Therefore 
\begin{equation}
\mathrm{Tr}(P) \leq 2^{m+Q(J-1)+Q\sum_{k\geq J}H(p_k)}
\label{eq:basis}
\end{equation}
Now using the inequality $H(x)\leq 2\sqrt{x}$ we get 
\[
\sum_{k\geq J} H(p_k)\leq 2\sqrt{c_1}\sum_{k\geq J}e^{-(k-1)/4}=O(\delta).
\]
Plugging this into Eq.~\eqref{eq:basis} and using the fact that $Q=O(m\log(\omega^{-1}))$ and $J=O(\log(\delta^{-1}))$ gives 
\[
\mathrm{Tr}(P)=e^{O(m\log(\omega^{-1})\log(\delta^{-1}))}
\]
which completes the proof of Eq.~\eqref{eq:thetastates}. Here we take $\chi=\mathrm{Tr}(P)$ and $\{|\theta_a\rangle\}$ to be the set of Fock basis states which span the image of $P$, so that $P=\sum_{a=1}^{\chi} |\theta_a\rangle\langle \theta_a|$.

Using the fact that $\{x^i: i=1,2\ldots,n\}$ is an orthonormal basis of $\mathbb{C}^n$ we get
\[
\hat{N}\equiv \sum_{j=1}^{n} \hat{b}_j^{\dagger}\hat{b}_j=\sum_{j,f,g=1}^{n} \bar{x}^{j}_f x^{j}_g b_f^{\dagger} b_g=\sum_{j=1}^{n} b_j^{\dagger} b_j\equiv N.
\]
By construction, $\hat{N} |\theta_a\rangle=k_a|\theta_a\rangle$, where
\[
\kappa_a \leq m+Q(J-1)+\sum_{k\geq J} R_k \leq cm\log(2\omega^{-1})\log(\delta^{-1})
\]
for some universal constant $c>0$. This implies $N|\theta_a\rangle=k_a|\theta_a\rangle$,
as claimed in Eq.~(\ref{particle_number_bound}). 
Furthermore, $[P,N]=[P,\hat{N}]=0$ since $P$ and $\hat{N}$ are both diagonal in the Fock basis
$\{ |\theta\rangle\}$. We also have $[P,H_{imp}]=0$ which follows from Eqs.~(\ref{eq:commuteN},\ref{eq:Pspace}).
\end{proof}

For future reference, let us summarize properties of $P$ that follow trivially
from Corollary~\ref{cor:gaussianrank}:
\begin{equation}
[P,H_{imp}]=[P,\sum_{j=1}^{n} b^{\dagger}_j b_j]=0 \quad \text{ and } \quad \|\sum_{j=1}^{n} b^{\dagger}_j b_j P\|\leq cm\log(2\omega^{-1})\log(\delta^{-1}).
\label{eq:Pcond}
\end{equation}

\subsection{Energy distribution}
\label{subs:Arad2proof}

Suppose $\psi$ is a ground state of the full Hamiltonian $H=H_0+H_{imp}$.
In general, $\psi$ is not a ground state  of the bath Hamiltonian $H_0$.  However,
Corollary~\ref{cor:gaussianrank} of 
Theorem~\ref{thm:2} implies that the average number of excitations in the bath
is small as long as $H_0$ has a non-negligible spectral gap. 
Can we make a similar statement without any assumptions on the gap of $H_0$ ?
In this section we prove a partial result along these lines that  concerns
the {\em energy} of excitations present in the bath.
Namely,  we show that any ground state of $H$ has most of its weight on a certain low-energy
subspace of $H_0$. Surprisingly, the energy cutoff that defines this low-energy
subspace is independent of the strength of $H_{imp}$. The following theorem was inspired by an analogous theorem proved 
for quantum spin systems by Arad et al~\cite{AKL14}.

\Arad*

\begin{proof}
Let us first establish some additional notation and conventions. Recall 
\[
H=H_0+H_{imp} \quad 
\]
where $H_0$ is given by Eq.~\eqref{bath1} for some antisymmetric real matrix $h$. 
We shall assume without loss of generality that $H_{imp}$ has smallest eigenvalue equal to zero. 
Indeed, we may always add a constant times the identity to $H_{imp}$ to ensure that this holds, and the statement of the theorem is identical before and after such a transformation. Without loss of generality we can also assume that
\begin{equation}
\label{hspecial}
h_{p,q}=0 \quad \mbox{for $1\le p\le m$ and $2m<q\le 2n$},
\end{equation}
that is, the impurity modes $c_1,\ldots,c_m$ are only coupled to the modes $c_{m+1},\ldots,c_{2m}$. 
Indeed, suppose $h$ is arbitrary.  Let $g$ be the submatrix of $h$ formed by the first $m$ columns and the last $2n-m$ rows.
Choose an orthogonal matrix $R'$ of size $2n-m$ such that $R'g$ is an upper triangular matrix. Let $R=I_m\oplus R'$. 
Then $RhR^T$ has zero matrix elements for $1\le p\le m$ and $2m<q\le 2n$.
Let $U\in \calC_n$ be the Gaussian unitary operator such that $Uc_p U^\dag=\sum_q R_{p,q} c_q$,
see Eq.~(\ref{UvsR}). By definition, $U$ acts trivially on the impurity modes $c_1,\ldots,c_m$
and the transformed bath Hamiltonian $UH_0U^\dag$ has a form Eq.~(\ref{bath1})
where $h$ satisfies Eq.~(\ref{hspecial}).

We shall often work  with a truncated version of the bath Hamiltonian 
defined as
\begin{equation}
\label{Hbath}
H_{bath}=\frac{\|\tilde{h}\|_1}{4}I + \frac{i}4 \sum_{p,q=1}^{2n} \tilde{h}_{p,q} c_p c_q,
\end{equation}
where $\tilde{h}$ is a matrix obtained from $h$ by setting to zero the first $m$ rows and columns.
By definition, $H_{bath}$ acts trivially on the impurity modes and has ground energy zero.

Finally, define a Hamiltonian
\[
H_{mix}=H_{0}-H_{bath}.
\]
From Eq.~(\ref{hspecial}) one infers that $H_{mix}$ acts non-trivially only 
on modes $c_1,\ldots,c_{2m}$. Its norm is bounded as
\begin{equation}
\label{mix_norm}
\| H_{mix}\|= \frac14 \| h-\tilde{h}\|_1+\frac14 \big|\|\tilde{h}\|_1-\|h\|_1\big| \le \frac12 \| h-\tilde{h}\|_1\leq  2m
\end{equation}
where we used the triangle inequality, the fact that $h-\tilde{h}$ has rank at most $2m$, and $\| h-\tilde{h}\|\le \|h\| +\| \tilde{h}\|\le 2$. 

The following lemma asserts that any 
ground state of the full Hamiltonian $H$ 
has most of its weight on a properly chosen low-energy subspace of $H_{bath}$
that depends only on $m$.
\begin{lemma}
\label{lemma:Arad1}
Let $P_\tau$ be the projector onto a subspace spanned by eigenvectors of $H_{bath}$
with energy at most $\tau$. 
Let $\psi$ be any ground state of $H$. 
Then
\begin{equation}
\label{low_energy1}
\| (I- P_{\tau})\psi\| \le \exp{\left[ -\frac{\tau}{2} \log{\left(\tau/4em\right)}  \right]}.
\end{equation}
Furthermore,  for all $\tau\ge 8em$ one has
\begin{equation}
0\le \frac{\langle \psi |P_{\tau} H P_{\tau}|\psi\rangle}{\langle \psi|P_{\tau}|\psi\rangle}
-\langle \psi|H|\psi\rangle
\leq
4m \exp{\left[-\frac{\tau}{2}\log{\left(\frac{\tau}{4em}\right)}\right]} 
\label{low_energy2}
\end{equation}
Here $e\equiv \exp{(1)}$.
\end{lemma}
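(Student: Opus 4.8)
The plan is to obtain \eqref{low_energy1} from a shell-by-shell recursion for the weights of $\psi$ in the energy eigenspaces of $H_{bath}$, in the spirit of Arad--Kuwahara--Landau, and then to deduce \eqref{low_energy2} from \eqref{low_energy1} by a short variational estimate. I will use three facts about the reduced model: $[H_{imp},H_{bath}]=0$, since $H_{imp}$ is supported on $c_1,\dots,c_m$ and $H_{bath}$ on $c_{m+1},\dots,c_{2n}$; the bound $\|H_{mix}\|\le 2m$ of \eqref{mix_norm}; and $0\le e_g\le 2m$, the lower bound because $H=H_0+H_{imp}\ge 0$, and the upper bound by testing $H$ on the product of a ground state of $H_{imp}$ on the impurity modes with a ground state of $H_{bath}$ on the remaining modes, for which $\langle H\rangle=\langle H_{mix}\rangle\le\|H_{mix}\|\le 2m$.

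The key preliminary is a bandwidth bound for $H_{mix}$ relative to $H_{bath}$: writing $P_a$ for the spectral projector of $H_{bath}$ onto $[0,a]$, one has $P_a H_{mix}(I-P_{a+2})=0$ for every $a$. This holds because $H_{mix}=H_0-H_{bath}$ is quadratic, so each of its terms is a product of two Majorana operators; expressing these in terms of the canonical modes $d_j$ of $H_{bath}$ and the Majorana operators annihilated by $H_{bath}$ (the latter including $c_1,\dots,c_m$), each term of $H_{mix}$ involves at most two of the $d_j$ (or $d_j^\dagger$) and hence changes $\sum_j d_j^\dagger d_j$ by at most two; since every single-particle energy of $H_{bath}$ lies in $[0,1]$, $H_{mix}$ shifts the eigenvalue of $H_{bath}$ by at most $2$.

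The core step is the recursion. Partition the spectrum of $H_{bath}$ into shells $I_\ell=(2\ell,2\ell+2]$ (with $I_0=[0,2]$), let $\Pi^{(\ell)}$ be the corresponding spectral projectors, and set $\psi_\ell=\Pi^{(\ell)}\psi$, $q_\ell=\|\psi_\ell\|$, so that $\sum_\ell q_\ell^2=1$ and $\eta_{2\ell}:=\|(I-P_{2\ell})\psi\|$ satisfies $\eta_{2\ell}^2=\sum_{k\ge\ell}q_k^2$. Since $H_{bath}$ and $H_{imp}$ commute with every $\Pi^{(\ell)}$, while $H_{mix}$ is block-tridiagonal in this grading by the bandwidth bound, projecting $H\psi=e_g\psi$ onto shell $\ell$ gives
\[
\big(\Pi^{(\ell)}H\Pi^{(\ell)}-e_g\big)\psi_\ell = -\,\Pi^{(\ell)}H_{mix}(\psi_{\ell-1}+\psi_{\ell+1}).
\]
On $\operatorname{ran}\Pi^{(\ell)}$ one has $\Pi^{(\ell)}H\Pi^{(\ell)}-e_g\ge 2\ell-\|H_{mix}\|-e_g\ge 2\ell-4m$, so for $\ell\ge 4m$ the left operator is invertible there with inverse norm at most $1/(2\ell-4m)\le 1/\ell$; using orthogonality of $\psi_{\ell-1}$ and $\psi_{\ell+1}$ this yields
\[
q_\ell\le\frac{2m}{\ell}\big(q_{\ell-1}^2+q_{\ell+1}^2\big)^{1/2},\qquad\text{hence}\qquad q_\ell^2\le\frac{4m^2}{\ell^2}\big(q_{\ell-1}^2+q_{\ell+1}^2\big).
\]
Summing the last bound over $\ell\ge\ell_0$ gives $\eta_{2\ell_0}^2\le(8m^2/\ell_0^2)\,\eta_{2\ell_0-2}^2$ for every $\ell_0\ge 4m$, i.e. $\eta_{2\ell_0}\le(2\sqrt2\,m/\ell_0)\,\eta_{2\ell_0-2}$. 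Iterating from $\ell_0=4m$ to $\ell_0=L$ and using $\eta\le 1$,
\[
\eta_{2L}\le\prod_{\ell=4m}^{L}\frac{2\sqrt2\,m}{\ell}=\exp\Big[-\Omega\big(L\log(L/m)\big)\Big]
\]
by Stirling's formula; with $\tau=2L$ this is \eqref{low_energy1}. (The crude inputs $\|H_{mix}\|\le 2m$ and shell width $2$ give a bound of this form with a constant slightly worse than the $4em$ quoted in \eqref{low_energy1}; optimizing the shell width and tracking the constants near the crossover $\ell\sim m$ recovers the stated constant.)

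It remains to deduce \eqref{low_energy2}. Writing $\psi=P_\tau\psi+\phi$ with $\phi=(I-P_\tau)\psi$, the lower bound $\langle\psi|P_\tau HP_\tau|\psi\rangle/\langle\psi|P_\tau|\psi\rangle\ge e_g$ is the variational principle applied to the unit vector $P_\tau\psi/\|P_\tau\psi\|$. For the upper bound, $e_g=\langle\psi|H|\psi\rangle$ gives $\langle P_\tau\psi|H|P_\tau\psi\rangle=e_g-2\operatorname{Re}\langle P_\tau\psi|H|\phi\rangle-\langle\phi|H|\phi\rangle$; here $\langle P_\tau\psi|H|\phi\rangle=\langle P_\tau\psi|H_{mix}|\phi\rangle$ (the $H_{bath}$ and $H_{imp}$ parts vanish, as both commute with $P_\tau$), which by the bandwidth bound equals $\langle P_\tau\psi|H_{mix}(P_{\tau+2}-P_\tau)\psi\rangle$ and so is at most $\|H_{mix}\|\,\eta_\tau\le 2m\,\eta_\tau$ in modulus, while $\langle\phi|H|\phi\rangle\ge(\tau-2m)\eta_\tau^2\ge 0$ since $\phi$ is supported on $H_{bath}>\tau$, $H_{imp}\ge 0$, and $\|H_{mix}\|\le 2m$. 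Combining these with $e_g\le 2m$ and dividing by $\langle\psi|P_\tau|\psi\rangle=1-\eta_\tau^2$ bounds $\langle\psi|P_\tau HP_\tau|\psi\rangle/\langle\psi|P_\tau|\psi\rangle-e_g$ by $4m\,\eta_\tau/(1-\eta_\tau^2)$, which for $\tau\ge 8em$ is controlled by the right-hand side of \eqref{low_energy2} via \eqref{low_energy1}. The main obstacle is the recursion: a first-moment use of the variational principle alone gives only $\eta_{2\ell}^2-\eta_{2\ell+2}^2\gtrsim(\ell/m)^2\eta_{2\ell}^4$, i.e. polynomial decay, and the upgrade to super-exponential decay requires exploiting the eigenvalue equation one shell at a time, which is precisely what makes the bandwidth bound and the a priori bound $e_g\le 2m$ essential.
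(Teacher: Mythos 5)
Your argument is essentially correct and takes a genuinely different route from the paper's. The paper proves \eqref{low_energy1} by an exponential-moment (Chernoff-type) argument: it sets $M(s)=\langle \psi|e^{2sH_{bath}}|\psi\rangle$, uses the eigenvalue equation together with $H_{imp}\ge 0$, $[H_{imp},H_{bath}]=0$, $e_g\le 2m$, and the bound $\|e^{sH_{bath}}H_{mix}e^{-sH_{bath}}\|\le m(1+e^{2s})$ (which follows from $\mathrm{rank}(h-\tilde h)\le 2m$) to derive the differential inequality $\dot{M}(s)\le 8me^{2s}M(s)$, integrates to $\log M(s)\le 4me^{2s}$, and concludes via $\|(I-P_\tau)\psi\|^2\le e^{-2s\tau}M(s)$ with the optimal choice $e^{2s}=\tau/4m$. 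Your shell recursion replaces the continuous parameter $s$ by a discrete resolvent estimate on each energy window; your bandwidth observation $P_aH_{mix}(I-P_{a+2})=0$ plays the role of the paper's norm bound on the conjugated $H_{mix}$, and the two mechanisms are morally the same. Your deduction of \eqref{low_energy2} also matches the paper's, which bounds $|\langle\psi|(I-P_\tau)H_{mix}P_\tau|\psi\rangle|\le 2m\|(I-P_\tau)\psi\|$ and uses $\langle\psi|P_\tau|\psi\rangle\ge 1/2$ for $\tau\ge 8em$ (your divisor $1-\eta_\tau^2$ versus the paper's factor $2$ is immaterial).

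The one substantive caveat is the constant in \eqref{low_energy1}, which you flag but do not resolve. Evaluating $\prod_{\ell=4m}^{L}\bigl(2\sqrt{2}\,m/\ell\bigr)$ by Stirling with $\tau=2L$ gives roughly $\exp\bigl[-\tfrac{\tau}{2}\log\bigl(\tau/(4\sqrt{2}\,em)\bigr)\bigr]$ rather than $\exp\bigl[-\tfrac{\tau}{2}\log(\tau/4em)\bigr]$, i.e.\ a bound weaker by a factor of order $2^{\tau/4}$. This has the same leading form and would suffice for \eqref{low_energy2} and for Theorem 3 with adjusted constants, but it is not the literal inequality stated. Your remark that ``optimizing the shell width'' recovers the stated constant is not substantiated: the prefactor $\tau/2$ in the exponent already pins the shell width to $2$ (the bandwidth of $H_{mix}$), and a finer partition makes the block structure banded rather than tridiagonal, so it is not clear the loss of $\sqrt{2}$ can be removed within this scheme. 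If you want the constants exactly as stated, the exponential-moment route is the cleaner path.
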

\begin{proof}
Define a function
\[
M(s)=\langle \psi |e^{2s H_{bath}}|\psi\rangle = \| e^{sH_{bath}} \psi\|^2.
\]
Then 
\begin{equation}
\label{Arad5}
\| (I- P_{\tau})\psi\|^2 =\| e^{-sH_{bath}} (I- P_{\tau})e^{sH_{bath}} \psi\|^2 \le e^{-2s \tau } M(s).
\end{equation}
Let $e_g=\langle\psi |H|\psi\rangle$ be the ground energy of the full Hamiltonian.
Note that $e_g\ge 0$ since by assumption $H_{imp}\ge 0$ and $H_0\ge 0$. 
Computing the derivative of $M(s)$ over $s$ one gets 
\[
\dot{M}(s) = 2\langle \psi| e^{2sH_{bath}} H_{bath}| \psi\rangle
=2\langle \psi| e^{2sH_{bath} } (e_g I - H_{imp}-H_{mix})|\psi \rangle.
\]
By construction, $H_{imp}$ commutes with $H_{bath}$,
so that $e^{2sH_{bath}}H_{imp}$ is hermitian.
Furthermore, 
since we assumed that $H_{imp}\ge 0$,  one has 
$e^{2sH_{bath}} H_{imp}\ge 0$. We arrive at
\begin{equation}
\label{Arad6}
\dot{M}(s) \le 2e_g M(s) - \langle \psi|G(s)|\psi\rangle, 
\end{equation}
where
\[
G(s)=e^{2sH_{bath}} H_{mix}+ H_{mix} e^{2sH_{bath}}.
\]
Let $H_{mix}(u)\equiv e^{uH_{bath} } H_{mix} e^{-uH_{bath} }$. Then
\begin{equation}
\label{G(s)}
G(s)=e^{sH_{bath}} \left( H_{mix}(s)+H_{mix}(-s) \right) e^{sH_{bath} }.
\end{equation}
We claim that 
\begin{equation}
\label{Hmix(s)}
\| H_{mix}(s)\| \le me^{2|s|}
\end{equation}
for all $s\in \RR$. Indeed, since $H_{mix}(-s)=H_{mix}(s)^\dag$, it suffices
to consider $s\ge 0$.
By definition of $H_{mix}$ one has 
\[
H_{mix}(s)=\frac14 (\|h\|_1-\|\tilde{h}\|_1)+\frac{i}4 \sum_{p,q=1}^{2n} (h-\tilde{h})_{p,q} \, c_p(s) c_q(s), \quad \mbox{where} \quad c_p(s)\equiv e^{sH_{bath}} c_p e^{-sH_{bath}}.
\]
The fact that $H_{bath}$ is quadratic in Majorana operators implies
$c_p(s)=\sum_{q=1}^{2n} R_{p,q}(s) c_q$, where
$R=e^{i\tilde{h}s}$ is a hermitian matrix of size $2n$ (recall that $h$ and $\tilde{h}$ are anti-hermitian).
Therefore
\[
H_{mix}(s)=\frac14 (\|h\|_1-\|\tilde{h}\|_1)+\frac{i}4 \sum_{p,q=1}^{2n} g_{p,q} c_p c_q, \quad \mbox{where} \quad g=R^T (h-\tilde{h})R.
\]
This implies 
\begin{equation}
\label{Hmix(s)}
\|H_{mix}(s)\| =\frac14 (\big|\|h\|_1-\|\tilde{h}\|_1\big|+\|g\|_1) \le\frac14 (\|h-\tilde{h}\|_1+\|g\|_1) \le \frac{m}{2}(1+e^{2\| h\| s})\|h-\tilde{h}\| \le m(1+e^{2s})
\end{equation}
where in the second inequality we recalled that $h-\tilde{h}$ has rank at most $2m$.

From  Eq.~(\ref{Hmix(s)}) we  infer  that  $H_{mix}(s)+H_{mix}(-s) \ge -2m(1+e^{2s})I$
and thus
\begin{equation}
\label{Arad7}
G(s)\ge - 2m(1+e^{2s})e^{2sH_{bath}}.
\end{equation}
Substituting this into Eq.~(\ref{Arad6}) yields 
\begin{equation}
\label{Arad8}
\dot{M}(s) \le 2e_g M(s) + 2m(1+e^{2s}) M(s).
\end{equation}
Since $e_g$ is a ground energy of the full Hamiltonian,
it is upper bounded by the energy of 
a tensor product of ground states of $H_{imp}$ and $H_{bath}$.
By assumption, $H_{imp}$ and $H_{bath}$ have zero ground energy, so
\[
e_g\le \|H_{mix}\|\le 2m,
\]
see Eq.~(\ref{mix_norm}), that is,
\begin{equation}
\label{Arad9}
\dot{M}(s) \le 6mM(s)+2me^{2s}M(s) \le 8me^{2s} M(s).
\end{equation}
Since $M(0)=1$, one gets 
\begin{equation}
\label{Arad10}
\log{M(s)} \le 4me^{2s} -4m\le 4me^{2s}.
\end{equation}
Substituting this into Eq.~(\ref{Arad5}) and choosing $e^{2s}=\tau/4m$ gives
\begin{equation}
\label{Arad11}
\| (I- P_{\tau})\psi\| \le e^{-s\tau + 2me^{2s} } \le \exp{\left[ -\frac{\tau}{2} \log{(\tau/4em)}  \right]}
\end{equation}
which proves Eq.~(\ref{low_energy1}).

We proceed to proving Eq.~(\ref{low_energy2}).
The first inequality in Eq.~(\ref{low_energy2})  simply states that the energy
of a normalized state $P_\tau \psi$ cannot be smaller than the ground energy of $H$.
Thus it suffices to prove the second inequality. 
Since $P_\tau$ commutes with $H_{imp}$ and $H_{bath}$, one has 
\[
\langle \psi |(I-P_\tau)HP_\tau|\psi\rangle = \langle \psi |(I-P_\tau)H_{mix} P_\tau|\psi\rangle.
\]
Using the above identity and the eigenvalue equation
$H\psi =e_g \psi$ one gets
\[
\left|e_g -\frac{\langle \psi |P_{\tau} H P_{\tau}|\psi\rangle}{\langle \psi|P_{\tau}|\psi\rangle}\right|
=\frac1{\langle \psi|P_{\tau}|\psi\rangle} \left|  \langle \psi |(I-P_\tau)H_{mix} P_\tau|\psi\rangle \right|.
\]
Recalling that $\|H_{mix}\|\le 2m$, see Eq.~(\ref{mix_norm})
and using Eq.~(\ref{low_energy1}) leads to
\[
\left|e_g -\frac{\langle \psi |P_{\tau} H P_{\tau}|\psi\rangle}{\langle \psi|P_{\tau}|\psi\rangle}\right|
\le \frac{2m}{\langle \psi|P_{\tau}|\psi\rangle} \| (I-P_\tau)\psi\| \le 4m 
\exp{\left[ -\frac{\tau}{2} \log{(\tau/4em)}  \right]}.
\]
Here we noted that $\langle \psi|P_{\tau}|\psi\rangle\ge 1/2$ for $\tau\ge 8em$ due to Eq.~(\ref{low_energy1}).
\end{proof}

We are now ready to complete the proof of Theorem~\ref{thm:Arad2}.
Let us first show that   for all $\tau\ge m$ one has
\begin{equation}
\label{PP'}
\| (I-Q_{2\tau})P_{\tau}\|\le \exp{\left[ -\frac{\tau}2 \log{\left(\frac{\tau}{2em}\right)} \right]}.
\end{equation}
Indeed, 
from $I-Q_{2\tau}=e^{-sH_0} (I-Q_{2\tau})e^{sH_0}$ 
and $e^{-sH_0} (I-Q_{2\tau})\le e^{-2s\tau} I$ one gets
\begin{equation}
\label{prop1bound}
\| (I-Q_{2\tau})P_{\tau}\|\le  e^{-2s\tau} \cdot K(s), \quad \mbox{where} \quad K(s)\equiv \| e^{sH_0} P_{\tau}\|.
\end{equation}
Let $\dot{K}(s)=\partial K(s)/\partial s$. Then
\begin{equation}
\label{Kdot1}
\dot{K}(s)\le \| e^{sH_0}  H_0 P_{\tau}\| =\| e^{sH_0}(H_{bath} + H_{mix})P_{\tau}\|
\le \| e^{sH_0} H_{bath} P_{\tau}\| + \| e^{sH_0} H_{mix}P_{\tau}\|.
\end{equation}
We may bound the first term in the righthand side of Eq.~(\ref{Kdot1}) as
\begin{equation}
\label{Kdot2}
\| e^{sH_0} H_{bath} P_{\tau}\| =\| e^{sH_0} P_{\tau} H_{bath} P_{\tau}\|
\le \| H_{bath} P_{\tau}\|\cdot K(s)  \le \tau K(s).
\end{equation}
Denote  $H_{mix}(s)=e^{sH_0} H_{mix} e^{-sH_0}$. Then Eq.~(\ref{Kdot1}) becomes
\begin{equation}
\label{Kdot3}
\dot{K}(s)\le \tau K(s)  + \| H_{mix}(s)\| \cdot K(s).
\end{equation}
The same arguments as in the proof of Lemma~\ref{lemma:Arad1} 
show that 
$\|H_{mix}(s)\| \le 2me^{2s}$.
Therefore 
\[
\dot{K}(s)\le (\tau + 2me^{2s})K(s).
\]
Since $K(0)=1$, this yields
$\log{K(s)} \le \tau s +me^{2s}$.
Substituting this into Eq.~(\ref{prop1bound}) and choosing
$e^{2s}=\tau/(2m)$ one gets Eq.~(\ref{PP'}).

Finally, we will show  that for all $\tau\ge 4em$
\begin{equation}
\label{Arad2}
\| (I-Q_{2\tau})\psi \| \le 2  \exp{\left[-\frac{\tau}2 \log\left(\frac{\tau}{4em}\right)\right]},
\end{equation}
which is equivalent to Eq.~(\ref{low_energy3}).
Inserting the identity decomposition $I=P_{\tau} + (I-P_{\tau})$ between
$I-Q_{2\tau}$  and $\psi$ in Eq.~(\ref{Arad2}) one gets
\begin{equation}
\label{Arad3}
\| (I-Q_{2\tau})\psi \| \le \| (I-Q_{2\tau}) P_{\tau}\psi\| +  \| (I-Q_{2\tau})(I- P_{\tau})\psi\|.
\end{equation}
Bounding the first term in Eq.~(\ref{Arad3}) using Eq.~(\ref{PP'})
and the second term using Eq.~(\ref{low_energy1}) gives 
\begin{equation}
\label{Arad4}
\| (I-Q_{2\tau})\psi \| \le  \exp{\left[ -\frac{\tau}2 \log{(\tau/2em)} \right]} + \exp{\left[ -\frac{\tau}{2} \log{\left(\tau/4em\right)}  \right]}.
\end{equation}
This completes the proof of Theorem~\ref{thm:Arad2}.
\end{proof}

\section{Algorithms and complexity}
\label{sec:energy}

In this section we consider the problem of approximating the ground energy of a quantum impurity model.  We begin in Section \ref{sec:decouplingtools} with some technical tools used in subsequent sections: a \textit{decoupling lemma} which describes how to identify and decouple canonical bath modes which commute with the impurity, and a \textit{truncation lemma} which shows that the ground energy can change by at most $\epsilon$ if we take all single-particle energies of $H_0$ below $\epsilon/m$ and round them up to $\epsilon/m$. In Section \ref{sec:quasipolynomial} we present and analyze our quasipolynomial algorithm for approximating the ground energy of a quantum impurity model (that is, we prove 
Theorem~\ref{thm:1}). In Section \ref{sec:quasipolynomial} we consider the case where the full Hamiltonian $H$ has a constant spectral gap and in this case we give a polynomial time algorithm (proving Theorem~\ref{thm:gap}). Finally, in Section \ref{sec:QCMA} we present the proof of Theorem~\ref{thm:QCMA}, that is, we show that the problem of estimating the ground energy of a quantum impurity model to inverse polynomial precision is contained in the complexity class QCMA.

\subsection{Bath decoupling and truncation}
\label{sec:decouplingtools}
In this Section we present two tools for simplifying quantum impurity problems. 

The first tool is a decoupling lemma which is useful when the single-particle spectrum $\{\epsilon_j: j\in [n]\}$ contains degeneracy. We say that a fermi mode $\tilde{b}_j$ is coupled to the impurity if $[\tilde{b}_j,H_{imp}]\neq 0$. The decoupling lemma states that we may choose a set of fermi modes which diagonalize the bath Hamiltonian $H_0$ and such that at most $m$ of the modes with a given single-particle energy $\epsilon_j$ are coupled to the impurity.  

\begin{lemma}[\textbf{Decoupling Lemma}]
Let $H=H_0+H_{imp}$ be a quantum impurity model, and write
\begin{equation}
H_0=\sum_{j=1}^{n}\epsilon_j b^{\dagger}_j b_j=\sum_{k} e_k \bigg(\sum_{j\in Q_k} b^{\dagger}_j b_j\bigg)
\label{eq:distinctE}
\end{equation}
where $\{e_k\}$ are the distinct single particle energies and $Q_k\subseteq [n]$ contains all modes with energy $e_k$. We may choose fermion operators $\{\tilde{b}_j: j=1,\ldots,n \}$ such that $H_0$ is diagonalized as
\begin{equation}
H_0=\sum_{k} e_k\bigg(\sum_{i\in A_k} \tilde{b}^\dagger_i\tilde{b}_i +\sum_{i\in B_k} \tilde{b}^\dagger_i\tilde{b}_i \bigg)
\label{eq:Hprime3}
\end{equation}
where $|B_k|\leq m$ for all $k$, and
\begin{equation}
\left[\tilde{b}_i,H_{imp}\right]=0 \quad  \text{for all } i\in \cup_k A_k.
\label{eq:commutingmodes}
\end{equation}
A particle-number conserving Gaussian unitary $V$ such that $\tilde{b}_j=V^{\dagger}b_jV$ for all $j\in [n]$ can be computed in $O(n^3)$ time.
\label{lem:fermionchoice}
\end{lemma}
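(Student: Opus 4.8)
The plan is to use the freedom, within each degenerate single-particle energy eigenspace of $H_0$, to apply a particle-number-conserving rotation to the canonical modes: such a rotation keeps $H_0$ diagonal with the same energies $\{e_k\}$, but can be chosen so that all but at most $m$ of the rotated modes in that eigenspace decouple from the impurity.

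First I would recall two facts. From Section~\ref{sec:cmodes}, each canonical mode $b_j$ is a linear combination of the Majorana operators $c_1,\ldots,c_{2n}$. From the proof of Theorem~\ref{thm:2}, for any $x\in\CC^n$ the operator $b(x)=\sum_{j=1}^n x_j b_j$ anti-commutes with each of $c_1,\ldots,c_m$ — and hence commutes with $H_{imp}$, since $H_{imp}$ contains only even-weight monomials in those modes — provided the Majorana expansion of $b(x)$ has no component along $c_1,\ldots,c_m$. Now fix a distinct energy $e_k$, let $\calV_k\subseteq\CC^n$ be the coordinate subspace spanned by $\{|j\rangle:j\in Q_k\}$, and consider the linear map $\calV_k\to\CC^m$ sending $x$ to the vector of coefficients of $c_1,\ldots,c_m$ in $b(x)$. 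Its kernel $\calA_k'$ satisfies $\dim\calA_k'\ge|Q_k|-m$; let $\calB_k'$ be the orthogonal complement of $\calA_k'$ inside $\calV_k$, so that $\dim\calB_k'\le m$. Choose orthonormal bases $\{x^i:i\in A_k\}$ of $\calA_k'$ and $\{x^i:i\in B_k\}$ of $\calB_k'$, relabelling the indices so that the sets $A_k,B_k$ partition $[n]$, and set $\tilde b_i=b(x^i)$.

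Three checks then finish the proof. (i) The $x^i$ with $i\in[n]$ form an orthonormal basis of $\CC^n$ obtained from the standard basis by a block-diagonal unitary $W$ (one block per energy $e_k$), so the $\tilde b_i$ again obey the canonical anticommutation relations. (ii) Within each sector, $\sum_{i\in A_k\cup B_k}\tilde b_i^\dagger\tilde b_i=\sum_{j\in Q_k}b_j^\dagger b_j$ since $W$ restricted to that block is unitary; hence $H_0=\sum_k e_k\sum_{i\in A_k\cup B_k}\tilde b_i^\dagger\tilde b_i$, which is Eq.~\eqref{eq:Hprime3} with $|B_k|=\dim\calB_k'\le m$. (iii) For $i\in A_k$ we have $x^i\in\calA_k'$, so by the recalled observation $[\tilde b_i,H_{imp}]=0$, which is Eq.~\eqref{eq:commutingmodes}. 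Since $b_j\mapsto\tilde b_j$ is a particle-number-conserving linear change of fermion operators, it is implemented by a particle-number-conserving Gaussian unitary $V\in\calC_n$ with $\tilde b_j=V^\dagger b_j V$. For the runtime: diagonalizing $h$ yields the $b_j$ in time $O(n^3)$ (Section~\ref{sec:cmodes}); grouping equal single-particle energies, and for each sector forming the $m\times|Q_k|$ matrix of impurity-Majorana coefficients and computing its kernel and an orthonormal complement (e.g.\ by an SVD), costs $O(n^3)$ in total; composing the per-sector rotations gives $V$ within the same bound.

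I do not expect a genuine obstacle here: the entire content is the per-sector dimension count — $m$ linear constraints annihilate all but at most $m$ modes of a given energy — which is just a localized version of the subspace argument already used to prove Theorem~\ref{thm:2}. The only items needing a little care are the routine verifications that the rotated operators still obey the canonical anticommutation relations and that the implementing transformation is a particle-number-conserving Gaussian unitary computable in $O(n^3)$ time; note in particular that the degenerate zero-energy sector of $H_0$ needs no special treatment, since we are already free to rotate arbitrarily within every sector.
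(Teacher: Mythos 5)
Your proposal is correct and follows essentially the same route as the paper: a per-sector unitary rotation of the degenerate canonical modes, with the decoupled modes identified by the $m$ linear constraints coming from the impurity Majorana coefficients, and the commutation with $H_{imp}$ deduced from evenness exactly as in the proof of Theorem~\ref{thm:2}. The only difference is cosmetic — the paper extracts the at-most-$m$ coupled modes per sector by putting the coefficient matrix into reduced row echelon form via a QR decomposition, whereas you compute the kernel of the restriction map to $\CC^m$ and an orthonormal complement; both yield the same block-diagonal particle-number-conserving Gaussian unitary in $O(n^3)$ time.
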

\begin{proof}
Suppose that the operators $\{b_j\}$ in Eq.~\eqref{eq:distinctE} do not already satisfy the additional constraints in the Lemma. We will show how a Gaussian unitary transformation gives a new set of operators which satisfy these constraints. 

Each operator $b_i$ can itself be expressed as a linear combination of the Majorana operators $\{c_1,c_2,\ldots c_{2n}\}$. For all $i\in Q_k$ we have
\[
b_i=\sum_{j=1}^{2n} T^{(k)}_{ij} c_j
\]
where $T^{(k)}$  is a $|Q_k|\times 2n$ complex matrix. A QR decomposition gives a $|Q_k|\times |Q_k|$ unitary $U^{(k)}$ such that $U^{(k)}T^{(k)}$ is in reduced row echelon form. We may compute $U^{(k)}$ using $O(|Q_k|n^2)$ arithmetic operations. Setting
\begin{equation}
\tilde{b}_j=\sum_{s\in Q_k} U^{(k)}_{js} b_s=\sum_{p=1}^{2n}(U^{(k)}T^{(k)})_{jp} c_p \qquad \quad j=1\ldots,|Q_k| \quad k=0,1,2,\ldots
\label{eq:reduc}
\end{equation}
we see that the new creation and annihilation operators satisfy the fermion anticommutation relations and that 
\[
\sum_{j\in Q_k}\tilde{b}^{\dagger}_j\tilde{b}_{j}=\sum_{i\in Q_k}b^\dagger_i b_i
\]
and therefore $H_0$ satisfies Eq.~\eqref{eq:distinctE} with $b_i$ replaced by $\tilde{b}_i$. From Eq.~\eqref{eq:reduc} we see that the Gaussian unitary $V$ which maps $b_j\rightarrow \tilde{b}_j$ is associated with a linear transformation of fermion operators given by the block diagonal unitary $\bigoplus_k U^{(k)}$. The total runtime for computing all blocks of this unitary is $O(\sum_k |Q_k|n^2)=O(n^3)$.

Now focus on a fixed $k$, look at Eq.~\eqref{eq:reduc} and recall that  $U^{(k)}T^{(k)}$ is in reduced row echelon form. We see that there are at most $m$ modes in the set $\{\tilde{b}_j: \; j\in Q_k\}$ for which the right-hand side of Eq.~\eqref{eq:reduc} has a nonzero coefficient for the impurity Majorana modes $\{c_1,\ldots, c_{m}\}$. We partition $Q_k=A_k\cup B_k$ where $|B_k|\leq m$ indexes these modes. Eq.~\eqref{eq:commutingmodes} then follows, using the fact that $H_{imp}$ acts non-trivially only on the modes $c_1,\ldots,c_{m}$ and includes only even weight Majorana monomials.
\end{proof}

The second tool is a truncation lemma which bounds the change in the ground energy of $H$ when we truncate the single particle energies of the bath $H_0$ below a given threshold. Let a target precision $\gamma>0$ be given and let $\Omega$ index the single particle energies of $H_0$ which are at most $\gamma/m$, i.e., 
\[
\Omega=\left\{j\in [n]: \epsilon_j\leq \gamma/m \right\}   \quad \quad \Omega^c=[n]\setminus\Omega.
\]
Define a truncated impurity model
\begin{equation}
H(\gamma)=H_0(\gamma)+H_{imp} \qquad H_0(\gamma)=\frac{\gamma}{m}\sum_{j\in \Omega}b_j^{\dagger} b_j + \sum_{j\in \Omega^c}\epsilon_j b_j^{\dagger} b_j.
\label{eq:Hkappa}
\end{equation}
Here we have set all energies $\epsilon_j\leq \gamma/m$ to be equal to $\gamma/m$. Write $e_g(\gamma)$ for the ground energy of $H(\gamma)$.

\begin{lemma}[\textbf{Truncation lemma}]
\[
|e_g-e_g(\gamma)| \leq \gamma.
\]
\label{lem:deform}
\end{lemma}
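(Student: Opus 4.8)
The plan is to sandwich $e_g(\gamma)$ between $e_g$ and $e_g+\gamma$. One side is trivial from positivity; the other side uses the Decoupling Lemma applied to the \emph{truncated} model $H(\gamma)$, in which all of the low-energy bath modes have been collapsed onto a single degenerate level, so that all but $m$ of them can be decoupled from the impurity.

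First I would write $H(\gamma)=H+D$, where $D=\sum_{j\in\Omega}(\gamma/m-\epsilon_j)\,b_j^{\dagger}b_j$ (this is just $H_0(\gamma)-H_0$). For $j\in\Omega$ we have $0\le \gamma/m-\epsilon_j\le \gamma/m$, so $D\succeq 0$ — whence $e_g\le e_g(\gamma)$ — and moreover $0\preceq D\preceq \tfrac{\gamma}{m}N_\Omega$, where $N_\Omega:=\sum_{j\in\Omega}b_j^{\dagger}b_j$. It remains to show $e_g(\gamma)\le e_g+\gamma$.

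To this end I would apply the Decoupling Lemma (Lemma~\ref{lem:fermionchoice}) to the model $H(\gamma)=H_0(\gamma)+H_{imp}$. Since in $H(\gamma)$ every bath mode indexed by $\Omega$ carries the single energy $\gamma/m$, the lemma supplies a particle-number-conserving Gaussian unitary $V$ and new canonical modes $\tilde b_j=V^{\dagger}b_jV$ with $\Omega=A\sqcup B$, $|B|\le m$, such that $[\tilde b_i,H_{imp}]=0$ for all $i\in A$ and $H_0(\gamma)=\tfrac{\gamma}{m}\sum_{j\in\Omega}\tilde b_j^{\dagger}\tilde b_j+\sum_{j\in\Omega^c}\epsilon_j'\,\tilde b_j^{\dagger}\tilde b_j$ for suitable $\epsilon_j'>\gamma/m$. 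Because $V$ acts unitarily within the $\Omega$-block, particle number there is preserved, so $N_\Omega=N_A+N_B$ with $N_A:=\sum_{i\in A}\tilde b_i^{\dagger}\tilde b_i$ and $N_B:=\sum_{i\in B}\tilde b_i^{\dagger}\tilde b_i$. Setting $\tilde H:=H(\gamma)-\tfrac{\gamma}{m}N_A=\tfrac{\gamma}{m}N_B+\sum_{j\in\Omega^c}\epsilon_j'\,\tilde b_j^{\dagger}\tilde b_j+H_{imp}$, the fact that the $A$-modes commute with $H_{imp}$ (hence, taking adjoints, with the whole $A$-mode algebra) and are eigenmodes of $H_0(\gamma)$ means that $\tilde H$ is built only from the non-$A$ modes; therefore the ground energy of $H(\gamma)=\tfrac{\gamma}{m}N_A+\tilde H$ is attained in the sector with all $A$-modes empty, i.e. $e_g(\gamma)=\lambda_{\min}(\tilde H)$.

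Finally I would rewrite $H=H(\gamma)-D=\tilde H+\big(\tfrac{\gamma}{m}N_A-D\big)$ and use $D\preceq\tfrac{\gamma}{m}(N_A+N_B)$ together with $N_B\preceq|B|I\preceq mI$ to get $\tfrac{\gamma}{m}N_A-D\succeq -\tfrac{\gamma}{m}N_B\succeq -\gamma I$, hence $H\succeq\tilde H-\gamma I$ and $e_g\ge\lambda_{\min}(\tilde H)-\gamma=e_g(\gamma)-\gamma$. Combined with $e_g\le e_g(\gamma)$ this gives $|e_g-e_g(\gamma)|\le\gamma$. The main obstacle — the one step I would write out in full — is the decoupling claim $e_g(\gamma)=\lambda_{\min}(\tilde H)$: one must check carefully that a set of canonical modes that both diagonalize $H_0(\gamma)$ and commute with $H_{imp}$ genuinely splits off as a tensor factor on which the truncated Hamiltonian acts trivially, which requires handling the sign conventions of fermionic tensor factorization (concretely, that the impurity Majoranas $c_1,\dots,c_m$ lie in the span of the non-$A$ modes). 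Everything else is short operator-inequality bookkeeping.
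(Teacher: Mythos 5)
Your proof is correct and rests on the same two ingredients as the paper's: the operator inequality $H \preceq H(\gamma)$ for one direction, and the Decoupling Lemma guaranteeing that only $m$ of the degenerate low-energy modes couple to the impurity for the other. The paper packages the second direction slightly differently --- it introduces an auxiliary Hamiltonian $\hat H(\gamma)=\sum_{j\in\Omega^c}\epsilon_j b_j^{\dagger}b_j+H_{imp}\preceq H$, uses decoupling to pick a ground state $\theta$ of $\hat H(\gamma)$ with $\langle\theta|\sum_{j\in\Omega}b_j^{\dagger}b_j|\theta\rangle\le m$, and evaluates $H(\gamma)$ on $\theta$ --- but this is essentially your operator inequality $H\succeq\tilde H-\gamma I$ in variational form, including the same ``choose a ground state with the decoupled modes empty'' step that you correctly flag as the point requiring care.
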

\begin{proof}
First define another Hamiltonian 
\[
\hat{H}(\gamma)=\sum_{j\in \Omega^c}\epsilon_j b_j^{\dagger} b_j +H_{imp}
\]
and let its ground energy be $\hat{e}_g(\gamma)$. We have the operator inequality $\hat{H}(\gamma)\leq H\leq H(\gamma)$ and thus $\hat{e}_g(\gamma)\leq e_g\leq e_g(\gamma)$. To prove the proposition we now show that  $e_g(\gamma)-\hat{e}_g(\gamma)\leq \gamma$. Let $|\theta\rangle$ be a ground state of $\hat{H}(\gamma)$ such that
\[
\langle\theta|\sum_{j\in \Omega} b_j^{\dagger}b_j|\theta\rangle \leq m.
\]
It is always possible to choose such a ground state since all but $m$ of the modes in $\Omega$ can be decoupled from the impurity. That is, lemma \ref{lem:fermionchoice} implies that we may define new fermi modes $\tilde{b}_j$ such that
\[
\sum_{j\in \Omega} b_j^{\dagger}b_j=\sum_{j\in A} \tilde{b}_j^{\dagger}\tilde{b}_j+\sum_{j\in B} \tilde{b}_j^{\dagger}\tilde{b}_j
\]
and $[\tilde{b}_j,H_{imp}]=[\tilde{b}_j,\hat{H}(\gamma)]= 0$ for all  $j\in A$ and $|B|\leq m$. This implies we may choose a ground state $|\theta\rangle$ of $\hat{H}(\gamma)$ such that $\tilde{b}_j^{\dagger}\tilde{b}_j|\theta\rangle= 0$ for all $j\in A$ and thus
\[
\langle \theta |\sum_{j\in \Omega} b_j^{\dagger}b_j|\theta\rangle =\langle \theta |\sum_{j\in B} \tilde{b}_j^{\dagger}\tilde{b}_j|\theta\rangle\leq m.
\]

Now
\[
\hat{e}_g(\gamma)=\langle\theta|H(\gamma)|\theta\rangle+\langle\theta|\hat{H}(\gamma)-H(\gamma)|\theta\rangle\geq e_g(\gamma)-|\langle\theta|\hat{H}(\gamma)-H(\gamma)|\theta\rangle|.
\]
Therefore
\[
e_g(\gamma)-\hat{e}_g(\gamma)\leq |\langle\theta|\hat{H}(\gamma)-H(\gamma)|\theta\rangle|=\langle\theta|\sum_{j\in \Omega} \frac{\gamma}{m} b_j^{\dagger}b_j|\theta\rangle\leq \gamma.
\]
\end{proof}

\subsection{Quasipolynomial algorithm for general impurity models}
\label{sec:quasipolynomial}
In this Section we describe the quasipolynomial algorithm for approximating the ground energy and prove
Theorem~\ref{thm:1}, restated here for convenience.

\quasi*

We begin by introducing some additional notation used in this section.  Define an operator
\[
\mathcal{N}=\sum_{j=1}^{n} b_j^{\dagger}b_j
\]
which counts the number of excitations of the bath. For any $s\geq 0$ define $\mathcal{W}(s)$ to be the subspace spanned by all eigenvectors of $\mathcal{N}$ with eigenvalue at most $s$.  Corollary \ref{cor:gaussianrank} states that (at least one) ground state $\psi$ of an impurity model $H$ is approximated to precision $\delta$ by a state in $\mathcal{W}(s)$ whenever
\[
s\geq cm \log(2\omega^{-1}) \log(\delta^{-1})
\]
for some universal constant $c>0$. In this section we are interested in approximating the ground energy rather than the ground state itself. We use corollary \ref{cor:gaussianrank} and lemma \ref{lem:deform} to prove the following lemma.
\begin{lemma}
Let $\gamma\in (0,1/2]$ be a precision parameter. We have
\begin{equation}
e_g \leq \min_{\alpha \in \mathcal{W}(s)} \langle \alpha |H|\alpha\rangle \leq e_g +\gamma
\label{eq:eginterval}
\end{equation}
whenever 
\begin{equation}
s\geq cm \log^2 (m\gamma^{-1})
\label{eq:sstar}
\end{equation}
where $c>0$ is a universal constant.
\label{lem:es}
\end{lemma}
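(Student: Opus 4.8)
The lower bound $e_g \le \min_{\alpha\in\mathcal{W}(s)}\langle\alpha|H|\alpha\rangle$ is trivial since $\mathcal{W}(s)$ is a subspace of $\calH_n$ and the minimum of the Rayleigh quotient over all of $\calH_n$ is exactly $e_g$. So the content is the upper bound. The plan is to pass through the truncated Hamiltonian $H(\gamma)$ from Lemma~\ref{lem:deform}, whose bath $H_0(\gamma)$ has a spectral gap $\omega = \gamma/m$ by construction. Applying Corollary~\ref{cor:gaussianrank} to $H(\gamma)$ (with precision parameter $\delta$ to be chosen) yields a ground state $\psi$ of $H(\gamma)$ and an approximating state $\phi = \sum_a z_a |\theta_a\rangle$ with $\|\phi-\psi\|\le\delta$, where each $\theta_a$ is a Fock state (w.r.t.\ the canonical modes of $H_0(\gamma)$) with at most $k_a \le c'm\log(2\omega^{-1})\log(\delta^{-1}) = c'm\log(2m\gamma^{-1})\log(\delta^{-1})$ excitations. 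Hence $\phi\in\mathcal{W}(s)$ provided $s$ exceeds this bound.

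The key point requiring care is that Corollary~\ref{cor:gaussianrank} is stated for the canonical modes of the \emph{truncated} bath $H_0(\gamma)$, whereas $\mathcal{W}(s)$ and $\mathcal{N}$ in Lemma~\ref{lem:es} are defined via the canonical modes of the \emph{original} bath $H_0$. First I would observe that $H_0$ and $H_0(\gamma)$ differ only by rescaling some single-particle energies, so they are diagonalized by the \emph{same} canonical modes $b_j$; thus $\mathcal{N}=\sum_j b_j^\dagger b_j$ is literally the same operator in both pictures and there is no mismatch. (If one prefers, Lemma~\ref{lem:fermionchoice} lets one choose the modes consistently.) So the Fock states $\theta_a$ produced by the corollary are eigenvectors of $\mathcal{N}$, and $\phi\in\mathcal{W}(s)$ as claimed.

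Next I would bound the energy. We have $\langle\phi|H(\gamma)|\phi\rangle \le e_g(\gamma) + 2\delta\|H(\gamma)\|$ in general, but $\|H(\gamma)\|$ is not controlled because $\|H_{imp}\|$ is arbitrary. To avoid this, use the extra structure: the projector $P=\sum_a|\theta_a\rangle\langle\theta_a|$ commutes with $H_{imp}$ and with $\mathcal{N}$ (Eq.~\eqref{eq:Pcond}), and $\phi = \|P\psi\|^{-1}P\psi$. Since $[P,H_{imp}]=0$ and $[P,H_0(\gamma)]=0$ (the latter because $P$ is diagonal in the Fock basis of the canonical modes), $P$ commutes with $H(\gamma)$, so $\phi$ lies in an invariant subspace and $\langle\phi|H(\gamma)|\phi\rangle = \|P\psi\|^{-2}\langle\psi|PH(\gamma)P|\psi\rangle = \|P\psi\|^{-2}\langle\psi|H(\gamma)P|\psi\rangle = e_g(\gamma)$. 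Thus $\phi$ is actually a ground state of $H(\gamma)$. Then
\[
\min_{\alpha\in\mathcal{W}(s)}\langle\alpha|H|\alpha\rangle \le \langle\phi|H|\phi\rangle = \langle\phi|H(\gamma)|\phi\rangle + \langle\phi|(H-H(\gamma))|\phi\rangle = e_g(\gamma) + \langle\phi|(H_0-H_0(\gamma))|\phi\rangle.
\]
The operator $H_0 - H_0(\gamma) = \sum_{j\in\Omega}(\epsilon_j - \gamma/m)b_j^\dagger b_j$ is negative semidefinite, so $\langle\phi|(H_0-H_0(\gamma))|\phi\rangle \le 0$, giving $\min_{\alpha\in\mathcal{W}(s)}\langle\alpha|H|\alpha\rangle \le e_g(\gamma) \le e_g + \gamma$ by Lemma~\ref{lem:deform}. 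This needs no restriction on $\delta$ at all, so I can take $\delta$ to be any fixed constant, say $\delta=1/2$; then the excitation bound becomes $k_a \le c''m\log(2m\gamma^{-1}) \le cm\log^2(m\gamma^{-1})$ for a suitable universal constant $c$ (absorbing constants and using $\gamma\le 1/2$, $m\ge 1$), which is exactly Eq.~\eqref{eq:sstar}.

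The main obstacle is precisely the observation that one must not bound $\langle\phi|H(\gamma)|\phi\rangle$ by a naive perturbation estimate involving $\|H(\gamma)\|$; one must instead exploit that $\phi$ inherits the commutation property of $P$ with $H(\gamma)$ and is therefore an exact ground state of $H(\gamma)$, together with the sign of $H_0-H_0(\gamma)$. Everything else is bookkeeping: checking that the two bath Hamiltonians share canonical modes, and collapsing the logarithms and constants into the stated form.
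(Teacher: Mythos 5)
There is a genuine gap at the heart of your argument: the claim that $[P,H_0(\gamma)]=0$ ``because $P$ is diagonal in the Fock basis of the canonical modes'' is false. The projector $P$ from Corollary~\ref{cor:gaussianrank} is diagonal in the Fock basis of the modes $\hat{b}_j=\sum_k x^j_k b_k$, where the $x^j$ are eigenvectors of $\Lambda C\Lambda$ ($C$ being the ground-state covariance matrix) --- \emph{not} in the Fock basis of the canonical modes $b_j$ that diagonalize the bath. The only operator these two mode bases share is the total number operator, which is why Eq.~\eqref{eq:Pcond} records $[P,\sum_j b_j^\dagger b_j]=0$ and $[P,H_{imp}]=0$ but conspicuously not $[P,H_0]=0$. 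Since the single-particle energies $\epsilon_j'$ are not all equal, $H_0(\gamma)=\sum_j\epsilon_j' b_j^\dagger b_j$ does not commute with the rotated occupation operators $\hat{b}_j^\dagger\hat{b}_j$, hence not with $P$. Consequently your conclusion that $\phi=\|P\psi\|^{-1}P\psi$ is an exact ground state of $H(\gamma)$ does not follow, and the claim that $\delta$ can be taken to be a fixed constant collapses. (A warning sign: with $\delta=1/2$ your excitation bound would be $O(m\log(m\gamma^{-1}))$, a single logarithm, strictly stronger than the lemma being proved.)

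The paper's proof is built precisely around the failure of this commutation. One writes
\begin{equation*}
e_g-\frac{\langle\psi|PHP|\psi\rangle}{\langle\psi|P|\psi\rangle}
=-\frac{\langle\psi|(I-P)H_0P|\psi\rangle}{\langle\psi|P|\psi\rangle},
\end{equation*}
where only $H_0$ survives in the cross term thanks to $[P,H_{imp}]=0$, and then bounds this by $2\delta\|H_0P\|\le 2\delta\,\|\mathcal{N}P\|\le 2\delta\cdot c_1 m\log(m\gamma^{-1})\log(\delta^{-1})$. Forcing this error to be at most $\gamma$ requires $\delta\sim\gamma/(m\log^2(m\gamma^{-1}))$, and it is exactly this polynomially small $\delta$ that injects the second factor of $\log(m\gamma^{-1})$ into the bound on $s$ via $\|\mathcal{N}P\|\le c_1m\log(m\gamma^{-1})\log(\delta^{-1})$. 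Your remaining steps --- the trivial lower bound, the reduction to $\omega=\gamma/m$, the observation that $H_0$ and $H_0(\gamma)$ share canonical modes, and the use of $H-H(\gamma)\le 0$ --- are all fine and consistent with the paper, but the energy estimate needs to be repaired along the lines above.
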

\begin{proof}
The lower bound in Eq.~\eqref{eq:eginterval} is trivial; below we prove the upper bound. Recall that we write $\omega$ for the spectral gap of the bath Hamiltonian $H_0$, that is, all nonzero single-particle excitation energies $\epsilon_j$ are in the interval $[\omega,1]$. Let $\gamma>0$ be the desired precision. 

As a first step we give a reduction to the special case where $\omega=\gamma/m$.  The reduction is based on the truncation lemma (lemma \ref{lem:deform}). Let $H(\gamma)$ be given as in Eq.~\eqref{eq:Hkappa}.  By definition the bath Hamiltonian $H_0(\gamma)$ has spectral gap $\omega=\gamma/m$. We have the operator inequality $H\leq H(\gamma)$ and therefore
\begin{equation}
 \min_{\alpha \in \mathcal{W}(s)} \langle \alpha |H|\alpha\rangle\leq \min_{\alpha\in \mathcal{W}(s)} \langle \alpha |H(\gamma)|\alpha\rangle.
\label{eq:hhtilde}
\end{equation}
Suppose that lemma \ref{lem:es} holds for precision $\gamma$ and Hamiltonian $H(\gamma)$. Then, using lemma \ref{lem:deform} and Eq.~\eqref{eq:hhtilde} we get 
\begin{equation}
 \min_{\alpha \in \mathcal{W}(s)} \langle \alpha |H|\alpha\rangle \leq e_g(\gamma)+\gamma \leq e_g+2\gamma.
\label{eq:reduc1}
\end{equation}
whenever
\begin{equation}
s \geq c'm\log^2(m/(2\gamma))\geq c m \log^2(m\gamma^{-1})
\label{eq:rescale}
\end{equation}
where $c'>0$ is another universal constant. Eqs.~(\ref{eq:reduc1},\ref{eq:rescale}) complete the reduction; they are Eqs.~(\ref{eq:eginterval},\ref{eq:sstar}) for the original Hamiltonian $H$, precision $2\gamma$, and constant $c'$. Thus we have shown Lemma \ref{lem:es} follows from its special case where $\omega=\gamma/m$.

To complete the proof, we now establish the lemma assuming $\omega=\gamma/m$. Fix some  $\delta\in (0,1/2]$ and let $\psi\in \calH_n$
be a normalized ground state of $H$ from Corollary~\ref{cor:gaussianrank}.
Let $P$ be the projector defined in Eq.~(\ref{Ptheta}).
Corollary \ref{cor:gaussianrank} states that $[H_{imp},P]=0$, and that $\|\psi-\phi\|\leq \delta$ with $P|\phi\rangle=|\phi\rangle$, which implies $\langle \psi|I-P|\psi\rangle \leq \delta^2$. Using these facts we get
\begin{align}
\bigg|e_g-\frac{\langle \psi|PHP|\psi\rangle}{\langle \psi|P|\psi\rangle}\bigg|&=\frac{1}{\langle \psi|P|\psi\rangle}\big|\langle \psi|(I-P)HP|\psi\rangle\big|\nonumber\\
&=\frac{1}{\langle \psi|P|\psi\rangle}\big|\langle \psi|(I-P)H_0P|\psi\rangle\big|\nonumber\\
&\leq \frac{\sqrt{\langle \psi|(I-P)|\psi\rangle}}{\langle \psi|P|\psi\rangle}\|H_0P\|\nonumber\\
&\leq 2\delta\|H_0 P\|
\label{eq:deltanorm}
\end{align}
where in the last line we used $\delta\leq \frac{1}{\sqrt{2}}$.  Now $H_0\leq \mathcal{N}$ and, since both operators are diagonal over the same basis we also have $H_0^2\leq \mathcal{N}^2$. Using this fact and Eq.~(\ref{eq:Pcond}) we get
\begin{equation}
\|H_0P\|\leq \|\mathcal{N} P\|\leq c m \log(2\omega^{-1})\log(\delta^{-1})= cm \log(2m\gamma^{-1})\log(\delta^{-1})
\label{eq:mplusk}
\end{equation}
for some constant $c>0$. In the last equality we substituted $\omega=\gamma/m$. Since $\gamma\leq 1/2$ and $m\geq 1$ we have $\log(2m\gamma^{-1})\leq O(1)\cdot \log(m\gamma^{-1})$ and therefore 
\begin{equation}
\|H_0P\|\leq \|\mathcal{N} P\|\leq c_1m \log(m\gamma^{-1})\log(\delta^{-1})
\label{eq:mplusk2}
\end{equation}
where $c_1$ is another universal constant. Combining Eqs.~(\ref{eq:deltanorm},\ref{eq:mplusk2})  we get
\begin{equation}
\bigg|e_g-\frac{\langle \psi|PHP|\psi\rangle}{\langle \psi|P|\psi\rangle}\bigg|\leq   2\delta c_1m \log(m\gamma^{-1})\log(\delta^{-1})
\label{eq:estare0}
\end{equation}
Now we choose $\delta$ such that the right hand side is at most $\gamma$. It suffices to take
\begin{equation}
\delta= \frac{\gamma}{Cm\log^2(m\gamma^{-1})}
\label{eq:dchoice}
\end{equation}
where $C$ is any universal constant satisfying $2c_1 C^{-1}(\log(C)+3)\leq 1$. Indeed, with this choice we have
\begin{equation}
2 \delta c_1m \log(m\gamma^{-1})\log(\delta^{-1})= \gamma \frac{2c_1}{C} \left[ \frac{\log(C)}{\log(m\gamma^{-1})}+1+2\frac{\log(\log(m\gamma^{-1}))}{\log(m\gamma^{-1})}\right]\leq \gamma
\label{eq:dgamma}
\end{equation}
where we used the fact that the quantity in square parentheses is at most $\log(C)+3$.

Thus
\begin{equation}
\frac{\langle \psi|PHP|\psi\rangle}{\langle \psi|P|\psi\rangle}\leq e_g+\gamma
\label{eq:e0P}
\end{equation}
From Eqs.~(\ref{eq:Pcond},\ref{eq:mplusk2})  we have $[P,\mathcal{N}]=0$ and 
\[
\|\mathcal{N}P\|\leq c_1m\log(m\gamma^{-1}) \log(\delta^{-1}) \leq \gamma/2\delta= (C/2)m \log^2(m\gamma^{-1}).
\]
where in the second inequality we used Eq.~\eqref{eq:dgamma} and in the last inequality we used Eq.~\eqref{eq:dchoice}. This implies that the image of $P$ is contained in the subspace $\mathcal{W}(s)$ whenever
\begin{equation}
s\geq (C/2)m \log^2(m\gamma^{-1}).
\label{eq:scondition}
\end{equation}
Thus, for all $s$ satisfying Eq.~\eqref{eq:scondition} we have 
\[
 \min_{\alpha \in \mathcal{W}(s)} \langle \alpha |H|\alpha\rangle\leq \frac{\langle \psi|PHP|\psi\rangle}{\langle \psi|P|\psi\rangle}\leq e_g+\gamma
\]
where in the second inequality we used Eq.~\eqref{eq:e0P}. 
\end{proof}

We now define a deformed impurity model. Let $\gamma\in(0,1/2]$ be a precision parameter, let $s^{\star}=\lceil cm \log^2 (m\gamma^{-1})\rceil$ be the smallest integer greater than or equal to the right hand side of Eq.~\eqref{eq:sstar} and consider a set of grid points 
\begin{equation}
G=\left\{x\gamma/s^{\star}: x\in \{1,2,\ldots\}\right\}.
\label{eq:J}
\end{equation}
For each $j$ let $\epsilon^\prime_j$ be the smallest element of $G$ which is at least $\epsilon_j$, so that 
\begin{equation}
\epsilon_j\leq \epsilon_j^{\prime}\leq \epsilon_j+\gamma/s^{\star} \qquad \quad j=1,2,\ldots,n.
\label{eq:eprime}
\end{equation}
Define deformed  Hamiltonians
\[
H^\prime _0=\sum_{j=1}^{n}\epsilon'_j b^{\dagger}_j b_j, \qquad \mbox{and} 
\qquad
H'=H_0'+H_{imp}.
\]
Finally, define
\begin{equation}
e_g^{\star}=\min_{\phi\in W(s^{\star})} \langle \phi |H'|\phi\rangle.
\label{eq:egstar}
\end{equation}
The following lemma shows that $e_g^{\star}$ is a good approximation to $e_g$. Our algorithm for approximating $e_g$ is based on computing $e_g^{\star}$.
\begin{lemma}
\[
|e_g-e_g^{\star}|\leq 2\gamma.
\]
\end{lemma}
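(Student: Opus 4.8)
The plan is to sandwich $e_g^{\star}$ between $\min_{\phi\in\mathcal{W}(s^{\star})}\langle\phi|H|\phi\rangle$ and that same quantity plus $\gamma$, and then invoke Lemma~\ref{lem:es}. The starting observation is that $H'$ and $H$ differ only through the bath part: $H'-H=H_0'-H_0=\sum_{j=1}^{n}(\epsilon_j'-\epsilon_j)b_j^{\dagger}b_j$. By Eq.~\eqref{eq:eprime} every coefficient obeys $0\le\epsilon_j'-\epsilon_j\le\gamma/s^{\star}$, so $0\le H'-H\le(\gamma/s^{\star})\mathcal{N}$ as operators, where $\mathcal{N}=\sum_{j}b_j^{\dagger}b_j$. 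Restricted to $\mathcal{W}(s^{\star})$ — the span of eigenvectors of $\mathcal{N}$ with eigenvalue at most $s^{\star}$ — we have $\mathcal{N}\le s^{\star}$, hence $0\le\langle\phi|(H'-H)|\phi\rangle\le\gamma$ for every normalized $\phi\in\mathcal{W}(s^{\star})$, i.e.
\[
\langle\phi|H|\phi\rangle\;\le\;\langle\phi|H'|\phi\rangle\;\le\;\langle\phi|H|\phi\rangle+\gamma .
\]

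Next I would take the minimum over normalized $\phi\in\mathcal{W}(s^{\star})$. Writing $E\equiv\min_{\phi\in\mathcal{W}(s^{\star})}\langle\phi|H|\phi\rangle$ and recalling the definition Eq.~\eqref{eq:egstar} of $e_g^{\star}$, the left pointwise inequality gives $e_g^{\star}\ge E$, while the right one gives $e_g^{\star}\le E+\gamma$. Then I would apply Lemma~\ref{lem:es} with $s=s^{\star}$: since $s^{\star}=\lceil cm\log^2(m\gamma^{-1})\rceil$ satisfies the hypothesis Eq.~\eqref{eq:sstar}, that lemma yields $e_g\le E\le e_g+\gamma$. Combining, $e_g\le E\le e_g^{\star}\le E+\gamma\le e_g+2\gamma$, so $0\le e_g^{\star}-e_g\le 2\gamma$, which is the claim.

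The argument is essentially bookkeeping, and I do not expect any substantive obstacle. The only two points that need a little care are the direction of the operator inequality $H\le H'$ (so that deforming the single-particle energies upward can only raise energies, never lower them), and the conversion of the operator bound $(\gamma/s^{\star})\mathcal{N}$ into the scalar bound $\gamma$, which relies precisely on the excitation-number cutoff $\mathcal{N}\le s^{\star}$ that defines $\mathcal{W}(s^{\star})$.
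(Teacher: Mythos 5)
Your proposal is correct and follows essentially the same route as the paper: both arguments rest on the operator inequality $0\le H'-H\le(\gamma/s^{\star})\mathcal{N}$, the resulting bound of $\gamma$ on the restriction to $\mathcal{W}(s^{\star})$, and Lemma~\ref{lem:es} supplying the other $\gamma$. The only cosmetic difference is that the paper compares the two minima via Weyl's inequality applied to $\|(H'-H)|_{\mathcal{W}(s^{\star})}\|$, whereas you compare the quadratic forms pointwise and then minimize, which in fact yields the slightly sharper one-sided conclusion $0\le e_g^{\star}-e_g\le 2\gamma$.
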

\begin{proof}
Applying lemma \ref{lem:es} gives
\[
|e_g-\min_{\phi\in W(s^{\star})} \langle \phi |H|\phi\rangle|\leq \gamma.
\]
Therefore
\begin{align}
|e_g-e_g^{\star}|& \leq \gamma+\left|\min_{\phi\in W(s^{\star})} \langle \phi |H'|\phi\rangle-\min_{\phi\in W(s^{\star})} \langle \phi |H|\phi\rangle\right|\\
&\leq \gamma+\|(H'-H)|_{W(s^{\star})}\|
\label{eq:3gam}
\end{align}
where in the last line we used Weyl's inequality. Here we use the notation $M|_{\mathcal{S}}$ to denote the restriction of an operator $M$ to a subspace $\mathcal{S}$.  Now
\[
H'-H=\sum_{j=1}^{n} (\epsilon'_j-\epsilon_j)b^{\dagger}_j b_j
\]
and using Eq.~\eqref{eq:eprime} we arrive at
\[
0\leq H'-H\leq \frac{\gamma}{s^{\star}}\sum_{j=1}^{n} b^{\dagger}_j b_j.
\]
Thus
\begin{equation}
\|(H'-H)|_{W(s^{\star})}\|\leq \frac{\gamma}{s^{\star}}\bigg\| \bigg(\sum_{j=1}^{n} b^{\dagger}_j b_j\bigg)\bigg|_{W(s^{\star})}\bigg\|\leq \gamma,
\label{eq:norminsubspace}
\end{equation}
where in the last line we used the definition of $W(s^{\star})$. Plugging this into Eq.~\eqref{eq:3gam} completes the proof.
\end{proof}
Thus we have shown that to approximate $e_g$ it suffices to consider the deformed impurity model. Why is this useful to us? The total number of \textit{distinct} single-particle energies $\epsilon'_j$ is at most $1+|G\cap [0,1]|\leq 1+s^{\star}/\gamma$, which does not depend on $n$. Since there are $n$ modes in total, we see that on average a single-particle energy of the deformed bath $H'_0$ has degeneracy linear in $n$. Because of this massive degeneracy, we may use the decoupling lemma to show that many of the degrees of freedom (modes) can be decoupled from the impurity.

Applying lemma \ref{lem:fermionchoice} to the deformed impurity model $H'$ we get fermion operators $\{\tilde{b}_j: j=1,\ldots,n \}$ and subsets $A_k,B_k\subseteq [n]$ such that
\begin{equation}
H_0'=\sum_{k=1,2,\ldots } \frac{k\gamma}{s^{\star}}\left(\sum_{i\in A_k} \tilde{b}^\dagger_i\tilde{b}_i +\sum_{i\in B_k} \tilde{b}^\dagger_i\tilde{b}_i \right)
\label{eq:deform1}
\end{equation}
such that $|B_k|\leq m$ and the modes in $\cup_k A_k$ are decoupled from the impurity, that is, 
\begin{equation}
[\tilde{b}_j, H_{imp}]=0  \quad \text{ whenever }\quad \tilde{b}_j\in \cup_k A_k.
\label{eq:deform2}
\end{equation}
For ease of notation, let us order the modes so that the decoupled ones appear first
\[
\cup_k A_k =\{1,2,\ldots,N\} \qquad \quad \cup_k B_k=\{N+1,N+2,\ldots,n\}
\]
Note that the total number of \text{coupled} modes is upper bounded by $m$ times the number of distinct single particle energy levels $\epsilon'_j$, that is,
\begin{equation}
n-N\leq m (1+\left|G\cap [0,1]\right|)\leq m(1+s^{\star}/\gamma)\leq \frac{2ms^{\star}}{\gamma}.
\label{eq:numcoupled}
\end{equation}

For each $z\in \{0,1\}^n$ define a Fock basis state $|\tilde{z}\rangle$ with respect to the modes $\{\tilde{b}_j\}$, i.e., 
\begin{equation}
 \tilde{b}_j^{\dagger}\tilde{b}_j |\tilde{z}\rangle=z_j|\tilde{z}\rangle \qquad j=1,2,\ldots,n.
\label{eq:defz}
\end{equation}

Note that $|\tilde{z}\rangle$ is an eigenstate of the operator $\sum_{j} \tilde{b}^{\dagger}_j \tilde{b}_j$ with eigenvalue $\sum_{j=1}^{n} z_j $. Lemma \ref{lem:fermionchoice} states that the decoupling transformation preserves particle number, that is, 
\begin{equation}
\mathcal{N}=\sum_{j=1}^{n} b_j^{\dagger} b_j=\sum_{j=1}^{n} \tilde{b}_j^{\dagger} \tilde{b}_j.
\label{eq:numberop}
\end{equation}
By definition, the subspace $\mathcal{W}(s^{\star})$ is spanned by all eigenstates of the number operator Eq.~\eqref{eq:numberop} with eigenvalues at most $s^{\star}$. Therefore 
\[
\mathcal{W}(s^{\star})=\mathrm{span}\bigg\{|\tilde{z}\rangle: \sum_{i=1}^{n} z_i \leq s^{\star}\bigg\}.
\]
Define a subspace
\begin{equation}
\mathcal{V}=\mathrm{span}\bigg\{|\tilde{z}\rangle: \sum_{i=1}^{n} z_i \leq s^{\star}\;\; \text{ and } \;\; z_1=z_2=\ldots=z_N=0\bigg\} 
\label{eq:Vbasis}
\end{equation}
spanned by basis vectors where the decoupled modes are unoccupied. We now show that the minimization in Eq.~\eqref{eq:egstar} can be restricted to the subspace $\mathcal{V}$.
\begin{lemma}
\begin{equation}
e_g^{\star}= \min_{\phi \in \mathcal{V}} \langle \phi |H'|\phi\rangle.
\label{eq:egprime}
\end{equation}
\label{lem:restrict}
\end{lemma}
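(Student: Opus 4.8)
The plan is to establish the nontrivial inequality $\min_{\phi\in\mathcal{V}}\langle\phi|H'|\phi\rangle\le e_g^{\star}$; the reverse is immediate since $\mathcal{V}\subseteq W(s^{\star})$ by Eq.~\eqref{eq:Vbasis} together with Eq.~\eqref{eq:egstar}. The guiding idea is that an occupied \emph{decoupled} mode can always be emptied out without raising the energy, because $H_0'$ is diagonal with non-negative single-particle energies and $H_{imp}$ does not act on the decoupled modes at all. To set this up, write $\tilde{n}_j:=\tilde{b}_j^{\dagger}\tilde{b}_j$. By Eq.~\eqref{eq:deform1}, $H_0'=\sum_j\mu_j\tilde{n}_j$ with $\mu_j\ge 0$ is diagonal in the Fock basis $\{|\tilde{z}\rangle\}$ of Eq.~\eqref{eq:defz}, while Eq.~\eqref{eq:deform2} gives $[\tilde{b}_j,H_{imp}]=0$, hence also $[\tilde{b}_j^{\dagger},H_{imp}]=0$ and $[\tilde{n}_j,H_{imp}]=0$, for every decoupled mode $j\in\cup_k A_k=\{1,\ldots,N\}$; in particular $H'$ commutes with $\tilde{n}_1,\ldots,\tilde{n}_N$. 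For each $z\in\{0,1\}^N$ let $W_z\subseteq W(s^{\star})$ be the span of the Fock vectors $|\tilde{y}\rangle\in W(s^{\star})$ whose first $N$ occupation numbers are $(y_1,\ldots,y_N)=z$, so that $W_{0^N}=\mathcal{V}$ by Eq.~\eqref{eq:Vbasis} and $W(s^{\star})=\bigoplus_z W_z$. Given $\phi\in W(s^{\star})$ with orthogonal decomposition $\phi=\sum_z\phi_z$, $\phi_z\in W_z$, the fact that $H'$ commutes with every $\tilde{n}_j$, $j\le N$, forces $H'\phi_{z'}$ to be a joint eigenvector of $(\tilde{n}_1,\ldots,\tilde{n}_N)$ with eigenvalues $z'$; hence $\langle\phi_z|H'|\phi_{z'}\rangle=0$ for $z\ne z'$ and $\langle\phi|H'|\phi\rangle=\sum_z\langle\phi_z|H'|\phi_z\rangle$. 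Consequently $e_g^{\star}=\min_z\,\nu_z$, where $\nu_z:=\min_{\chi\in W_z,\,\|\chi\|=1}\langle\chi|H'|\chi\rangle$.

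The main step is to show $\nu_z\ge\nu_{z'}$ whenever $z'$ arises from $z$ by changing a single entry $z_{j_0}=1$ into $0$. I would fix a normalized $\phi\in W_z$ with $\langle\phi|H'|\phi\rangle=\nu_z$ and put $\phi':=\tilde{b}_{j_0}|\phi\rangle$. Since $\tilde{b}_{j_0}$ lowers the total excitation number (preserved by the decoupling, Eq.~\eqref{eq:numberop}) by one and turns off the $j_0$-th occupation, $\phi'\in W_{z'}$, and moreover $\|\phi'\|^{2}=\langle\phi|\tilde{n}_{j_0}|\phi\rangle=1$. Then I would verify the elementary operator identities $\tilde{b}_{j_0}^{\dagger}H_{imp}\tilde{b}_{j_0}=\tilde{n}_{j_0}H_{imp}$ (from $[\tilde{b}_{j_0},H_{imp}]=0$) and $\tilde{b}_{j_0}^{\dagger}H_0'\tilde{b}_{j_0}=\big(\sum_{j\ne j_0}\mu_j\tilde{n}_j\big)\tilde{n}_{j_0}$ (from $\tilde{b}_{j_0}^{2}=0$ and $[\tilde{n}_j,\tilde{n}_{j_0}]=0$). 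Taking expectations in $\phi$, on which $\tilde{n}_{j_0}$ acts as the identity, yields $\langle\phi'|H_{imp}|\phi'\rangle=\langle\phi|H_{imp}|\phi\rangle$ and $\langle\phi'|H_0'|\phi'\rangle=\langle\phi|H_0'|\phi\rangle-\mu_{j_0}\le\langle\phi|H_0'|\phi\rangle$, hence $\nu_{z'}\le\langle\phi'|H'|\phi'\rangle\le\langle\phi|H'|\phi\rangle=\nu_z$. Iterating this inequality along a chain of single-bit flips that turns off all occupied decoupled modes of $z$ gives $\nu_{0^N}\le\nu_z$ for every $z$, so $e_g^{\star}=\nu_{0^N}=\min_{\phi\in\mathcal{V}}\langle\phi|H'|\phi\rangle$, which is Eq.~\eqref{eq:egprime}.

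I do not expect any real obstacle. The only structural input is Eq.~\eqref{eq:deform2}, which is precisely what makes the $H_{imp}$-energy invariant under emptying a decoupled mode; everything else is the routine fermionic bookkeeping sketched above, in particular checking that $\tilde{b}_{j_0}$ is norm-preserving on $W_z$ and maps it into $W_{z'}$, so that the comparison $\nu_{z'}\le\nu_z$ is legitimate.
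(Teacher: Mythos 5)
Your proof is correct and follows essentially the same route as the paper: both exploit that $H'$ commutes with $\tilde{n}_1,\ldots,\tilde{n}_N$ to block-diagonalize $H'|_{W(s^\star)}$ over the occupation patterns of the decoupled modes, and then empty the occupied decoupled modes of a block minimizer via annihilation operators, noting that this preserves the impurity energy and cannot increase the bath energy. The only cosmetic difference is that you flip one occupied mode at a time while the paper applies the full product $\tilde{b}_1^{x_1}\cdots\tilde{b}_N^{x_N}$ in one step.
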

\begin{proof}
Eqs.~(\ref{eq:deform1},\ref{eq:deform2}) imply $[\tilde{b}_j^\dagger \tilde{b}_j,H^\prime]=0$ for all $j\in \{1,2,\ldots,N\}$ and therefore
\[
\langle \tilde{z}|H'|\tilde{y}\rangle=0 \quad \text{ whenever } y_i\neq z_i \text{ for some } i\in \{1,2,\ldots,N\}
\]
Thus the restriction $H'|_{\mathcal{W}(s^{\star})}$ is block diagonal in the basis $\{|\tilde{z}\rangle: z\in\{0,1\}^n\}$ with a block for each configuration $z_1z_2\ldots z_N$ of the decoupled modes. The smallest eigenvalue $e^{\star}_g$ of $H'|_{\mathcal{W}(s^{\star})}$ is the smallest eigenvalue of one of the blocks. In particular, for some $|\phi\rangle$ and $x\in \{0,1\}^N$ we have
\[
e_g^{\star}=\langle \phi|H'|\phi \rangle \quad \text{and} \quad \tilde{b}_j^\dagger \tilde{b}_j|\phi\rangle=x_j |\phi\rangle  \quad j=1,2,\ldots,N
\]
Now let $|\alpha \rangle=\tilde{b}_1^{x_1}\tilde{b}_2^{x_2}\ldots \tilde{b}_N^{x_N}|\phi\rangle$ and note that $\alpha \in \mathcal{V}$. Applying Lemma \ref{lem:fermionchoice} we get
\[
\langle \alpha |H^{\prime}_0|\alpha \rangle\leq \langle \phi|H^{\prime}_0|\phi\rangle \quad \text{and} \quad \langle \alpha |H_{imp}|\alpha \rangle= \langle \phi|H_{imp}|\phi\rangle
\]
and therefore $\langle \alpha |H'|\alpha \rangle\leq \langle\phi|H'|\phi\rangle$. Note that equality must hold since $\phi$ minimizes the energy of $H^{\prime}$ in $\mathcal{W}(s^{\star})$. We have shown there exists $\alpha \in \mathcal{V}$ with $e_g^{\star}=\langle \alpha|H'|\alpha\rangle$, which completes the proof.
\end{proof}
We now use Eqs.~(\ref{eq:Vbasis},\ref{eq:numcoupled}) to upper bound
\[
\dim(\mathcal{V})\leq \binom{n-N}{ s^{\star}} \leq \left(e\frac{n-N}{s^{\star}}\right)^{s^{\star}} \leq  \left(\frac{2em}{\gamma}\right)^{s^{\star}}=e^{O(m\log^3(m\gamma^{-1}))}.
\]
where in the second inequality we used the bound $\binom{l}{k}\leq (l e/k)^k$ where $e=\mathrm{exp}(1)$.

Let $D\equiv \dim{(\calV)}$. 
We claim that the righthand side of Eq.~(\ref{eq:egprime})  can be computed in time
$O(2^m n^3 D^2 + D^3)$. Indeed Eq.~(\ref{eq:Vbasis}) gives
an orthonormal set of Gaussian states  $\Phi=(\phi_1,\ldots,\phi_D)$ 
that spans $\calV$. By construction, the deformed bath Hamiltonian is diagonal in this basis
and one can compute a matrix element $\langle \phi_j|H_0'|\phi_j\rangle$ in time $O(n)$
by summing up energies of all excitations present in $\phi_j$.
Thus one can compute the matrix of $H_0'$ in the basis $\Phi$ in time
$O(nD)$. Consider now the impurity Hamiltonian $H_{imp}$. By construction,
$H_{imp}$ is a linear combination of $O(2^m)$ Majorana monomials $c(x)$.
Using the generalized Wick's theorem Eq.~(\ref{Wick2}) one can compute 
a single matrix element $\langle \phi_i|c(x)|\phi_j\rangle$ in time $O(n^3)$.
Thus one can compute the full matrix of $H_{imp}$ in the basis $\Phi$ in time
$O(2^m n^3 D^2)$. Once the matrices of $H_0'$ and $H_{imp}$ 
in the basis $\Phi$ are computed, one can calculate $e_g^{\star}$ using exact
diagonalization in time $O(D^3)$.  Recall that $|e_g-e_g^{\star}|\leq 2\gamma$ (to get rid of the factor of $2$ we may rescale the precision parameter $\gamma\rightarrow 2\gamma$ without altering the asympotic runtime of the algorithm). This completes the proof of Theorem~\ref{thm:1}. 

\textit{Remark}: The above algorithm can be used to produce a low energy state of $H$, the original (not deformed) impurity model. Indeed, in the last step of the algorithm, one may use an exact diagonalization routine which, along with the eigenvalue $e_g^{\star}$, computes a state $\alpha \in \mathcal{V}\subseteq W(s^{\star})$ satisfying $e_g^{\star}=\langle \alpha |H'|\alpha\rangle$. In this case we have
\[
e_g^{\star}-\langle \alpha|H|\alpha\rangle =\langle \alpha |H'|\alpha\rangle-\langle \alpha|H|\alpha\rangle \leq \|(H'-H)|_{W(s^{\star})}\|\leq \gamma
\]
where we used Eq.~\eqref{eq:norminsubspace}. Combining this with the fact that $|e_g-e_g^{\star}|\leq 2\gamma$, we see that the computed state $|\alpha\rangle$ satisfies 
\[
|e_g- \langle \alpha|H|\alpha\rangle |\leq 3\gamma.
\]

\label{sec:generalcase}
\subsection{Efficient algorithm for gapped impurity models}
\label{sec:gappedcase}
In this Section we prove theorem \ref{thm:gap}, restated here for convenience.
\gap*
\begin{proof}
Let $e^1,e^2,\ldots,e^{2n}$ be the standard basis of $\RR^{2n}$. Let $h$ be the  $2n\times 2n$ matrix defined in Eq.~\eqref{bath1}. Define a nested sequence of linear subspaces $\calL_1\subseteq \calL_2 \subseteq 
\ldots \subseteq \calL_v \subseteq \RR^{2n}$ such that 
\[
\calL_1=\mathrm{span}(e^1,e^2,\ldots,e^m) \quad
\mbox{and} \quad \calL_{j}=\mathrm{span}(\calL_1, h \calL_1,\ldots,h^{j-1} \calL_1)
\]
for $j\ge 2$.
We choose $v$ as the smallest integer such that $\calL_{v+1}=\calL_v$.
Obviously, $v=O(n)$. 
Let $L=\dim{(\calL_v)}$. By construction, $\calL_v$ is $h$-invariant.
Define a subspace
\[
\calK_j=\calL_j\cap \calL_{j-1}^\perp.
\]
Let us agree that $\calL_0=0$, so that $\calK_1=\calL_1$. We get a direct sum decomposition 
\begin{equation}
\label{Ksubspaces}
\RR^{2n}=\calK_1\oplus \calK_2 \oplus \cdots \oplus \calK_v \oplus \calL_v^\perp.
\end{equation} 
Note that $h$ is block-tridiagonal with respect to this decomposition, that is,
$\langle \alpha |h|\beta\rangle =0$ whenever $\alpha\in \calK_i$, $\beta\in \calK_j$,
and $|i-j|\ge 2$. Indeed, assume wlog that $i\ge j+2$.  Then 
\[
h|\beta\rangle \in h\calK_j \subseteq h\calL_j \subseteq \calL_{j+1} \subseteq \calL_{i-1}
\]
whereas  $|\alpha\rangle\in \calK_i \subseteq \calL_{i-1}^\perp$.
Furthermore, 
\[
\dim{(\calK_j)}=\dim{(\calL_j)}- \dim{(\calL_{j-1})} \le m
\]
since  $\calL_j$ is spanned by $\calL_{j-1}$ and $h^{j-1} \calL_1$.

Choose an orthonormal basis $f^1,f^2,\ldots,f^{2n}\in \RR^{2n}$ such that the
first $\dim{(\calK_1)}$ basis vectors span $\calK_1$, the next $\dim{(\calK_2)}$ 
basis vectors span $\calK_2$ and so on. The last $2n-\dim{(\calL_v)}$ basis vectors
span $\calL_v^\perp$. Define a new set of Majorana operators 
\begin{equation}
\label{newMajorana}
\tilde{c}_p=\sum_{q=1}^{2n} (f^p)_q c_q, \quad \quad p=1,\ldots,2n.
\end{equation}
Here $(f^p)_q$ is the $q$-th component of $f^p$. The operators $\tilde{c}_p$
obey the same commutation rules as $c_p$. 
By construction, $\tilde{c}_p=c_p$ for $1\le p\le m$ and thus $H_{imp}$
belongs to the algebra generated by $\tilde{c}_1,\ldots,\tilde{c}_m$.
Transforming  $h$ to the new basis we find that 
\begin{equation}
\label{Hnew1}
h=\left[ \ba{cc} h' & \\ & h'' \\ \ea \right],
\end{equation}
where the two blocks have dimension $L$ and $2n-L$ respectively.
Moreover, $h'$ is block-tridiagonal with non-zero matrix elements only 
between blocks $\calK_i$, $\calK_j$ with $|i-j|\le 1$.
 We conclude that 
\begin{equation}
\label{Hnew2}
H=H_{imp} + H_{A}+ H_{B}, \quad 
H_{A}=\frac{i}4 \sum_{p,q=1}^{L} h'_{p,q} \tilde{c}_p \tilde{c}_q,
\quad
H_{B}=\frac{i}4 \sum_{p,q=L+1}^{2n} h''_{p,q} \tilde{c}_p \tilde{c}_q.
\end{equation}
For simplicity, here we ignore the constant energy shift in Eq.~(\ref{bath1}).
The terms $H_{imp}+H_{A}$ and $H_{B}$ act on disjoint sets of modes and the ground energy of $H$ is the sum of their ground energies. Since $H_{B}$ is quadratic its ground energy is $-\|h''\|_1/4$, which is easily computed from the singular value decomposition of $h''$. Thus we can concentrate on $H_{imp}+H_{A}$
which acts on modes $\tilde{c}_1,\ldots,\tilde{c}_L$. Without loss of generality we shall assume $L$ is even in the following; if it is not even we may simply view $H_{imp}+H_A$ as acting on modes $\tilde{c}_1,\ldots,\tilde{c}_L$ in addition to one auxiliary Majorana mode.   We can map $H_{imp}+H_{A}$ to a Hamiltonian
describing a 1D chain of qubits using the standard Jordan-Wigner transformation: 
\[
\tilde{c}_1=X_1, \quad \tilde{c}_2=Y_1, 
\]
and
\[
\tilde{c}_{2a-1}=Z_1\cdots Z_{a-1} X_a \quad \mbox{and} \quad \tilde{c}_{2a}=Z_1\ldots Z_{a-1} Y_a
\]
for $a\ge 2$.  Here $X_a,Y_a,Z_a$ are the Pauli operators on the $a$-th qubit.
Since we have $L$ Majorana modes, the chain consists of $L/2$ qubits.  We may coarse-grain the chain such that the first $m$ qubits form the first site, the next $m$ qubits form the second site, and so on (the last qudit may consist of $<m$ qubits). Since $h'$ is block
 tridiagonal with block size upper bounded as $\dim{(\calK_j)}\leq m$, we have
\[
h'_{ij}=0 \quad \text{whenever} \quad |i-j|\geq 2m.
\]
This block tridiagonal structure implies that after the Jordan-Wigner transformation
the Hamiltonian $H_{imp}+H_A$ describes a 1D chain of qudits with nearest-neighbor
interactions.  Each qudit has dimension at most
\[
d=2^m =O(1).
\]
The chain has length $v=O(n)$. 
Since all of the above transformations are unitary, they preserve eigenvalues. 
Thus a gapped quantum impurity model can be efficiently mapped to a gapped 1D chain of qudits.
One can approximate the ground state energy of the latter within error $\delta$
in time $poly(n,\delta^{-1})$ using MPS-based algorithms, see Ref.~\cite{1Dgapped}.
\end{proof}

\subsection{Containment in QCMA}
\label{sec:QCMA}
We now consider the complexity of estimating the ground energy of a quantum impurity model to inverse polynomial precision. Formally, we consider the following decision problem (restated from Section \ref{sec:misc}).
\qprob*
In this Section we prove the following theorem.
\qcma*

We first review some facts concerning the representation of fermionic states on a quantum computer. Quantum states of $n$ fermionic modes are represented using $n$ qubits in the following way: for each $x\in \{0,1\}^n$, the Fock basis state $\prod_{i=1}^{n} (a_i^{\dagger})^{x_i}|0^n\rangle$ is identified with the $n$-qubit computational basis state $|x\rangle$. Majorana operators are represented via the Jordan-Wigner transformation:
\begin{align}
c_1&=X_1 \label{eq:JW1} \\
c_2&=Y_1  \label{eq:JW2} \\
c_{2a-1}&=Z_1\cdots Z_{a-1} X_a  \label{eq:JW3} \\
c_{2a}&=Z_1\ldots Z_{a-1} Y_a  \label{eq:JW4} 
\end{align}

With this representation, a Gaussian state can be prepared efficiently on a quantum computer. A simple strategy to prepare any Gaussian state was given in Ref.~\cite{Electronquantum}; we summarize it here for completeness. 

Let a Gaussian state $|\Phi\rangle$ be specified up to a global phase  by its covariance matrix $M$, as defined in Eq.~\eqref{M}. For each $i,j=1,2\ldots,2n$ and $\theta\in [0,\pi]$ define a unitary 
\[
U(\theta,i,j)=e^{\frac{\theta}{2} c_i c_j}.
\]
 Using the Majorana commutation relations one can easily check that the state $\Phi'$ defined by
\[
|\Phi'\rangle=U(\theta,i,j)|\Phi\rangle 
\]
has covariance matrix
\[
M'=R(\theta,i,j) M R(\theta,i,j)^{T}
\]
where $R(\theta,i,j)$ is a ``Givens rotation'' which acts nontrivially only on the subspace spanned by basis vectors $i,j$, and within this subspace its action is described by the $2\times 2$ matrix
\[
\left(\begin{array}{cc} \cos(\theta) & \sin (\theta) \\ -\sin(\theta) & \cos(\theta)\end{array}\right).
\]
Since $M$ is a $2n\times 2n$ real and antisymmetric matrix satisfying $M^2=-I$, there is an $\mathrm{SO}(2n)$ matrix $R$ such that $M=RM_{y}R^T$, where $M_{y}$ is the covariance matrix of a standard basis state, see Eq.~\eqref{V}. The matrix $R$ can be computed efficiently using linear algebra, e.g., from the real Schur decomposition of $M$ \cite{GolubVL}.  It can then be decomposed as a product
\[
R=R(\theta_q,i_q,j_q)\ldots R(\theta_2,i_2,j_2)R(\theta_1,i_1,j_1)
\]
for some angles $\theta_1,\ldots,\theta_q$ and qubit indices $i_1,j_1,i_2,j_2\ldots,i_q,j_q$, where $q=O(n^2)$. Such a decomposition can be computed, for example, using the strategy provided in Section 4.5.1 of the textbook Nielsen and Chuang \cite{nielsenchuang} for decomposing a given unitary into two-level unitaries. We then have
\[
|\Phi\rangle=e^{i\kappa} U(\theta_q,i_q,j_q)\ldots U(\theta_2,i_2,j_2)U(\theta_1,i_1,j_1)|y\rangle,
\]
for some global phase $e^{i\kappa}$ and computational basis state $y\in \{0,1\}^n$. From this expression it is evident that $|\Phi\rangle$ can be prepared efficiently, since the right hand side consists of $O(n^2)$ unitaries $U(\theta,i,j)$, each of which is an exponential of a Pauli operator. In the following we will only need to use the slightly weaker statement that there exists a polynomial sized quantum circuit which prepares $\Phi$ starting from $|0^n\rangle$.

We now give the proof of theorem \ref{thm:QCMA}.
\begin{proof}

Let $H,a,b$ be given. Define $\gamma=(b-a)/4$ and consider the truncated Hamiltonian $H(\gamma)$ from Eq.~\eqref{eq:Hkappa}. Here $H(\gamma)=H_0(\gamma)+H_{imp}$ where $H_0(\gamma)$ has spectral gap $\omega=\gamma/m$. The truncation lemma (lemma \ref{lem:deform}) states that 
\[
|e(\gamma)-e_g|\leq \gamma.
\]
Therefore 
\begin{align}
e_g <a &\implies e(\gamma)<a' \qquad \quad a'=a+(b-a)/4\nonumber\\
e_g>b &\implies e(\gamma)>b' \qquad \quad b'=b-(b-a)/4\nonumber\\
\end{align}
Note $b'-a'=(b-a)/2=1/\mathrm{poly}(n)$.  Also note that $H(\gamma)$ can be computed efficiently given $H,b,a$; we need only diagonalize the free fermion Hamiltonian $H_0$ which takes time $O(n^3)$.  We have shown that the solution to the quantum impurity problem for $H$ with precision parameters $a,b$ is equivalent to the quantum impurity problem for $H(\gamma)$ with precision parameters $b',a'$. Below we give a QCMA protocol for the latter.

We use corollary \ref{cor:gaussianrank} with $\delta=0.01$ (say) and Hamiltonian $H(\gamma)$. Since $m=O(1)$, it implies that there exists a normalized state $\phi$ of the form
\begin{align}
|\phi\rangle=\sum_{j=1}^{\chi} z_j |\theta_j\rangle \qquad \chi=e^{O(\log(\gamma^{-1}))}=\mathrm{poly}(1/\gamma)=\mathrm{poly}(n).
\label{eq:phistate}
\end{align}
where $\{\theta_j\}$ are orthonormal Gaussian states, and where $\|\phi-\psi\|\leq 0.01$ for some ground state $\psi$ of $H(\gamma)$.

As discussed above, each Gaussian state $\theta_j$ can be prepared by a quantum circuit of size $O(n^2)$. This implies that there is also a polynomial sized quantum circuit which prepares $\phi$ with high probability ($0.99$, say) and a flag qubit indicating success.  Indeed, the unitary
\[
\sum_{j=1}^{\chi} |j\rangle\langle j|\otimes U_j
\]
can be implemented with a circuit of size $O(\chi n^2)$ (see for example Lemma 8 of Ref.~\cite{CKS15}). Here the first register consists of $\lceil \log_2(\chi)\rceil$ ancilla qubits.  Applying this unitary to the state $\chi^{(-1/2)}\sum_{i=1}^{\chi}|i\rangle|0^n\rangle$ one obtains $\chi^{(-1/2)}\sum_{i=1}^{\chi}|i\rangle |\theta_i\rangle$. We may then perform the projective measurement $\{ |\vec{z}\rangle \langle \vec{z}|\otimes I, (I-|\vec{z}\rangle \langle \vec{z}|)\otimes I\}$, where
\[
|\vec{z}\rangle=\sum_{j=1}^{n} z_j |j\rangle.
\]
With probability $\chi^{-1}$ we obtain the desired state $\phi$. A quantum circuit which repeats this procedure $\Theta(\chi)=\mathrm{poly}(n)$ times will produce $\phi$ with high probability.

Now we are ready to describe the QCMA protocol. The protocol begins with Merlin sending Arthur a classical description of a polynomial sized quantum circuit with an $n$-qubit output register and a flag qubit.  Ideally, Arthur would like this to be the circuit which prepares $\phi$ with probability $0.99$ (with the flag qubit indicating whether or not $\phi$ has been successfully prepared).

Arthur applies the given circuit to the all zeros input state and measures the flag qubit. If he obtains measurement outcome $1$, then he performs the following test on the $n$-qubit output register, otherwise he rejects.
Using phase estimation \cite{nielsenchuang} on $e^{iH(\gamma)}$, he measures the eigenvalue of $H(\gamma)$ within precision $(b'-a')/4$. The parameters of the phase estimation are chosen so that the probability of failure is at most $\epsilon=0.01$ (say). He accepts if the measured eigenvalue is at most $a'+ (b'-a')/2$. 

Let us now analyze the completeness and soundness of this protocol. In the yes case Merlin can send the circuit which prepares $\phi$ with probability $0.99$. In this case there is a ground state $\psi$ with energy  $e(\gamma)<a'$ such that $\|\phi-\psi\|\leq 0.01$, which implies $|\langle \phi|\psi\rangle|^2 \geq 1-0.01^2$. Thus, if the phase estimation succeeds, then with probability at least $1-0.01^2$ Arthur will measure an energy which is at most $a'+(b'-a')/4$ causing him to accept. Thus the total probability for Arthur to accept is the product of 0.99 (the probability that the flag qubit is measured to be $1$), 0.99 (the probability phase estimation succeeds), and $1-0.01^2$ (the probability that Arthur accepts given that phase estimation succeeds). We have shown that in the yes case Merlin can provide a classical witness which causes Arthur to accept with probability at least $2/3$.

Next consider the no case. In this case the ground energy of $H(\gamma)$ is at least $b'$. Thus any eigenvalue of $H(\gamma)$, when approximated within precision $(b'-a')/4$, is greater than Arthur's threshold to accept, i.e., greater than $a'+ (b'-a')/2$. This means that, regardless of the circuit provided by Merlin, the only way Arthur will accept is if the phase estimation fails (which occurs with probability at most $0.01$). Thus, in the no case, Arthur rejects with probability at least $1/3$.

Finally, note that Arthur's computation has polynomial running time. He first performs the (polynomial-sized) circuit given to him by Merlin, and then performs phase estimation to precision $(b'-a')/4=1/\mathrm{poly}(n)$ and constant error probability. This phase estimation requires Arthur to implement Schrodinger time evolution $e^{itH(\gamma)}$ for times $t=\mathrm{poly}(n)$. This can be done efficiently since $H(\gamma)$ is a sparse and efficiently row-computable Hamiltonian \cite{AdiabaticSZK}. Indeed, since $H(\gamma)$ is a quantum impurity Hamiltonian it can be written as a sum of $n^2+2^m$ Majorana monomials, each of which corresponds to an $n$-qubit Pauli.  

\end{proof}

\section{Simplified practical algorithm}
\label{sec:variational}

In this section we describe a variational algorithm that minimizes the energy of 
a quantum impurity Hamiltonian over low-rank superpositions of Gaussian states. 
This algorithm provides an upper bound on the ground energy $e_g$.
We also describe an extended (less efficient) version of this algorithm that provides both upper and lower
bounds on $e_g$. We benchmark the method using the single impurity Anderson model~\cite{Anderson61}.

Consider a system of $n$ fermi modes and a Hamiltonian 
composed of arbitrary quadratic and quartic Majorana monomials:
\begin{equation}
\label{varH1}
H=H_0+H_{imp},
\end{equation}
\begin{equation}
\label{varH2}
H_0=i\sum_{1\le p<q\le 2n} A_{pq} c_pc_q,
\end{equation}
\begin{equation}
\label{varH3}
H_{imp}=\sum_{1\le p<q<r<s\le m} U_{pqrs} c_p c_q c_r c_s.
\end{equation}
Here $A_{pq}$ and $U_{pqrs}$ are some real coefficients.
We shall  extend the range of all sums over Majorana modes
to the interval $[1,2n]$ assuming that $U_{pqrs}=0$ unless
$1\le p<q<r<s\le m$.
Since the Hamiltonian $H$ commutes with the parity operator $P$, see Eq.~(\ref{Parity}),
we can minimize the 
energy  of $H$ in the even and the odd subspaces separately. 
Equivalently, one can minimize the energy of $H$ and $c_1Hc_1$ within the even subspace.
Since both minimizations are exactly the same, below we consider only the even subspace. 

Our variational algorithm depends on an integer $\chi\ge 1$ that we call a {\em rank}.
Let
\[
\Phi=(\phi_1,\phi_2,\ldots,\phi_\chi)
\]
be a tuple of $\chi$ even Gaussian states $\phi_a\in \calG_n$
with covariance matrices $M_a$. 
The states $\phi_a$ may or may not be pairwise orthogonal. 
The algorithm works by minimizing an objective function $E(M_1,\ldots,M_\chi)$ defined as 
the smallest eigenvalue of $H$ restricted to  the linear  subspace  spanned by 
states $\phi_1,\ldots,\phi_\chi$:
\begin{equation}
\label{objective}
E(M_1,\ldots,M_\chi)=\min_{\psi \in \mathrm{span}(\Phi)}\; \; \frac{\langle \psi |H|\psi\rangle}{\langle\psi|\psi\rangle}.
\end{equation}
Here $\mathrm{span}(\Phi)\equiv \mathrm{span}(\phi_1,\ldots,\phi_\chi)$.
Note that  $\mathrm{span}(\Phi)$ is uniquely determined by the covariance matrices $M_1,\ldots,M_\chi$.

We shall parameterize  the covariance matrix of $\phi_a$ by a rotation $R_a\in SO(2n)$ such that  
\begin{equation}
\label{paramR}
M_a=R_a M_{vac} R_a^T.
\end{equation}
Here 
\[
M_{vac}=\bigoplus_{j=1}^n \left[ \ba{cc} 0 & 1 \\ -1 & 0 \\ \ea \right]
\]
is the covariance matrix of the vacuum state $|0^n\rangle$,
see Eq.~(\ref{V}). 
Thus we have to minimize the objective function Eq.~(\ref{objective}) over the group 
\[
SO(2n)\times \ldots \times SO(2n),
\]
where the direct product contains $\chi$ factors.

\subsection{Rank-$1$ algorithm}

Let us start from the simplest case $\chi=1$, i.e. minimizing the 
energy of $H$ over all Gaussian states. 
This corresponds to a generalized Hartree-Fock method proposed recently by Kraus and Cirac~\cite{KC10} for simulation of interacting fermions on a lattice. 
Let $M\equiv M_1$.
Applying Wick's theorem Eq.~(\ref{Wick}) one gets
\begin{equation}
\label{chi1}
E(M)=\frac12 \trace{(AM)} - \sum_{p<q<r<s} U_{pqrs} \pff{M[p,q,r,s]}.
\end{equation}
We initialize $M=RM_{vac}R^T$, where $R\in SO(2n)$ is chosen at random.
The energy $E(M)$ is then minimized 
by a greedy random walk algorithm.
Each step of the walk generates a random rotation $R\in SO(2n)$
such that $\|R-I\|\le \theta$ for some specified angle $\theta$.
A step is deemed successful if it decreases the value of the objective function,
$E(RMR^T)<E(M)$. In this case $M$ is replaced by $RMR^T$.
Otherwise $M$ remains unchanged. 
The angle $\theta$ is adjusted  
by keeping track of the fraction of successful steps $f$.
We fix some threshold value $f_0$ and
adjust the angle as $\theta\gets \theta (1+ \epsilon)$
if $f\ge f_0$ and $\theta\gets \theta (1-\epsilon)$ if $f<f_0$.
Here  $\epsilon$ is a small constant. 
We empirically found that $\epsilon=0.2$ and $f_0=0.1$ works reasonably well.

\subsection{Rank-$2$ algorithm}

Define a $2\times 2$ Gram matrix $G$ such that
$G_{a,b}=\langle \phi_a|\phi_b\rangle$. We can always
choose the relative phase of $\phi_1$ and $\phi_2$ such that $G$
is a real matrix. Then 
\begin{equation}
\label{rank2Gram}
G=\left[\ba{cc} 1 & g \\ g & 1 \\ \ea \right],
\quad \mbox{where} \quad g=\langle \phi_1|\phi_2\rangle = 
2^{-n/2} \det{(M_1+M_2)}^{1/4}.
\end{equation}
Here we used Eq.~(\ref{ip2}).
Using the generalized Wick's theorem Eq.~(\ref{Wick3}) one gets 
\begin{equation}
\label{rank2energy1}
\langle \phi_a | H_0|\phi_a\rangle = \frac12 \trace{(AM_a)} 
\end{equation}
and
\begin{equation}
\label{rank2energy2}
\langle \phi_2 | H_0|\phi_1\rangle = \frac{g}2 \trace{(A \Delta)}, 
\end{equation}
where
\[
\Delta= (-2I + iM_1 - iM_2)(M_1+M_2)^{-1}.
\]
Using Eqs.~(\ref{Wick},\ref{Wick3}) again one gets
\begin{equation}
\label{rank2energy3}
\langle \phi_a|H_{imp} |\phi_a\rangle = -\sum_{p<q<r<s} U_{pqrs} \cdot  \pff{M_a[p,q,r,s]},
\end{equation}
\begin{equation}
\label{rank2energy4}
\langle \phi_2|H_{imp} |\phi_1\rangle=-g\sum_{p<q<r<s} U_{pqrs} \pff{\Delta[p,q,r,s]}.
\end{equation}

Let $F_0$ and $F_{imp}$ be $2\times 2$ hermitian operators with matrix elements
\begin{equation}
\label{Fs}
\langle a|F_0|b\rangle = \langle \phi_a|H_0|\phi_b\rangle
\quad \mbox{and} \quad
\langle a | F_{imp} |b\rangle = \langle \phi_a|H_{imp} |\phi_b\rangle.
\end{equation}
Here $a,b=1,2$.
Standard linear algebra implies that the objective function
$E(M_1,M_2)$ from Eq.~(\ref{objective}) coincides with the
minimum eigenvalue of a $2\times 2$ matrix 
\[
G^{-1/2}(F_0+F_{imp}) G^{-1/2}.
\]
Equivalently, $E(M_1,M_2)$ is the minimum eigenvalue of 
a generalized eigenvalue problem $(F_0+F_{imp})\psi = \lambda G \psi$.
Thus the above formulas allow one to compute $E(M_1,M_2)$ in time $O(n^3)$.
The most time consuming steps are clearly computing the inverse and the determinant
of $M_1+M_2$.
We then minimize $E(M_1,M_2)$ over $M_1,M_2$ using the same 
greedy random walk algorithm 
as in the rank-$1$ case. The only difference is that at each step
we decide whether to rotate $M_1$ or $M_2$ at random. 

\subsection{Arbitrary rank}

Suppose now that $\chi\ge 3$. 
Let $\phi_0\in \calG_n$ be some fixed reference state
used to compute inner products, see Section~\ref{sec:inner}.
We chose $\phi_0$ as a ground state of $H_0$, but it could be 
an arbitrary Gaussian state. 
For each $a=1,\ldots, \chi$ define 
\begin{equation}
\label{ga}
g_a=\langle \phi_0|\phi_a\rangle.
\end{equation}
We can always choose the overall phase of $\phi_a$ such that $g_a$
are real and $g_a\ge 0$. 
Then Eq.~(\ref{ip2}) implies 
\begin{equation}
\label{ga1}
g_a=2^{-n/2} \det{(M_0+M_a)}^{1/4},
\end{equation}
where $M_0$ is the covariance matrix of the reference state.
Define a Gram matrix $G$ with matrix elements
$G_{a,b}=\langle \phi_a|\phi_b\rangle$, where $1\le a,b\le \chi$.
By normalization, one has $G_{a,a}=1$.
From Eq.~(\ref{ip3}) one gets
\begin{equation}
\label{rank3Gram}
G_{b,a}=\frac{2^n g_a g_b }{\pff{\Delta^{a,b}+M_0}}
\end{equation}
where
\[
\Delta^{a,b}=(-2I + iM_a - iM_b)(M_a+M_b)^{-1}.
\]
Note that $\Delta^{a,a}=M_a$ and $\Delta^{b,a}=(\Delta^{a,b})^*$.
Using the generalized Wick's theorem Eq.~(\ref{Wick3}) one gets 
\begin{equation}
\label{rank3energy2}
\langle \phi_b | H_0|\phi_a\rangle = \frac{G_{b,a}}2 \trace{(A \Delta^{a,b})}
\end{equation}
and
\begin{equation}
\label{rank3energy4}
\langle \phi_b|H_{imp} |\phi_a\rangle=-G_{b,a} \sum_{p<q<r<s} U_{pqrs} \pff{\Delta^{a,b}[p,q,r,s]}
\end{equation}
for all $1\le a,b\le \chi$.
Now we can compute $E(M_1,\ldots,M_\chi)$ by solving a
generalized eigenvalue problem $(F_0+F_{imp})\psi=\lambda G\psi$,
where now $F_0$ and $F_{imp}$ are $\chi\times \chi$
hermitian operators defined by Eq.~(\ref{Fs})
and choosing the minimum eigenvalue $\lambda$.

We then minimize $E(M_1,\ldots,M_\chi)$ using the same 
greedy random walk algorithm as above.
At each step of the walk
we decide which matrix $M_a$ to rotate  at random. 
After each rotation one has to compute the new coefficient $g_a$ and 
properly update matrices $G,\Delta^{a,b},F_0,F_{imp}$.
This takes time $O(\chi n^3)$.

\subsection{SDP lower bound on the ground energy}
\label{sec:numerics}

Let $e_g$ be the exact ground energy of the full Hamiltonian $H$. 
Clearly, the optimal value found by the variational rank-$\chi$ algorithm gives
an upper bound on $e_g$. Can we determine how 
good is this upper bound without computing $e_g$ exactly~?
In this section we describe a simple extension of the rank-$\chi$ algorithm
that gives both upper and lower bounds on $e_g$.
This enables us to benchmark the algorithm for large system sizes 
when  exact diagonalization of $H$ is not feasible. 
We shall obtain a lower bound on $e_g$ using a  semi-definite program (SDP) which is closely related to the 
2-RDM method commonly used in the quantum chemistry~\cite{mazziotti2005variational}.

Let  $\herm{d}$ be the set of all hermitian matrices of size $d\times d$.
The SDP will depend on an integer $N$ and a list of operators
$C_1,\ldots,C_N$ acting on the Fock space $\calH_n$.  
These operators must satisfy only two conditions. First, 
we require that
the full Hamiltonian $H$  can be written as
\begin{equation}
\label{hatH}
H=\sum_{p,q=1}^N H^{(1)}_{p,q} C_p^\dag C_q
\end{equation}
for some coefficients $H^{(1)}_{p,q}$.
Second, we require that  the identity operator $I$ on $\calH_n$
can be written as
\begin{equation}
\label{hatI}
I=\sum_{p,q=1}^N I^{(1)}_{p,q} C_p^\dag C_q
\end{equation}
for some coefficients $I^{(1)}_{p,q}$. 
Define a space of linear dependencies 
\begin{equation}
\label{calL}
\calL=\{ K \in \herm{N} \, : \, \sum_{p,q=1}^N K_{p,q}\, C_p^\dag C_q =0\}.
\end{equation}
Clearly, $\calL$ is a linear subspace (over $\RR$).
Let $d=\dim{(\calL)}$ and fix a basis $K^1,\ldots,K^d\in \calL$.
Consider the following semi-definite program:
\begin{equation}
\label{SDPprimal}
\ba{rcl}
\mbox{\bf variable} & : & X\in \herm{N} \\
\mbox{\bf minimize} & : & \trace{(H^{(1)}X)} \\
\mbox{\bf subject to} & : & X\ge 0 \\
&& \trace{(I^{(1)} X)}=1 \\
&&   \trace{(K^\alpha X)}=0 \quad \mbox{for $1\le \alpha\le d$.}\\
\ea
\end{equation}
Here we view the coefficients $H^{(1)}_{p,q}$ and $I^{(1)}_{p,q}$ as matrix elements
of hermitian $N\times N$ matrices $H^{(1)}$ and $I^{(1)}$.
\begin{lemma}
\label{lemma:SDP}
For any normalized state $\psi \in \calH_n$ there exists $X\in \herm{N}$
such that $X$ is a  feasible solution of the SDP defined above and
\begin{equation}
\label{eq1}
\langle \psi|H|\psi\rangle = \trace{(H^{(1)}X)}.
\end{equation}
\end{lemma}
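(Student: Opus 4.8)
The plan is to take for $X$ the natural ``generalized one-body reduced density matrix'' of $\psi$ with respect to the list $C_1,\dots,C_N$. Concretely, I would define
\[
X_{p,q}=\langle \psi|C_q^\dagger C_p|\psi\rangle, \qquad 1\le p,q\le N,
\]
and then verify the four requirements in turn: that $X\in\herm{N}$, that $X\ge 0$, that the two linear constraints $\trace{(I^{(1)}X)}=1$ and $\trace{(K^\alpha X)}=0$ hold, and that the objective identity $\trace{(H^{(1)}X)}=\langle\psi|H|\psi\rangle$ of Eq.~\eqref{eq1} holds.

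Hermiticity is immediate from $\overline{X_{p,q}}=\overline{\langle\psi|C_q^\dagger C_p|\psi\rangle}=\langle\psi|C_p^\dagger C_q|\psi\rangle=X_{q,p}$. For positivity I would observe that for any $v\in\CC^N$,
\[
v^\dagger X v=\sum_{p,q=1}^N \bar v_p v_q\,\langle C_q\psi|C_p\psi\rangle=\Big\|\sum_{p=1}^N \bar v_p\,C_p\psi\Big\|^2\ge 0 ,
\]
which is the only step that is not pure bookkeeping. The three linear identities all follow from a single computation: with the convention $\trace{(AB)}=\sum_{i,j}A_{ij}B_{ji}$ one gets, for any $M\in\herm{N}$,
\[
\trace{(MX)}=\sum_{p,q}M_{p,q}X_{q,p}=\sum_{p,q}M_{p,q}\langle\psi|C_p^\dagger C_q|\psi\rangle=\Big\langle\psi\Big|\,\sum_{p,q}M_{p,q}C_p^\dagger C_q\,\Big|\psi\Big\rangle .
\]
Applying this with $M=I^{(1)}$ and Eq.~\eqref{hatI} gives $\trace{(I^{(1)}X)}=\langle\psi|\psi\rangle=1$; with $M=K^\alpha\in\calL$ and the definition Eq.~\eqref{calL} it gives $\trace{(K^\alpha X)}=\langle\psi|0|\psi\rangle=0$; and with $M=H^{(1)}$ and Eq.~\eqref{hatH} it gives $\trace{(H^{(1)}X)}=\langle\psi|H|\psi\rangle$, which is Eq.~\eqref{eq1}. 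Thus $X$ is feasible and achieves the stated objective value.

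The only point requiring care is fixing the index/transpose convention consistently at the outset: whether $X_{p,q}$ should be $\langle\psi|C_q^\dagger C_p|\psi\rangle$ or $\langle\psi|C_p^\dagger C_q|\psi\rangle$ depends on the convention adopted for the pairing $\trace{(MX)}$, and one must likewise keep the ``anti-linear in the first slot'' convention of the inner product straight in the positivity step. Neither of these is a genuine obstacle — once the bookkeeping is pinned down the proof is only a few lines — so I do not anticipate any hard step here.
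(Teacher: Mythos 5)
Your proposal is correct and coincides with the paper's proof: the paper also takes $X_{p,q}=\langle\psi|C_q^\dagger C_p|\psi\rangle$ and simply asserts that "a direct inspection" verifies feasibility and Eq.~\eqref{eq1}. You have just written out that inspection explicitly (hermiticity, positivity as a Gram matrix, and the three trace identities), with the index conventions handled consistently.
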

\begin{proof}
Choose $X\in \herm{N}$ such that $X_{p,q}=\langle \psi|C_q^\dag C_p|\psi\rangle$
for all $1\le p,q\le N$. A direct inspection shows that
$X$ obeys all constraints of the SDP and Eq.~(\ref{eq1}) holds.
\end{proof}
The lemma implies that the optimal value of the SDP provides a lower bound on the
ground energy $e_g$. 
One can easily check that this lower bound is tight for
 free fermion Hamiltonians $H$ defined in Eq.~(\ref{bath1})
if one chooses $N=2n$ and $C_p=c_p$ for $p=1,\ldots,2n$.
Here $c_p$ are the Majorana operators defined in Eq.~(\ref{Majorana}).
Furthermore, one can easily check that the lower bound is tight for an arbitrary Hamiltonian if
one chooses $N=4^n$ and the list of operators $\{C_p\}$ 
includes all  Majorana monomials $c(x)$ with $x\in\{0,1\}^{2n}$.
Our choice of the operators $C_p$ for quantum impurity models
will be a mixture of these two extreme cases.

Let $X_{opt}$ be the optimal solution of the program Eq.~(\ref{SDPprimal}).
Standard duality arguments show that $\trace{(H^{(1)}X_{opt})}\ge y_0$
whenever
\begin{equation}
 y_0 I^{(1)} + \sum_{\alpha=1}^d y_\alpha K^\alpha \le H^{(1)} \label{SDPdual}
\end{equation}
for some  $y_0,y_1,\ldots,y_d\in \RR$. 
Here $y_0$ and $y_\alpha$ are Lagrangian multipliers for the constraints
$\trace{(I^{(1)} X)}=1$ and $\trace{(K^\alpha X)}=0$.
Thus any feasible solution of Eq.~(\ref{SDPdual}) gives a lower bound
on the ground energy, namely, $e_g\ge y_0$.

Consider  an impurity model $H=H_0+H_{imp}$ such that $H_{imp}$ acts
on the modes $c_1,\ldots,c_m$.
How do we choose $N$ and the operators $C_p$ to define the above SDP~?
Let $\psi$ be the exact ground state of $H$.
By Corollary~\ref{cor:gapped}, one can localize all bath excitations present 
in $\psi$ on a small subset of modes by some Gaussian unitary $U$, that is, 
\begin{equation}
\label{approx1}
\| \psi - U|\phi\otimes 0^{n-k}\rangle \| \le \delta.
\end{equation}
Here $\delta$ is a small precision parameter
and $k$ is determined by Eqs.~(\ref{chi}). 
Suppose first that we can guess the integer $k$ and the Gaussian unitary operator $U$ that
achieve the desired approximation precision $\delta$.
Empirically, we found that the following choice of the operators $C_p$
achieves a very good lower bound on $e_g$ at a reasonable computational cost.  
First, define an extended list of Majorana operators $d_1,\ldots,d_{m+2n}$ such that 
\begin{equation}
\label{dops1}
d_1=c_1, \qquad d_2=c_2, \qquad  \ldots \qquad d_m=c_m 
\end{equation}
and
\begin{equation}
\label{dops2}
d_{m+1}=Uc_1 U^\dag,  \qquad  d_{m+2}=Uc_2U^\dag, \qquad \ldots  \qquad d_{m+2n}=Uc_{2n} U^\dag.
\end{equation}
Note that the state $U|\phi\otimes 0^{n-k}\rangle$ in Eq.~(\ref{approx1}) can be obtained from the vacuum
$|0^n\rangle$ by some combination of the operators $d_{m+1},\ldots,d_{m+2k}$.
Let us choose   
\begin{equation}
\label{SDPN}
N= 2n+{ m+2k \choose 3}
\end{equation}
and choose the operators $C_1,\ldots,C_N$ as 
\begin{equation}
\label{Clist}
\{ C_p\} =\{ d_{m+1} ,\ldots,d_{m+2n}  \}  \cup  \{d(x)\, : \, x\in \{0,1\}^{m+2k}, \quad |x|=3 \}.  
\end{equation}
Here  $d(x)$ denotes the product of all operators $d_j$ with $x_j=1$.
One can easily check that the list Eq.~(\ref{Clist}) satisfies condition Eq.~(\ref{hatH})
for  any impurity Hamiltonian $H_{imp}$ with   quartic interactions acting on the first $m$ modes
$c_1,\ldots,c_m$. It is also clear that this list satisfies condition Eq.~(\ref{hatI})
since, for example, $I=d_{m+1}^\dag d_{m+1}$.

In practice, we can  obtain a good guess  of $k$ and $U$ in Eq.~(\ref{approx1})
by examining the optimal variational state $\psi_{opt}\in \calH_n$ found by the rank-$\chi$ algorithm.
Namely, consider the covariance matrix 
\[
M_{p,q}=(-i/2) \langle \psi_{opt} | c_p c_q - c_q c_p |\psi_{opt}\rangle.
\]
Using the generalized Wick's theorem Eq.~(\ref{Wick3}) one can compute $M$
in time $O(\chi^2 n^3)$ since $\psi_{opt}$ is specified as a superposition of $\chi$
Gaussian states. Let eigenvalues of $M$ be $\pm i s_j$, where 
\[
0\le s_1\le s_2 \le \ldots\le s_n\le 1
\]
are singular values of $M$. 
If $\psi_{opt}$ were the exact ground state, Theorem~\ref{thm:2} 
implies\footnote{In this paper we use two versions of a covariance matrix --
one for Majorana operators, see Eq.~(\ref{M}), and the other for creation-annihilation
operators, see Eq.~(\ref{Cmat}). If the latter has eigenvalues $\sigma_j\in [0,1]$
then the former has eigenvalues $\pm i(1-2\sigma_j)$, where $j=1,\ldots,n$.}
that the singular values $s_j$ rapidly approach one as $j$
increases. We found empirically that this is also true for the optimal variational
state $\psi_{opt}$. Let $k$ be the smallest integer  such that $s_j\ge 1-\epsilon$ for
all $j\ge k$, where $\epsilon$ is some fixed  precision parameter. In our simulations we choose
$\epsilon=10^{-4}$. One can always transform $M$ into a block-diagonal form
such that 
\[
R^T MR  = \bigoplus_{j=1}^n \left[ \ba{cc} 0 & s_j \\ -s_j & 0 \\ \ea \right]
\]
for some $R\in SO(2n)$. Let $U$ be  the Gaussian
unitary that implements the rotation $R$, see Eq.~(\ref{UvsR}). Then
\[
\langle \psi_{opt}U |a_j^\dag a_j|U^\dag \psi_{opt}\rangle = \frac12 (1-s_j) \le \epsilon/2,
\]
that is, all bath excitations present in the state $U^\dag |\psi_{opt}\rangle$ are "localized"
on the first $k$ modes and we can write $|\psi_{opt}\rangle\approx U|\phi\otimes 0^{n-k}\rangle$
for some $k$-mode state $\phi$.

 We define the SDP  according to
Eqs.~(\ref{dops1},\ref{dops2},\ref{SDPN},\ref{Clist}), where $U\in \calC_n$ is the Gaussian unitary constructed above.
This defines the SDP lower bound on the ground energy $e_g$ 
for a given variational state $\psi_{opt}$ and a precision parameter $\epsilon$.

To benchmark the variational algorithm and the SDP method we chose the single impurity Anderson
model~\cite{Anderson61}. This model has a Hamiltonian $H=H_0+H_{imp}$, where
\[
H_{imp}=U a_1^\dag a_1 a_2^\dag a_2 = \frac{U}4\left( - c_1 c_2 c_3 c_4  + ic_1 c_2 + ic_3 c_4 + I \right)
\]
for some $U\ge 0$. This corresponds to the impurity of size $m=4$
(we note  that $m=4$ is the smallest impurity size that gives rise to a non-trivial impurity model
since any even  Hamiltonian acting on $m\le 3$ Majorana modes must be quadratic).
For simplicity, we chose  $H_0$ as the critical Majorana chain~\cite{kitaev2001unpaired}
 with periodic boundary conditions:
\[
H_0=i\sum_{j=1}^{2n} c_j c_{j+1}, \qquad c_{2n+1}\equiv c_1.
\]
It is well-known~\cite{schultz1964two,Pfeuty1970} that  $H_0$ has a unique ground state with parity $P=1$
and the spectral gap proportional to $n^{-1}$. 

Let $e_g$ be the ground energy of $H$. For each choice of  $n$ and $U$ we numerically
computed two numbers $e_g^{+}$ and $e_g^{-}$ such that 
\[
e_g^{-}\le e_g\le e_g^{+}.
\]
The upper bound 
$e_g^{+}$ is the minimum energy found by the rank-$\chi$ variational algorithm.
In our simulations we only used  $\chi=1$ and $\chi=2$.
The lower bound  $e_g^{-}$ was computed 
using a combination of the rank-$2$ variational algorithm and  the SDP method with $k=4$
as described above. 
The corresponding SDP has size  $N=2n+220$,   see Eq.~(\ref{SDPN},\ref{Clist}).
Figure~\ref{fig:SIAM} shows the gap $e_g^{+}-e_g^{-}$ as a function of the system size $n$
for several values of $U$.    
In all cases the rank-$2$ algorithm approximates the ground energy
within an additive error less than $2\times 10^{-6}$
and we were able to estimate $e_g$ within the first eight significant digits,
see Eq.~(\ref{eq:table}).

\begin{figure}[htbp]
\centerline{\includegraphics[width=14cm]{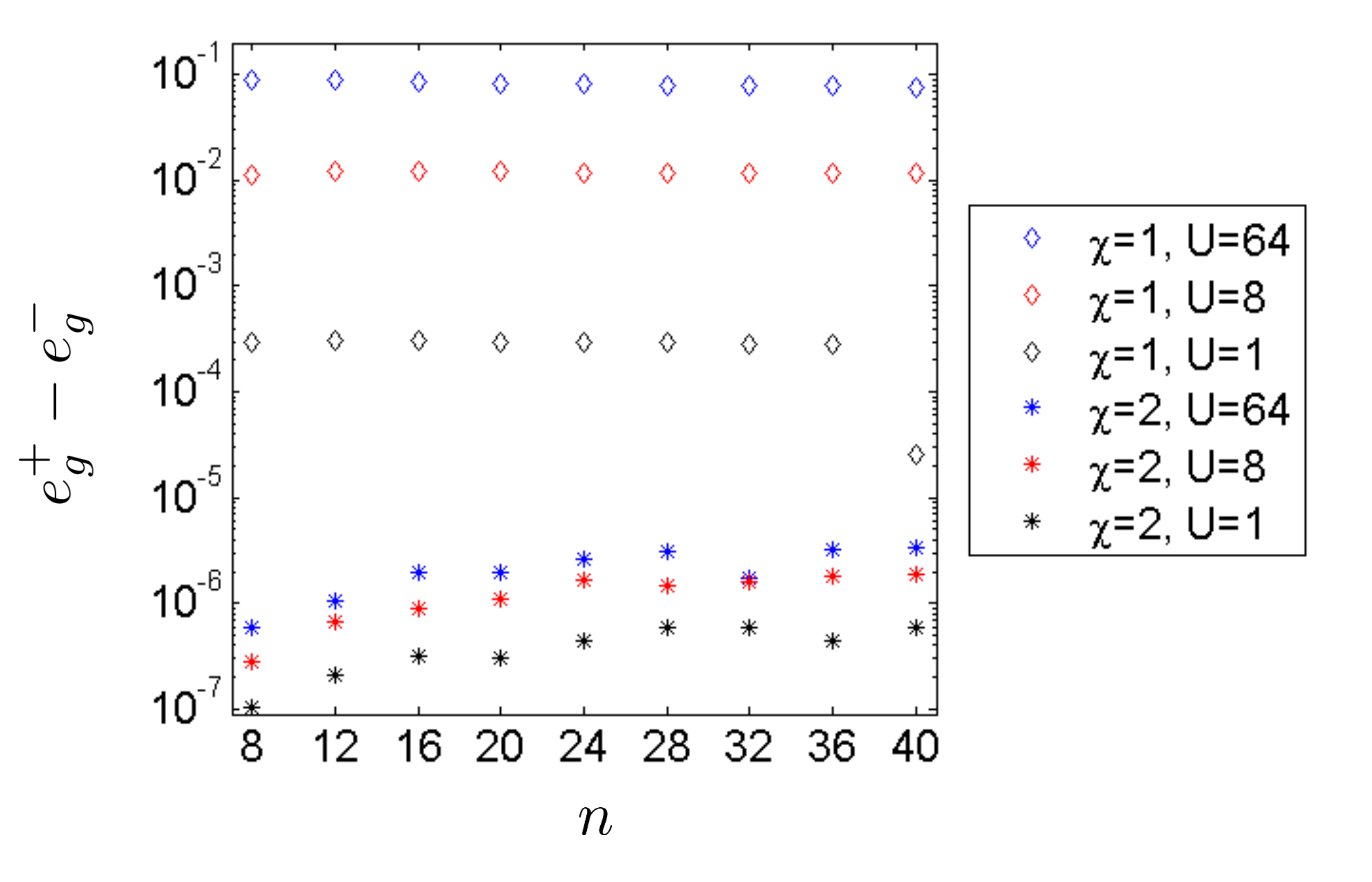}}
\caption{Separation between the upper and lower bounds $e_g^{\pm}$
on the ground energy of the single impurity Anderson model.
}
\label{fig:SIAM}
\end{figure}

\begin{equation}
\label{eq:table}
\begin{tabular}{r|c|c|c|}
  & $U=1$ & $U=8$ & $U=64$ \\
\hline
 $n=8$ & $-10.00932(5)$ & $-9.89010(8)$& $-9.81220(9)$ \\
$n=16$ & $-20.25487(5)$ & $-20.11633(4)$ & $-20.02426(8)$ \\
$n=24$ & $-30.46084(8)$ & $-30.31627(5)$ & $-30.21953(6)$ \\
$n=32$ & $-40.65683(1)$ & $-40.50931(8)$ & $-40.41024(6)$ \\
$n=40$ & $-50.84854(5)$ & $-50.69954(7)$  & $-50.59907(2)$ \\
\end{tabular}
\end{equation}

\bibliographystyle{unsrt}
\bibliography{Impurity}

\begin{thebibliography}{10}

\bibitem{Anderson61}
P.~W. Anderson.
\newblock Localized magnetic states in metals.
\newblock {\em Phys. Rev.}, 124:41--53, 1961.

\bibitem{Kondo64}
Jun Kondo.
\newblock Resistance minimum in dilute magnetic alloys.
\newblock {\em Progress of Theoretical Physics}, 32(1):37--49, 1964.

\bibitem{Wilson75}
Kenneth~G Wilson.
\newblock The renormalization group: Critical phenomena and the {K}ondo
  problem.
\newblock {\em Reviews of Modern Physics}, 47(4):773, 1975.

\bibitem{K01}
L.~Kouwenhoven and L.~Glazman.
\newblock Revival of the {K}ondo effect.
\newblock {\em Physics World}, 14(1):33, 2001.

\bibitem{deHaas36}
W.J. {De Haas} and G.J {Van Den Berg}.
\newblock The electrical resistance of gold and silver at low temperatures.
\newblock {\em Physica}, 3(6):440--449, 1936.

\bibitem{GeorgesDMFT}
Antoine Georges.
\newblock Strongly correlated electron materials: Dynamical mean field theory
  and electronic structure.
\newblock {\em AIP Conference Proceedings}, 715(1):3--74, 2004.

\bibitem{kotliar2006electronic}
G.~Kotliar, S.~Savrasov, K.~Haule, V.~Oudovenko, O.~Parcollet, and
  C.~Marianetti.
\newblock Electronic structure calculations with dynamical mean-field theory.
\newblock {\em Reviews of Modern Physics}, 78(3):865, 2006.

\bibitem{Bauer15}
B.~{Bauer}, D.~{Wecker}, A.~J. {Millis}, M.~B. {Hastings}, and M.~{Troyer}.
\newblock {Hybrid quantum-classical approach to correlated materials}, 2015.
\newblock arXiv:1510.03859.

\bibitem{Kreula16}
J.~M {Kreula}, L.~{Garc{\'{\i}}a-{\'A}lvarez}, L.~{Lamata}, S.~R {Clark},
  E.~{Solano}, and D.~{Jaksch}.
\newblock {Few-qubit quantum-classical simulation of strongly correlated
  lattice fermions}.
\newblock {\em arXiv:1606.04839}, 2016.

\bibitem{QHCreview}
S.~Gharibian, Y.~Huang, Z.~Landau, and S.~W. Shin.
\newblock Quantum {H}amiltonian complexity.
\newblock {\em Found. Trends Theor. Comput. Sci.}, 10(3):159--282, 2015.

\bibitem{KitaevBook}
A.~Yu. Kitaev, A.~H. Shen, and M.~N. Vyalyi.
\newblock {\em Classical and Quantum Computation}.
\newblock American Mathematical Society, Boston, MA, USA, 2002.

\bibitem{schuch2009computational}
N.~Schuch and F.~Verstraete.
\newblock Computational complexity of interacting electrons and fundamental
  limitations of density functional theory.
\newblock {\em Nature Physics}, 5(10):732--735, 2009.

\bibitem{bini1998computing}
D.~Bini and V.~Y. Pan.
\newblock Computing matrix eigenvalues and polynomial zeros where the output is
  real.
\newblock {\em SIAM Journal on Computing}, 27(4):1099--1115, 1998.

\bibitem{wiegmann1983exact}
P.B. Wiegmann and A.M. Tsvelick.
\newblock Exact solution of the {A}nderson model: {I}.
\newblock {\em J. of Physics C: Solid State Physics}, 16(12):2281, 1983.

\bibitem{kawakami1981exact}
N.~Kawakami and A.~Okiji.
\newblock Exact expression of the ground-state energy for the symmetric
  {A}nderson model.
\newblock {\em Phys. Lett. A}, 86(9):483--486, 1981.

\bibitem{zolotarev1877application}
E.~Zolotarev.
\newblock Application of elliptic functions to questions of functions deviating
  least and most from zero.
\newblock {\em Zap. Imp. Akad. Nauk. St. Petersburg}, 30(5):1--59, 1877.

\bibitem{BC14}
D.~J. Brod and A.~M. Childs.
\newblock The computational power of matchgates and the {XY} interaction on
  arbitrary graphs.
\newblock {\em Quant. Inf. \& Comp.}, 14(11-12):901--916, 2014.

\bibitem{KC10}
C.~Kraus and J.I. Cirac.
\newblock Generalized {H}artree-{F}ock theory for interacting fermions in
  lattices: numerical methods.
\newblock {\em New J. of Phys.}, 12(11):113004, 2010.

\bibitem{AKL14}
I.~{Arad}, T.~{Kuwahara}, and Z.~{Landau}.
\newblock {Connecting global and local energy distributions in quantum spin
  models on a lattice}.
\newblock {\em Journal of Statistical Mechanics: Theory and Experiment},
  3:033301, March 2016.
\newblock arXiv:1406.3898.

\bibitem{1Dgapped}
Z.~{Landau}, U.~{Vazirani}, and T.~{Vidick}.
\newblock A polynomial time algorithm for the ground state of one-dimensional
  gapped local {H}amiltonians.
\newblock {\em Nature Physics}, 11(7):566--569, 2015.

\bibitem{QNP}
D.~{Aharonov} and T.~{Naveh}.
\newblock {Quantum NP - A Survey}.
\newblock {\em arXiv:quant-ph/0210077}, 2002.

\bibitem{BG16}
S.~Bravyi and D.~Gosset.
\newblock Improved classical simulation of quantum circuits dominated by
  clifford gates.
\newblock {\em Phys. Rev. Lett.}, 116:250501, Jun 2016.

\bibitem{terhal2002classical}
B.~M. Terhal and D.~P. DiVincenzo.
\newblock Classical simulation of noninteracting-fermion quantum circuits.
\newblock {\em Phys. Rev. A}, 65(3):032325, 2002.

\bibitem{bravyi2004lagrangian}
S.~Bravyi.
\newblock Lagrangian representation for fermionic linear optics.
\newblock {\em Quant. Inf. and Comp.}, 5(3):216--238, 2005.

\bibitem{Wimmer2011}
M.~Wimmer.
\newblock Efficient numerical computation of the {P}faffian for dense and
  banded skew-symmetric matrices.
\newblock {\em arXiv:1102.3440}, 2011.

\bibitem{rubow2011factorization}
J.~Rubow and U.~Wolff.
\newblock A factorization algorithm to compute {P}faffians.
\newblock {\em Computer Physics Communications}, 182(12):2530--2532, 2011.

\bibitem{schliemann2001quantum}
J.~Schliemann, J.~I. Cirac, M.~Ku{\'s}, M.~Lewenstein, and D.~Loss.
\newblock Quantum correlations in two-fermion systems.
\newblock {\em Physical Review A}, 64(2):022303, 2001.

\bibitem{kitaev2001unpaired}
A.~Yu. Kitaev.
\newblock Unpaired {M}ajorana fermions in quantum wires.
\newblock {\em Physics-Uspekhi}, 44(10S):131, 2001.

\bibitem{lowdin1955quantum}
Per-Olov L{\"o}wdin.
\newblock Quantum theory of many-particle systems {I}.
\newblock {\em Physical Review}, 97(6):1474, 1955.

\bibitem{Kennedy2004approximation}
A.~Kennedy.
\newblock Approximation theory for matrices.
\newblock {\em Nuclear Physics B-Proceedings Supplements}, 128:107--116, 2004.

\bibitem{chiu2002note}
T.-W. Chiu, T.-H. Hsieh, C.-H. Huang, and T.-R. Huang.
\newblock Note on the {Z}olotarev optimal rational approximation for the
  overlap {D}irac operator.
\newblock {\em Phys. Rev. D}, 66(11):114502, 2002.

\bibitem{gonvcar1969zolotarev}
A.A. Gon{\v{c}}ar.
\newblock Zolotarev problems connected with rational functions.
\newblock {\em Sbornik: Mathematics}, 7(4):623--635, 1969.

\bibitem{nakatsukasa2015computing}
Y.~Nakatsukasa and R.~W. Freund.
\newblock Computing fundamental matrix decompositions accurately via the matrix
  sign function in two iterations: The power of {Z}olotarev's functions.
\newblock
  \url{http://eprints.ma.man.ac.uk/2414/01/papersirevrevrevf_tosiam.pdf}, 2015.

\bibitem{elliptic}
G.~D. Anderson, M.~K. Vamanamurthy, and M.~Vuorinen.
\newblock Functional inequalities for complete elliptic integrals and their
  ratios.
\newblock {\em SIAM Journal on Mathematical Analysis}, 21(2):536--549, 1990.

\bibitem{newman1964rational}
D.J. Newman.
\newblock Rational approximation to $| x|$.
\newblock {\em The Michigan Mathematical Journal}, 11(1):11--14, 1964.

\bibitem{Electronquantum}
D.~{Wecker}, M.~B. {Hastings}, N.~{Wiebe}, B.~K. {Clark}, C.~{Nayak}, and
  M.~{Troyer}.
\newblock {Solving strongly correlated electron models on a quantum computer}.
\newblock {\em Phys. Rev. A}, 92(6):062318, 2015.

\bibitem{GolubVL}
G.H. Golub and C.F. Van~Loan.
\newblock {\em Matrix Computations}.
\newblock Johns Hopkins Studies in the Mathematical Sciences. Johns Hopkins
  University Press, 1996.

\bibitem{nielsenchuang}
M.A. Nielsen and I.L. Chuang.
\newblock {\em Quantum Computation and Quantum Information}.
\newblock Cambridge Series on Information and the Natural Sciences. Cambridge
  University Press, 2000.

\bibitem{CKS15}
A.~M. {Childs}, R.~{Kothari}, and R.~D. {Somma}.
\newblock {Quantum linear systems algorithm with exponentially improved
  dependence on precision}.
\newblock {\em arXiv:1511.02306}, 2015.

\bibitem{AdiabaticSZK}
D.~Aharonov and A.~Ta-Shma.
\newblock Adiabatic quantum state generation and statistical zero knowledge.
\newblock In {\em Proceedings of the Thirty-fifth Annual ACM Symposium on
  Theory of Computing}, STOC '03, pages 20--29, New York, NY, USA, 2003. ACM.

\bibitem{mazziotti2005variational}
D.~A. Mazziotti.
\newblock Variational two-electron reduced density matrix theory for
  many-electron atoms and molecules: Implementation of the spin-and
  symmetry-adapted {T}2 condition through first-order semidefinite programming.
\newblock {\em Phys. Rev. A}, 72(3):032510, 2005.

\bibitem{schultz1964two}
T.~D. Schultz, D.~C. Mattis, and E.H. Lieb.
\newblock Two-dimensional {I}sing model as a soluble problem of many fermions.
\newblock {\em Reviews of Modern Physics}, 36(3):856, 1964.

\bibitem{Pfeuty1970}
Pierre Pfeuty.
\newblock The one-dimensional {I}sing model with a transverse field.
\newblock {\em Annals of Physics}, 57(1):79--90, 1970.

\bibitem{francesco2012conformal}
Philippe Francesco, Pierre Mathieu, and David S{\'e}n{\'e}chal.
\newblock {\em Conformal field theory}.
\newblock Springer Science \& Business Media, 2012.

\bibitem{CLLW15}
R.~{Cleve}, D.~{Leung}, L.~{Liu}, and C.~{Wang}.
\newblock {Near-linear constructions of exact unitary 2-designs}.
\newblock {\em arXiv:1501.04592}, 2015.

\bibitem{K08}
Thomas Koshy.
\newblock {\em Catalan Numbers with Applications}.
\newblock Oxford University Press, USA, 2008.

\end{thebibliography}

\appendix

\section{Proof of the inner product formulas}
\label{app:inner}

In this appendix we prove the inner product formulas and the generalized Wick's theorem
of Section~\ref{sec:inner}.
Let $A$ be an arbitrary operator acting on the Fock space $\calH_n$.
Expanding $A$ in the basis of $4^n$ Majorana monomials $c(x)$,
$x\in \{0,1\}^{2n}$, 
one gets 
\[
A=\sum_x A_x c(x), \quad c(x)\equiv c_1^{x_1} \cdots c_{2n}^{x_{2n}}
\]
for some coefficients $A_x$.
Here and in the rest of this section all sums over binary
strings have range  $\{0,1\}^{2n}$. An operator $A$ as above can be described by a
generating function 
\[
A(\theta)=\sum_x A_x \theta(x), \quad \theta(x)\equiv \theta_1^{x_1} \cdots \theta_{2n}^{x_{2n}}.
\]
where $\theta_1,\ldots,\theta_{2n}$ are  formal variables obeying commutation
rules of the Grassmann algebra, that is, 
\[
\theta_p^2=0, \quad \theta_p \theta_q =- \theta_q \theta_p.
\]
An integral over Grassmann variables $\int D\theta$  is a linear functional defined by
\[
\int D\theta\cdot  \theta(x) =  \left\{ \ba{rcl}
1 &\mbox{if} & x=(11\ldots 1)\equiv \overline{1}, \\
0 & & \mbox{otherwise} \\
\ea \right.
\]
The action of $\int D\theta$ extends by linearity to an arbitrary function $A(\theta)$
by expanding $A(\theta)$ in the basis of Grassmann monomials $\theta(x)$. 
More details on the formalism of Grassmann variables can be found in~\cite{francesco2012conformal,bravyi2004lagrangian}.
Let $M$ be an anti-symmetric matrix of size $2n$. 
We shall consider $\theta=(\theta_1,\ldots,\theta_{2n})$ as a column vector and 
write $\theta^T$ for the corresponding row vector. Then  
\[
\theta^T M \theta \equiv \sum_{p,q=1}^{2n} M_{p,q} \theta_p \theta_q.
\]
Suppose $\phi\in \calG_n$ is a Gaussian state with a covariance matrix $M$
and let $A=|\phi\rangle \langle \phi|$.
Wick's theorem Eq.~(\ref{Wick}) can be rephrased as 
\begin{equation}
\label{generating}
A(\theta)= 2^{-n} \exp{\left[  -\frac{i}2 \theta^T M \theta \right]}.
\end{equation}
We shall need 
the following well-known formulas for Gaussian integrals:
\begin{equation}
\label{Gauss1}
\int D\theta \cdot \exp{\left[  \frac12 \theta^T M \theta \right]}=\pff{M},
\end{equation}
\begin{equation}
\label{Gauss2}
\int D\eta \cdot  \exp{\left[ \theta^T \eta + \frac12 \eta^T M \eta \right]}=\pff{M}\exp{\left[  \frac12 \theta^T M^{-1} \theta \right]}.
\end{equation}
Here $\eta=(\eta_1,\ldots,\eta_{2n})$ is another vector of $2n$ Grassmann variables.

Now we can easily prove Eq.~(\ref{ip2}).
Define a bilinear form
\[
\Gamma(x,y)=\sum_{1\le p<q\le 2n} x_p y_q {\pmod 2}.
\]
Here $x,y\in\{0,1\}^{2n}$. Then 
$\trace{\left[ c(x) c(y)\right]} = 2^n (-1)^{\Gamma(x,x)}$
if $x=y$ 
and $\trace{\left[ c(x) c(y)\right]}=0$ if $x\ne y$.
Furthermore, 
\[
\int D(\theta \eta) \cdot \theta(x) \eta(y) \exp{(\theta^T \eta)} =(-1)^n (-1)^{\Gamma(x,x)}
\]
if $x=y$ and the integral is zero if $x\ne y$. Here both $\theta$ and $\eta$
are vectors of $2n$ Grassmann variables
and $\int D(\theta \eta) \equiv \int D\theta \int D\eta$.
Let $A$, $B$ be arbitrary operators on $\calH_n$
and $A(\theta)$, $B(\eta)$ be their generating functions. The above shows that 
\begin{equation}
\label{double1}
\trace{(AB)} = (-2)^n \int D(\theta \eta) \cdot A(\theta) B(\eta)  \exp{(\theta^T \eta)}.
\end{equation}
Suppose now that $A,B$ are projectors onto some pure Gaussian states
$\phi_0,\phi_1$ with covariance matrices $M_0,M_1$ and parity $\sigma$.
Substituting Eq.~(\ref{generating}) into Eq.~(\ref{double1}), taking the integral
over $\eta$ using Eq.~(\ref{Gauss2}) and taking the integral over
$\theta$ using Eq.~(\ref{Gauss1}) one gets
\[
\trace{(AB)} =2^{-n} \pff{M_1} \pff{M_0+M_1}=\sigma 2^{-n} \pff{M_0+M_1}.
\] 
Here we noted that $M_1^{-1}=-M_1$.

Next let us  prove Eq.~(\ref{ip3a}). Consider any even-weight strings $x,y,z\in \{0,1\}^{2n}$ such that 
\begin{equation}
\label{xyz}
x+y+z=\overline{1}.
\end{equation}
Here and below  addition of binary strings is performed modulo two.
Let $P\equiv (-i)^n c(\overline{1})$ be the total parity operator
defined in Eq.~(\ref{Parity}). A simple algebra shows that 
\begin{equation}
\label{trace3a}
\trace{\left[  P c(x) c(y) c(z) \right]}=  i^n 2^n(-1)^{\Gamma(x,y)+ \Gamma(y,z) + \Gamma(z,x)}.
\end{equation}
Furthermore, the trace is zero whenever  $x+y+z\ne \overline{1}$.

Suppose $\theta,\eta,\mu$ are vectors of $2n$ Grassmann variables. 
Let $D(\theta \eta \mu)\equiv D\theta D\eta D\mu$. 
A simple algebra shows that 
\begin{equation}
\label{trace3b}
\int D(\theta \eta \mu) \cdot \theta(x) \eta(y) \mu(z) \exp{\left[ \theta^T \eta + \eta^T \mu + \mu^T \theta \right]}
=(-1)^n (-1)^{\Gamma(x,y)+ \Gamma(y,z) + \Gamma(z,x)}.
\end{equation}
Furthermore, the integral is zero whenever  $x+y+z\ne \overline{1}$.

We shall say that an operator acting on $\calH_n$ is {\em even}
if its expansion in the basis of Majorana monomials
includes only even-weight monomials. 
Let $A,B,C$ be any even  operators on $\calH_n$
and $A(\theta)$, $B(\eta)$, $C(\mu)$ be their generating functions.
Comparing Eqs.~(\ref{trace3a},\ref{trace3b}) one concludes that 
\begin{equation}
\label{PABC}
\trace{(PABC)} =(-i)^n 2^n \int D(\theta \eta \mu) \cdot A(\theta) B(\eta) C(\mu)
 \exp{\left[ \theta^T \eta + \eta^T \mu + \mu^T \theta \right]}
\end{equation}
Suppose now that $A,B,C$ are projectors onto some pure Gaussian states
$\phi_0,\phi_1,\phi_2$ with covariance matrices $M_0,M_1,M_2$
respectively. Recall that any Gaussian state has a fixed parity, that is,
$P\phi_\alpha = \sigma_\alpha \phi_\alpha$ for some $\sigma_\alpha=\pm 1$.
Clearly $\trace{(PABC)}=0$ unless all states $\phi_0,\phi_1,\phi_2$ have the same 
parity: $\sigma_1=\sigma_2=\sigma_3\equiv \sigma$. 
 Using Eqs.~(\ref{generating},\ref{Gauss1}) one gets
\[
\trace{(PABC)} =\sigma \langle \phi_2 |\phi_0\rangle \langle \phi_0|\phi_1\rangle \langle \phi_1|\phi_2\rangle
= (-i)^n 4^{-n}
 \pff{\left[  \ba{ccc}
-iM_0 & I  & -I \\
-I & -iM_1 &  I \\
I & -I & -iM_2 \\
\ea\right].}
\]
This is equivalent to Eq.~(\ref{ip3a}).
To prove Eq.~(\ref{ip3}) let us come back to Eq.~(\ref{PABC}) and 
take a partial integral over $\mu$. Applying Eq.~(\ref{Gauss2}) 
and noting that $\pff{-iM_2}=(-i)^n \sigma$ one gets
\[
\trace{(PABC)} =\sigma (-1)^n \int D(\theta \eta) \cdot A(\theta) B(\eta) 
\exp{\left[\theta^T \eta  -\frac{i}2 (\eta-\theta)^T M_2 (\eta-\theta)  \right]}.
\]
Next let us take a partial integral over $\eta$.
Applying Eq.~(\ref{Gauss2}) one gets
\[
\trace{(PABC)} =\sigma i^n 2^{-n} \pff{M_1+M_2} \int D\theta \cdot A(\theta) 
\exp{\left[ -\frac{i}2 \theta^T \Delta \theta\right]},
\]
where
\[
\Delta=M_2 - (I-iM_2)^T (M_1+M_2)^{-1} (I-iM_2).
\]
Taking into account that $M_a^{-1}=-M_a$ one can rewrite  $\Delta$ as
\[
\Delta=(-2I + iM_1- iM_2)(M_1+M_2)^{-1}.
\]
Finally, taking the integral over $\theta$ using Eq.~(\ref{Gauss1}) yields
\[
\trace{(PABC)} =\sigma 4^{-n} \pff{M_1+M_2} \pff{M_0+\Delta}.
\]
This is equivalent to Eq.~(\ref{ip3}) since $PA=\sigma A$.

Next let us  prove the generalized Wick's theorem Eq.~(\ref{Wick2}).
Choose a Gaussian unitary $U$ such that $Uc(x) U^\dag =c_1c_2\cdots c_w$,
where $w$ is the Hamming weight of $x$. Let $R\in O(2n)$ be the corresponding
rotation defined by Eq.~(\ref{UvsR}). Replacing each state $\phi_a$ by $U\phi_a$
is equivalent to replacing the covariance matrix $M_a$ by $R^T M_a R$.
Thus suffices to prove Eq.~(\ref{Wick2}) for the special case $c(x)=c_1c_2\cdots c_w$.

Define $C=c(x) |\phi_2\rangle\langle \phi_2|$. We would like to compute
the generating function $C(\mu)$ describing $C$.
Using the standard Wick's theorem Eq.~(\ref{Wick}) one gets
\[
|\phi_2\rangle\langle \phi_2|=2^{-n}\sum_y \pff{-iM_2[y]} c(y).
\]
Here we noted that $\trace{(c(y)^\dag c(z))}=2^n \delta_{y,z}$
and $c(y)^\dag =(-1)^{\Gamma(y,y)} c(y)$. Therefore
\[
C=c(x) |\phi_2\rangle\langle \phi_2| =2^{-n}\sum_y \pff{-iM_2[y]} (-1)^{\Gamma(y,x)} c(x+y).
\]
It follows that $C$ has a generating function
\[
C(\mu)=2^{-n}\sum_y \pff{-iM_2[y]} (-1)^{\Gamma(y,x)} \mu(x+y).
\]
Here $\mu=(\mu_1,\ldots,\mu_{2n})$ is a vector of $2n$ Grassmann variables.
Let us construct a linear map $\Phi_x$ acting on Grassmann monomials such that 
\[
\Phi_x\cdot \mu(y)=(-1)^{\Gamma(y,x)}\mu(x+ y)
\]
for all $y\in \{0,1\}^{2n}$.
This would imply that 
\[
C(\mu)=\Phi_x \cdot 2^{-n} \exp{\left[ -\frac{i}2 \mu^T M_2 \mu \right]}.
\]
It will be convenient to represent $\mu=(\mu',\mu'')$, where
$\mu'$ has length $w$ and $\mu''$ has length $2n-w$. Then 
simple algebra shows that 
\begin{equation}
\label{PhiMap}
(\Phi_x\cdot f)(\mu',\mu'') = \int D\tau\cdot
\prod_{j=1}^w (1+\tau_j \mu'_j) f(\tau,\mu'')=
 \int D\tau\cdot \exp{\left[ \tau^T \mu' \right]} f(\tau,\mu'').
\end{equation}
Here $\tau=(\tau_1,\ldots,\tau_w)$ is a vector of $w$ Grassmann variables.
Let $A(\theta)$ and $B(\eta)$ be the generating functions of projectors onto $\phi_0$
and $\phi_1$. Applying Eq.~(\ref{PABC}) and using the new definition of $C$ one arrives at
\begin{equation}
\label{WickGen1}
\trace{(PABC)}=\sigma \langle \phi_0|\phi_1\rangle \langle \phi_1|c(x)|\phi_2\rangle \langle \phi_2|\phi_0\rangle
\end{equation}
and
\begin{equation}
\label{WickGen2}
\trace{(PABC)}=(-i)^n 2^n \int D(\theta \eta \mu) \cdot A(\theta) B(\eta) C(\mu)
 \exp{\left[ \theta^T \eta + \eta^T \mu + \mu^T \theta \right]}.
\end{equation}
Combining Eqs.~(\ref{PhiMap},\ref{WickGen2}) one arrives at
\begin{eqnarray}
\label{WickGen3}
\trace{(PABC)}=(-i)^n 4^{-n} \int D(\theta \eta \mu \tau )
\cdot \exp{\left[
-\frac{i}2 \theta^T M_0 \theta  -\frac{i}2 \eta^T M_1 \eta  -\frac{i}2 (\tau,\mu'')^T M_2 (\tau,\mu'') \right.} \nonumber \\
\left.+ \tau^T \mu' + \theta^T \eta + \eta^T \mu + \mu^T \theta  \right]. \nonumber
\end{eqnarray}
Evaluating the integral using Eq.~(\ref{Gauss1}) gives
\begin{equation}
\label{WicjGen4}
\trace{(PABC)}=(-i)^n 4^{-n} \pff{ \ba{c|c|c|c}
-iM_0  & I & -I &  \\
\hline
-I & -iM_1 & I  &  \\
\hline
I & -I & -i D_x M_2  D_x & -J_x^T -iD_x M_2 J_x^T  \\
\hline
 &  &  J_x- i J_x M_2 D_x  & -i J_x M_2 J_x^T  \\
\ea
},
\end{equation}
where $J_x$ is a matrix of size $w\times 2n$ such that 
$(J_x)_{i,j}=1$ if $j$ is the position of the $i$-th nonzero of $x$
and $(J_x)_{i,j}=0$  otherwise. Furthermore, $D_x$ is a diagonal
matrix of size $2n$ such that $(D_x)_{j,j}=1-x_j$.
Combining this and Eq.~(\ref{WickGen1}) proves the generalized
Wick's theorem Eq.~(\ref{Wick2}).

To derive the simplified expression Eq.~(\ref{Wick3})
it suffices to substitute $C=c(x)$ into Eq.~(\ref{PABC}) and evaluate
the integrals using Eqs.~(\ref{Gauss1},\ref{Gauss2}).

\section{Fast norm estimation}
\label{app:normest}
Here we describe an algorithm for estimating the norm 
of a state represented as a superposition of $\chi\ll 2^n$ Gaussian states. It is a Gaussian version of an algorithm presented in Ref.~\cite{BG16} for superpositions of stabilizer states. 
Suppose our goal is to compute the norm of a state
\[
|\psi\rangle=\sum_{a=1}^{\chi} x_a |\phi_a\rangle
\]
where $x_a\in \mathbb{C}$ and $\phi_a$ are normalized Gaussian states. The states $\{\phi_a\}$ are not assumed to be orthogonal. We assume that we are explicitly given the coefficients $\{x_a\}$ and that each Gaussian state $\phi_a$ is specified by its covariance matrix and its inner product with a reference Gaussian state $|r\rangle$.

A naive classical algorithm for computing $\langle \psi|\psi\rangle$ is to expand
\[
\langle \psi|\psi\rangle=\sum_{a,b=1}^{\chi} \bar{x}_b x_a\langle \phi_b |\phi_a\rangle
\]
and then to compute each term in the sum using the inner product algorithm from Section \ref{sec:inner}. The runtime of this naive algorithm scales quadratically with $\chi$.  Below we present a Monte Carlo algorithm which estimates $\langle \psi|\psi\rangle$  with runtime scaling only linearly with $\chi$. In particular, we prove the following lemma. In the lemma it is assumed that the state $\psi$ is classically represented as described above.

\begin{lemma}
\label{lem:normestimation}
Let $|\psi\rangle=\sum_{a=1}^{\chi} x_a |\phi_a\rangle$ where $\phi_a$ are Gaussian states and $x_a\in \mathbb{C}$. There is a classical algorithm which takes as input a precision parameter $\epsilon>0$, a failure probability $p_f>0$, and the state $\psi$ and outputs an estimate $\xi$ such that, with probabilty at least $1-p_f$ we have
\[
(1-\epsilon)\|\psi\|^2 \leq \xi\leq (1+\epsilon)\|\psi\|^2
\]
The runtime of the algorithm is $O(n^{7/2} \epsilon^{-2} p_f^{-1} \chi)$. 
\end{lemma}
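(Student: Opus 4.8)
The plan is to adapt the Monte Carlo norm estimator of Ref.~\cite{BG16} from random stabilizer states to random fermionic Gaussian states. First I would reduce to the case where all $\phi_a$ share a common parity: each Gaussian state has a definite parity $\sigma_a=\pff{M_a}$, so we may split $\psi=\psi_+ +\psi_-$ with $\psi_{\pm}=\sum_{a\,:\,\sigma_a=\pm1}x_a|\phi_a\rangle$, and since $\psi_+\perp\psi_-$ we have $\|\psi\|^2=\|\psi_+\|^2+\|\psi_-\|^2$. Thus it suffices to estimate $\|\psi\|^2$ when all $\phi_a$ lie in a single parity sector, of dimension $D=2^{n-1}$. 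The even (resp.\ odd) Gaussian states form a single orbit of the Gaussian group acting irreducibly on that sector, so by Schur's lemma, if $\theta=U|0^n\rangle$ where $U\in\calC_n$ is the Gaussian unitary of a Haar-random $R\in SO(2n)$, then $\EE_\theta[\,|\theta\rangle\langle\theta|\,]=D^{-1}\Pi_{\mathrm{sector}}$. Hence the random variable $X:=D\,|\langle\theta|\psi\rangle|^2$ satisfies $\EE[X]=\|\psi\|^2$. The algorithm draws $\theta_1,\dots,\theta_N$ independently, computes $X_i=D|\langle\theta_i|\psi\rangle|^2$, and outputs $\xi=N^{-1}\sum_i X_i$.

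Next I would describe the per-sample cost. A Haar-random $R\in SO(2n)$ is produced by Gram--Schmidt applied to a matrix of i.i.d.\ Gaussian entries, in time $O(n^3)$. We never build $U$ as an operator on $\calH_n$; instead we represent $\theta$ by its covariance matrix $RM_{\mathrm{vac}}R^T$, with $M_{\mathrm{vac}}$ the vacuum covariance matrix of Eq.~\eqref{V}, together with its overlap $\langle r|\theta\rangle$ with the reference state $r$ used to specify the $\phi_a$: the magnitude $|\langle r|\theta\rangle|$ comes from Eq.~\eqref{ip2}, and we fix the phase by taking $\langle r|\theta\rangle$ real and nonnegative (resampling in the probability-zero event that it vanishes). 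Each overlap $\langle\theta|\phi_a\rangle$, including its phase, is then extracted from Eq.~\eqref{ip3a} applied with $\phi_0=r$, $\phi_1=\theta$, $\phi_2=\phi_a$, using the given value of $\langle r|\phi_a\rangle$; this costs $O(n^3)$ per $a$. Summing, $\langle\theta|\psi\rangle=\sum_a x_a\langle\theta|\phi_a\rangle$ is obtained in time $O(\chi n^3)$ per sample, for a total of $O(N\chi n^3)$.

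The heart of the argument is the variance bound $\EE[X^2]=D^2\,\EE_\theta[\,|\langle\theta|\psi\rangle|^4\,]\le O(\sqrt n)\,\|\psi\|^4$. Granting this, Chebyshev's inequality applied to $\xi$ shows that $N=O(\sqrt n\,\epsilon^{-2}p_f^{-1})$ samples suffice to guarantee $\prob{(1-\epsilon)\|\psi\|^2\le\xi\le(1+\epsilon)\|\psi\|^2}\ge1-p_f$, which together with the $O(\chi n^3)$ per-sample cost yields the claimed runtime $O(n^{7/2}\epsilon^{-2}p_f^{-1}\chi)$. To prove the bound I would study the two-copy second moment operator $T:=\EE_\theta[\,(|\theta\rangle\langle\theta|)^{\otimes2}\,]$. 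Since $\theta$ has an $SO(2n)$-invariant law, $T$ commutes with $U\otimes U$ for every Gaussian $U$ and is supported on the symmetric subspace; hence $\|T\|$ equals the largest scalar by which $T$ acts on an irreducible component of the two-copy Gaussian representation inside $\mathrm{Sym}^2(\calH_n)$, and $\EE_\theta[\,|\langle\theta|\psi\rangle|^4\,]=\langle\psi^{\otimes2}|T|\psi^{\otimes2}\rangle\le\|T\|\,\|\psi\|^4$. Using $SO(2n)$-invariance once more, each such scalar is controlled by moments of the form $\EE_R\big[\prod_{j=1}^n\cos^{4}(\vartheta_j/2)\big]$ and close variants, where $\vartheta_1,\dots,\vartheta_n$ are the canonical rotation angles of $R$; for instance this particular quantity is exactly $\EE_\theta[\,|\langle0^n|\theta\rangle|^4\,]$.

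The main obstacle is making these moment estimates rigorous. Treating the angles $\vartheta_j$ as independent would give an exponentially large bound; the point is that for a Haar-random orthogonal matrix they form a determinantal point process with strong eigenvalue repulsion, so $\log\prod_j\cos^2(\vartheta_j/2)$ is a linear eigenvalue statistic whose fluctuations are only of order $\sqrt{\log n}$, the mild logarithmic divergence of the test function at $\vartheta=\pi$ accounting for a $\mathrm{poly}(n)$ (and, I expect, $O(\sqrt n)$) rather than $O(1)$ factor in the ratio $\EE_R[\prod\cos^4]/\EE_R[\prod\cos^2]^2$. I would establish this either from the explicit joint density of $(\vartheta_1,\dots,\vartheta_n)$ or from known central-limit and large-deviation estimates for linear statistics of the classical compact groups. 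Everything else --- the parity reduction, the Schur-lemma first moment, the extraction of overlaps via Section~\ref{sec:inner}, the union bound over the two parity sectors, and the final concentration step --- is routine bookkeeping.
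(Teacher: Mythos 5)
Your overall architecture (an unbiased one-sample estimator $X=D\,|\langle\theta|\psi\rangle|^2$ for a random Gaussian $\theta$, a variance bound of order $\sqrt{n}\,\|\psi\|^4$, Chebyshev, and an $O(\chi n^3)$ per-sample cost via the inner-product formulas) matches the paper, and your parity splitting and first-moment computation via Schur's lemma are fine. But you take a genuinely different route to the variance bound --- Haar measure on $SO(2n)$ plus representation theory of the half-spin representation plus random-matrix estimates --- and that is exactly where the proposal stops being a proof. You correctly observe that Gaussian states are not a $2$-design, so $T=\EE_\theta[(|\theta\rangle\langle\theta|)^{\otimes 2}]$ is a nonuniform mixture of projectors onto the irreducible components of $S_+\otimes S_+$, and $\|T\|$ is the maximum of the scalars $\langle 0^{2n}|\Pi_\lambda|0^{2n}\rangle/\dim V_\lambda$. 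You neither enumerate these components nor bound the scalars: the reduction to ``moments of the form $\EE_R[\prod_j\cos^4(\vartheta_j/2)]$ and close variants'' is asserted, and the key estimate $\EE_R[\prod\cos^4]/\EE_R[\prod\cos^2]^2=O(\sqrt n)$ is explicitly flagged as something you ``expect.'' The linear statistic $\sum_j\log\cos^2(\vartheta_j/2)$ has a test function that diverges to $-\infty$ at $\vartheta=\pi$, so controlling its exponential moments (which is what the ratio of fourth to squared second moments requires) is well beyond an off-the-shelf CLT for linear eigenvalue statistics; this is a substantial unproved analytic claim sitting at the center of the argument.

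For contrast, the paper avoids Haar measure and all spectral analysis of $T$. It shows that every symmetric tensor product $\phi\otimes\phi$ of a Gaussian state lies in the null space $\calK$ of $\Lambda=\sum_p c_p\otimes c_p$, which has dimension $\binom{2n}{n}$, and that a discrete ``Majorana-mixing'' ensemble (random permutations of the $2n$ Majorana modes composed with a uniformly random monomial $c(x)\otimes c(x)$) twirls $|0^n\otimes 0^n\rangle\langle 0^n\otimes 0^n|$ \emph{exactly} to $\Pi_\calK/\dim\calK$. The variance bound then follows from $\EE[X^2]=4^n\langle\psi^{\otimes2}|\Pi_\calK|\psi^{\otimes2}\rangle/\binom{2n}{n}\le 2\sqrt n\,\|\psi\|^4$, with the $\sqrt n$ coming from the elementary estimate $\binom{2n}{n}>4^n/\sqrt{4n}$. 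Note also that the random Majorana monomial in the paper's ensemble is what makes the twirled operator exactly proportional to $\Pi_\calK$; with pure Haar averaging as in your proposal the irreducible components are weighted unequally, so even granting your random-matrix program you would still need to verify that the worst component, not just the one probed by $|\langle 0^n|\theta\rangle|^4$, obeys the $O(\sqrt n)$ bound. To repair your argument, either carry out the irrep decomposition and moment computation in full, or switch to the paper's null-space mechanism.
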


In the remainder of this Appendix we prove the lemma.  As noted above, this result is a Gaussian analog of a similar norm estimation algorithm given in Ref. \cite{BG16} for superpositions of stabilizer states. However, while that algorithm uses the fact that stabilizer states form a $2$-design, one can show that, strictly speaking, no ensemble of Gaussian states has this property. To prove lemma \ref{lem:normestimation} we follow a technique from Ref. \cite{CLLW15} which established a criterion called \textit{Pauli mixing} for an ensemble of Clifford unitaries to form a $2$-design. We define an analogous criterion which we call \textit{Majorana mixing} for ensembles of Gaussian unitaries and we show how to use this criterion (rather than the $2$-design property) in the norm estimation algorithm.

\subsection{Symmetric tensor products of Gaussian states}

Recall that the Majorana monomials
\[
c(x)=c_1^{x_1}c_2^{x_2}\cdots c_{2n}^{x_{2n}} \qquad \quad x\in \{0,1\}^{2n}
\]
form a basis (over $\mathbb{R}$) for the space of Hermitian operators acting on the Hilbert space $\mathcal{H}_n$ of $n$ fermi modes. For any state $\rho\in \mathcal{H}_n\otimes \mathcal{H}_n$ we denote $\rho_{ab}$ for the coefficients of its expansion
\begin{equation}
\rho=\sum_{a,b\in\{0,1\}^{2n}} \rho_{ab}\; c(a)\otimes c(b)\qquad \quad
\rho_{ab} =\frac{1}{2^{2n}} \trace{\left(\rho \;c(a)\otimes c(b)\right)}.
\label{eq:expandrho}
\end{equation}

Define an operator
\[
\Lambda=\sum_{p=1}^{2n}c_p\otimes c_p
\]
which acts on $\mathcal{H}_n\otimes \mathcal{H}_{n}$, and write $\mathcal{K}$ for its null space. To construct a basis for $\mathcal{K}$ first note that the commuting operators $c_p\otimes c_p$ can be simultaneously diagonalized. We define a complete orthonormal basis $\{\Psi_x: x\in \{0,1\}^{2n}\}$ of $\mathcal{H}_n\otimes \mathcal{H}_n$, where
\[
c_p\otimes c_p|\Psi_x\rangle=(-1)^{x_p}|\Psi_x\rangle \qquad \quad p=1,2,\ldots,2n.
\]
A subset of these basis vectors span $\mathcal{K}$
\begin{equation}
\mathcal{K}=\mathrm{span}\{\Psi_x: |x|=n\},
\label{eq:spanL}
\end{equation}
and the projector onto $\mathcal{K}$ is 
\begin{equation}
\label{eq:pik}
\Pi_\mathcal{K}=\sum_{|x|=n}|\Psi_x\rangle\langle \Psi_x|=\frac{1}{2^{2n}}\sum_{|x|=n}\prod_{p=1}^{2n} \big(1+(-1)^{x_p}c_p\otimes c_p\big).
\end{equation}
We now derive a set of constraints on the coefficients $\rho_{ab}$ of a state $\rho\in \mathcal{K}$. 
\begin{prop}
There exist real numbers $\{F_0,F_1,\ldots,F_{2n}\}$ such that
\[
\sum_{a\in\{0,1\}^{2n}: |a|=k}\rho_{aa}=F_k \qquad \quad k=0,1,\ldots,2n
\]
for all $\rho \in \mathcal{K}$ with $\trace{\rho}=1$. 
\label{prop:F}
\end{prop}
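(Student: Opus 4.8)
The plan is to diagonalize everything in the basis $\{\Psi_x\}$ and reduce the claimed identity to an elementary binomial sum. First I would observe that for any $a\in\{0,1\}^{2n}$ the operator $c(a)\otimes c(a)$ equals $\prod_{p\,:\,a_p=1}(c_p\otimes c_p)$: writing $c(a)=c_{p_1}c_{p_2}\cdots c_{p_k}$ with $p_1<\cdots<p_k$ and using the general identity $(A_1\cdots A_k)\otimes(B_1\cdots B_k)=(A_1\otimes B_1)\cdots(A_k\otimes B_k)$, no extra signs appear, and the factors $c_p\otimes c_p$ pairwise commute. Writing $T(a)\equiv c(a)\otimes c(a)$, it follows that $T(a)$ is diagonal in the basis $\{\Psi_x\}$ with $T(a)|\Psi_x\rangle=(-1)^{\langle a,x\rangle}|\Psi_x\rangle$, where $\langle a,x\rangle\equiv\sum_{p}a_px_p$; in particular $T(a)$ commutes with $\Pi_{\mathcal{K}}$.

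Interpreting $\rho\in\mathcal{K}$ as the statement that $\rho=\Pi_{\mathcal{K}}\rho\,\Pi_{\mathcal{K}}$, I would then use Eq.~(\ref{eq:expandrho}) together with Eq.~(\ref{eq:pik}) to write, for each $k$,
\[
\sum_{|a|=k}\rho_{aa}=\frac{1}{4^n}\,\mathrm{Tr}\Big(\rho\sum_{|a|=k}T(a)\Big)
=\frac{1}{4^n}\sum_{|x|=n}\Big(\sum_{|a|=k}(-1)^{\langle a,x\rangle}\Big)\langle\Psi_x|\rho|\Psi_x\rangle,
\]
where the second step uses $\Pi_{\mathcal{K}}=\sum_{|x|=n}|\Psi_x\rangle\langle\Psi_x|$, the diagonality of $T(a)$ in this basis, and $\rho=\Pi_{\mathcal{K}}\rho\,\Pi_{\mathcal{K}}$. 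The crucial point is that the inner coefficient $\sum_{|a|=k}(-1)^{\langle a,x\rangle}$ is independent of the particular weight-$n$ string $x$: splitting $\{1,\dots,2n\}$ into the support of $x$ and its complement, each of size $n$, and counting how many of the $k$ ones of $a$ fall into each part, one gets
\[
\sum_{|a|=k}(-1)^{\langle a,x\rangle}=\sum_{j}(-1)^{j}\binom{n}{j}\binom{n}{k-j}=[z^k]\,(1-z^2)^n\;=:\;\gamma_k,
\]
a number depending only on $n$ and $k$ (equal to $(-1)^{k/2}\binom{n}{k/2}$ for even $k$ and to $0$ for odd $k$). Pulling $\gamma_k$ out of the sum and using $\sum_{|x|=n}\langle\Psi_x|\rho|\Psi_x\rangle=\mathrm{Tr}(\Pi_{\mathcal{K}}\rho)=\mathrm{Tr}(\rho)=1$ then yields $\sum_{|a|=k}\rho_{aa}=\gamma_k/4^n=:F_k$, which is the assertion.

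I do not expect a serious obstacle here; the only points requiring a little care are (i) verifying that $c(a)\otimes c(a)$ carries no sign relative to $\prod_{p}(c_p\otimes c_p)^{a_p}$, so that its eigenvalue on $\Psi_x$ is exactly $(-1)^{\langle a,x\rangle}$, and (ii) checking that the combinatorial sum genuinely loses all dependence on $x$ once $|x|=n$ — which is precisely where the defining constraint of $\mathcal{K}$ (namely $|x|=n$, as opposed to arbitrary $x$ in Eq.~(\ref{eq:spanL})) enters. I would also spell out at the start of the proof that "$\rho\in\mathcal{K}$" is shorthand for $\rho$ being an operator on $\mathcal{H}_n\otimes\mathcal{H}_n$ supported on the subspace $\mathcal{K}$, i.e.\ $\Pi_{\mathcal{K}}\rho\,\Pi_{\mathcal{K}}=\rho$, since that is the hypothesis actually used.
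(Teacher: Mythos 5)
Your proof is correct, and it takes a genuinely different route from the paper's. The paper never diagonalizes anything: it introduces $B_k=\sum_{|a|=k}c(a)\otimes c(a)$, verifies the operator recursion $\Lambda B_k=(k+1)B_{k+1}+(2n-k+1)B_{k-1}$, and uses $\operatorname{Tr}(\rho\,\Lambda B_k)=0$ (which only needs $\rho\Lambda=0$, not the full spectral decomposition of $\Pi_{\mathcal{K}}$) to obtain the two-term recursion $(k+1)F_{k+1}+(2n-k+1)F_{k-1}=0$; together with $F_0=2^{-2n}$ this forces the odd $F_k$ to vanish and gives $F_{2j}=(-1)^j\binom{n}{j}/4^n$. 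You instead diagonalize every $c(a)\otimes c(a)$ in the basis $\{\Psi_x\}$ and reduce the claim to the Vandermonde-type identity $\sum_j(-1)^j\binom{n}{j}\binom{n}{k-j}=[z^k](1-z^2)^n$, whose independence of the weight-$n$ string $x$ is exactly where membership in $\mathcal{K}$ enters; your $\gamma_k/4^n$ agrees with the paper's $F_k$ in all cases. Your route has the side benefit that it simultaneously establishes the expansion \eqref{eq:pi} of $\Pi_{\mathcal{K}}/\dim\mathcal{K}$ with $g_k=F_k$, which the paper obtains at the end of the proof of Lemma~\ref{lem:main} by a separate application of the proposition; the paper's route is slightly more economical in that it needs only the single operator identity for $\Lambda B_k$ and no combinatorics. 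Your two flagged points of care are both handled correctly: $c(a)\otimes c(a)=\prod_{p:a_p=1}(c_p\otimes c_p)$ indeed carries no sign under the ordinary (ungraded) tensor product used throughout this appendix, and the diagonal coefficients $\rho_{aa}=2^{-2n}\operatorname{Tr}(\rho\,c(a)\otimes c(a))$ from \eqref{eq:expandrho} are exactly as you use them since the phase $(-1)^{2\Gamma(a,a)}$ is trivial on the diagonal.
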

\begin{proof}
Define $B_{-1}=B_{2n+1}=0$ and
\[
B_k=\sum_{a\in\{0,1\}^{2n}:|a|=k} c(a)\otimes c(a) \qquad \quad k=0,\ldots,2n.\\
\]
The following recursion is easily verified:
\begin{equation}
\Lambda B_k=(k+1)B_{k+1}+(2n-k+1)B_{k-1} \qquad \quad k=0,\ldots,2n.
\label{eq:Brec}
\end{equation} 
Since $\rho\in \mathcal{K}$ we have $\trace{\left( \rho \Lambda B_k\right)}=0$ for $0\leq k\leq 2n$. Using Eq.~\eqref{eq:Brec} and evaluating the trace gives
\begin{equation}
(k+1)F_{k+1}+(2n-k+1)F_{k-1}=0 \qquad \quad k=0,\ldots,2n,
\label{eq:lin}
\end{equation}
where $F_{-1}=F_{2n+1}=0$ and $F_j=\sum_{|a|=j}\rho_{aa}$ for $j=0,1,\ldots,2n$. The normalization constraint fixes $F_0=\frac{1}{2^{2n}}$. From Eq.~\eqref{eq:lin} we see that $F_1=F_3=\ldots=F_{2n-1}=0$ and that 
\[
F_{2j}=\frac{1}{2^{2n}}\prod_{m=1}^j \left(\frac{m-n-1}{m}\right)=\frac{(-1)^j}{4^n} {n\choose j}
\]
for $j=1,\ldots,n$.
\end{proof}
We are interested in the subspace $\mathcal{K}$ because it contains any symmetric tensor product of fermionic Gaussian states.

\begin{prop}
\label{prop:lambda}
Let $|\Phi\rangle=|\phi\otimes \phi\rangle$ where $\phi\in \mathcal{H}_n$ is a fermionic Gaussian state. Then $|\Phi\rangle\in \mathcal{K}$.
\end{prop}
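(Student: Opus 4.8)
The plan is to show that $|\Phi\rangle = |\phi\otimes\phi\rangle$ is annihilated by $\Lambda = \sum_{p=1}^{2n} c_p\otimes c_p$, which by definition of $\mathcal{K}$ (its null space) gives $|\Phi\rangle\in\mathcal{K}$. The natural tool is the fact from Section~\ref{sec:cmodes} that a Gaussian state admits a complete set of ``annihilation'' modes: there is a Gaussian unitary $U$ with $|\phi\rangle = U|0^n\rangle$, and the operators $\tilde b_j = U a_j U^\dagger$ satisfy $\tilde b_j|\phi\rangle = 0$ for all $j$. Equivalently, writing $\tilde c_p = U c_p U^\dagger = \sum_q R_{p,q} c_q$ for the associated $R\in O(2n)$, the Majorana operators $\tilde c_{2j-1} + i\tilde c_{2j}$ annihilate $|\phi\rangle$ (up to the usual normalization). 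So $|\phi\rangle$ is characterized by a set of $n$ linear (in the $c_p$) operators that kill it.

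First I would exploit the $O(2n)$-covariance of $\Lambda$. For any orthogonal $R\in O(2n)$, if we set $\tilde c_p = \sum_q R_{p,q} c_q$, then $\sum_p \tilde c_p\otimes \tilde c_p = \sum_{p,q,q'} R_{p,q}R_{p,q'}\, c_q\otimes c_{q'} = \sum_{q} c_q\otimes c_q = \Lambda$, using $R^T R = I$. Hence $\Lambda = \sum_{p=1}^{2n}\tilde c_p\otimes\tilde c_p$ for the modes adapted to $\phi$. Now I would regroup the sum into pairs: $\Lambda = \sum_{j=1}^{n}\bigl(\tilde c_{2j-1}\otimes\tilde c_{2j-1} + \tilde c_{2j}\otimes\tilde c_{2j}\bigr)$. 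The key algebraic identity is that for each $j$,
\[
\tilde c_{2j-1}\otimes\tilde c_{2j-1} + \tilde c_{2j}\otimes\tilde c_{2j}
= \tfrac12\bigl(\tilde a_j^\dagger\otimes\tilde a_j + \tilde a_j\otimes\tilde a_j^\dagger\bigr)\cdot(\text{const}) + \dots
\]
— more precisely, writing $\tilde a_j = \tfrac12(\tilde c_{2j-1} + i\tilde c_{2j})$ and $\tilde a_j^\dagger = \tfrac12(\tilde c_{2j-1} - i\tilde c_{2j})$, one checks directly that $\tilde c_{2j-1}\otimes\tilde c_{2j-1} + \tilde c_{2j}\otimes\tilde c_{2j} = 2(\tilde a_j^\dagger\otimes\tilde a_j + \tilde a_j\otimes\tilde a_j^\dagger)$. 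This is a short computation expanding the right-hand side and using $\tilde c_{2j-1}\otimes\tilde c_{2j} + \tilde c_{2j}\otimes\tilde c_{2j-1}$ cancellations. Then, since $\tilde a_j|\phi\rangle = 0$ for every $j$, each term $(\tilde a_j^\dagger\otimes\tilde a_j + \tilde a_j\otimes\tilde a_j^\dagger)$ applied to $|\phi\otimes\phi\rangle$ vanishes, because $\tilde a_j$ acts as zero on either tensor factor. Summing over $j$ gives $\Lambda|\Phi\rangle = 0$, hence $|\Phi\rangle\in\mathcal{K}$.

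An alternative, perhaps cleaner, route avoids picking modes: use that $\tilde c_{2j-1}\otimes\tilde c_{2j-1} + \tilde c_{2j}\otimes\tilde c_{2j}$ can be written as $\tilde a_j\otimes \tilde a_j^\dagger + \tilde a_j^\dagger\otimes\tilde a_j$ and note $(\tilde a_j^\dagger\otimes \tilde a_j)|\phi\otimes\phi\rangle = (\tilde a_j^\dagger|\phi\rangle)\otimes(\tilde a_j|\phi\rangle) = 0$ and similarly for the other order — so we never need to manipulate the $\tilde c$'s much beyond this pairing. I expect the only real obstacle to be bookkeeping: getting the factors of $i$ and $2$ right in the identity $\tilde c_{2j-1}\otimes\tilde c_{2j-1} + \tilde c_{2j}\otimes\tilde c_{2j} = 2(\tilde a_j^\dagger\otimes\tilde a_j + \tilde a_j\otimes\tilde a_j^\dagger)$, and double-checking the $O(2n)$-invariance step, namely that $R$ being a real orthogonal matrix (which it is, by Eq.~\eqref{UvsR}) is exactly what makes $\sum_p\tilde c_p\otimes\tilde c_p = \Lambda$. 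Once those two elementary facts are in hand, the proof is immediate.
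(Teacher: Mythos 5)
Your proof is correct and follows essentially the same route as the paper's: both arguments hinge on the $O(2n)$-invariance $U\otimes U\,\Lambda\,U^\dagger\otimes U^\dagger=\Lambda$ for Gaussian $U$, reducing everything to a reference configuration. The only (cosmetic) difference is the base case: the paper checks $c_{2j-1}\otimes c_{2j-1}|y\otimes y\rangle=-c_{2j}\otimes c_{2j}|y\otimes y\rangle$ directly on a Fock state, whereas you use the equivalent pairing $\tilde c_{2j-1}\otimes\tilde c_{2j-1}+\tilde c_{2j}\otimes\tilde c_{2j}=2(\tilde a_j^\dagger\otimes\tilde a_j+\tilde a_j\otimes\tilde a_j^\dagger)$ and the fact that $\tilde a_j$ annihilates $\phi$; your identity and its constants check out.
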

\begin{proof}
First consider the special case in which $|\phi\rangle=|y\rangle$ is a standard basis state. Here $y\in \{0,1\}^n$. In this case $\Lambda|\Phi\rangle=0$ follows directly from
\[
c_{2j-1}\otimes c_{2j-1}|y\otimes y \rangle=-c_{2j}\otimes c_{2j}|y \otimes y\rangle \qquad \quad j=1,2,\ldots,n.
\]
For general Gaussian $\phi$, let $U$ be a Gaussian unitary and $y\in \{0,1\}^n$ such that $|\phi\rangle=U|y\rangle$. Then $Uc_pU^{\dagger}=\sum_{j} R_{pj}c_j$ for some $R\in \mathrm{O}(2n)$, and
\[
U\otimes U\Lambda U^\dagger \otimes U^\dagger=\sum_{j,k=1}^{2n}R_{pj}R_{pk}c_j\otimes c_k=\Lambda.
\] 
Therefore
\[
\Lambda|\phi\otimes\phi\rangle=\Lambda  U \otimes U|y\otimes y\rangle=U \otimes U \Lambda  |y\otimes y\rangle=0.
\]
\end{proof}

\subsection{Ensembles of Gaussian unitaries and Majorana mixing}
In this section we consider probability distributions over Gaussian unitaries and we define a notion of \textit{Majorana mixing} for such ensembles. We then show that a certain channel $\mathcal{T}_\mathcal{E}$ derived from any Majorana mixing ensemble $\mathcal{E}$ satisfies 
\[
\mathcal{T}_\mathcal{E}(|0^n\otimes 0^n\rangle\langle 0^n\otimes 0^n|)=\frac{\Pi_\mathcal{K}}{\dim{\mathcal{K}}}.
\]
The definitions and proof strategy are based on reference \cite{CLLW15} where a similar notion of Pauli mixing is shown to be sufficient for an ensemble of Clifford unitaries to  form a unitary $2$-design.

For each $k=0,1,\ldots,2n$ define  
\[
T_k={2n \choose {k}}^{-1} \sum_{y\in \{0,1\}^{2n}:|y|=k}c(y)\otimes c(y).
\]
\begin{dfn}
A probability distribution $\mathcal{E}=\{p_i,U_i\}$ over Gaussian unitaries is said to be Majorana mixing if
\[
\sum_{i}p_i U_i\otimes U_i \left[c(x)\otimes c(x)\right]U^\dagger_i\otimes U_i^\dagger =T_{|x|}
\]
for all $x\in \{0,1\}^{2n}$.
\end{dfn}

Let us now present a simple example of an ensemble $\mathcal{E}^{\star}$ which is Majorana mixing. To begin we define an ensemble $\{p_i,R_i\}$ of $\mathrm{O}(2n)$ matrices. To choose a matrix from this ensemble we simply select a uniformly random permutation $\pi\in \mathcal{S}_{2n}$ and let $R$ be the corresponding $2n\times 2n$ permutation matrix. Next define $\mathcal{E}^{\star}=\{p_i,U_i\}$ where $U_i$ is a Gaussian unitary such that $U_ic_pU_i^{\dagger}=\sum_{j} (R_i)_{pj}c_j$. To see that $\mathcal{E}^\star$ is Majorana mixing, note that
\[
U_i\otimes U_i \left[c(x)\otimes c(x)\right]U^\dagger_i\otimes U_i =c(\pi_i(x))\otimes c(\pi_i(x))
\]
Here the permutation $\pi_i$ associated to $U_i$ permutes the bits of $x$. In our ensemble $\mathcal{E}^\star$ the permutation $\pi_i$ is uniformly random and therefore $c(x)\otimes c(x)$ is mapped to $c(y)\otimes c(y)$ where $y$ is uniformly random among bit strings with the same Hamming weight as $x$. This shows that $\mathcal{E}^\star$ is Majorana mixing. 

For any distribution $\mathcal{E}=\{p_i,U_i\}$ we define a channel $\mathcal{T}_\mathcal{E}$ which consists of first applying a uniformly random symmetric Majorana monomial $c(x)\otimes c(x)$ and then applying $U\otimes U$ where $U$ is a random Gaussian unitary drawn from $\mathcal{E}$. In other words
\begin{equation}
\mathcal{T}_\mathcal{E}(\sigma)=\sum_{i}p_i U_i\otimes U_i \left[ \frac{1}{2^{2n}}\sum_{x\in \{0,1\}^{2n}} c(x)\otimes c(x) \sigma c(x)^\dagger\otimes c(x)^\dagger \right]U^\dagger_i\otimes U^{\dagger}_i.
\end{equation}
\begin{lemma}
Let $\rho\in \mathcal{K}$ with $\trace{\rho}=1$ and let $\mathcal{E}$ be an ensemble of Gaussian unitaries which is Majorana mixing. Then
\begin{equation}
\mathcal{T}_\mathcal{E}(\rho)=\frac{\Pi_{\mathcal{K}}}{\dim{\mathcal{K}}}.
\label{eq:T1}
\end{equation}
\label{lem:main}
\end{lemma}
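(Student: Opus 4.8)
The plan is to factor the channel as $\mathcal{T}_\mathcal{E} = \mathcal{G}_\mathcal{E}\circ\mathcal{M}$, where $\mathcal{M}(\sigma) = 2^{-2n}\sum_{x\in\{0,1\}^{2n}}\big(c(x)\otimes c(x)\big)\,\sigma\,\big(c(x)^\dagger\otimes c(x)^\dagger\big)$ is the Majorana twirl and $\mathcal{G}_\mathcal{E}(\sigma) = \sum_i p_i\,(U_i\otimes U_i)\,\sigma\,(U_i^\dagger\otimes U_i^\dagger)$, and then to analyze the two factors separately.

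First I would show that $\mathcal{M}$ dephases in the Majorana-monomial basis, $\mathcal{M}(c(a)\otimes c(b)) = \delta_{a,b}\,c(a)\otimes c(a)$. This rests on the conjugation identity $c(x)\,c(a)\,c(x)^\dagger = (-1)^{|x||a|-|x\cap a|}c(a)$, after which the coefficient produced by $\mathcal{M}$ on $c(a)\otimes c(b)$ is $2^{-2n}\sum_x(-1)^{\phi(x)}$ with $\phi(x)=\sum_i x_i\big(|a|+|b|-a_i-b_i\big)$ a linear form in $x$ modulo $2$; the sum vanishes unless $\phi\equiv 0$, which, separating the two cases by the parity of $|a|+|b|$, forces $a=b$ (the only alternative, $b=\bar a$, would need $|a|+|b|$ odd yet equals $2n$). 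Hence, writing $\rho=\sum_{a,b}\rho_{ab}\,c(a)\otimes c(b)$ as in Eq.~\eqref{eq:expandrho}, we get $\mathcal{M}(\rho)=\sum_a\rho_{aa}\,c(a)\otimes c(a)$. Applying $\mathcal{G}_\mathcal{E}$ and using the Majorana-mixing hypothesis $\mathcal{G}_\mathcal{E}(c(a)\otimes c(a))=T_{|a|}$ then gives $\mathcal{T}_\mathcal{E}(\rho)=\sum_{k=0}^{2n}\big(\sum_{|a|=k}\rho_{aa}\big)T_k$. By Proposition~\ref{prop:F}, for $\rho\in\mathcal{K}$ with $\trace{\rho}=1$ the inner sums equal the universal constants $F_k$, so $\mathcal{T}_\mathcal{E}(\rho)=\sum_k F_k T_k=:\Sigma$, an operator that does not depend on the particular state $\rho$.

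It remains to identify $\Sigma$ with $\Pi_\mathcal{K}/\dim\mathcal{K}$, and the shortcut is to evaluate $\mathcal{T}_\mathcal{E}$ on one convenient state in $\mathcal{K}$ rather than expand $\sum_k F_k T_k$ explicitly (which would require a Krawtchouk-type binomial identity matched against the expansion of $\Pi_\mathcal{K}$ coming from Eq.~\eqref{eq:pik}). The maximally mixed state $\rho_\star:=\Pi_\mathcal{K}/\dim\mathcal{K}$ has range $\mathcal{K}$ and unit trace, so $\Sigma=\mathcal{T}_\mathcal{E}(\rho_\star)$. Now $\mathcal{M}(\rho_\star)=\rho_\star$, because $\mathcal{K}$ is spanned by joint eigenvectors of the commuting operators $c_p\otimes c_p$, so $\Pi_\mathcal{K}$ commutes with every $c_p\otimes c_p$, hence with every $c(x)\otimes c(x)$, and each term of $\mathcal{M}$ fixes it. Likewise $\mathcal{G}_\mathcal{E}(\rho_\star)=\rho_\star$, because (as shown in the proof of Proposition~\ref{prop:lambda}) $U\otimes U$ commutes with $\Lambda$ for every Gaussian $U$, hence preserves $\ker\Lambda=\mathcal{K}$ and commutes with $\Pi_\mathcal{K}$. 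Therefore $\Sigma=\mathcal{G}_\mathcal{E}(\mathcal{M}(\rho_\star))=\rho_\star=\Pi_\mathcal{K}/\dim\mathcal{K}$, which combined with the previous paragraph yields Eq.~\eqref{eq:T1} for every trace-one $\rho\in\mathcal{K}$.

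The only genuine computation is the sign-and-orthogonality bookkeeping in the dephasing step; everything else is structural. The one spot that needs a moment of thought is the final identification: the naive route of expanding $\sum_k F_k T_k$ and comparing coefficients with $\Pi_\mathcal{K}$ forces a binomial identity, whereas plugging in the single state $\rho_\star$ — which both $\mathcal{M}$ and $\mathcal{G}_\mathcal{E}$ fix for elementary reasons — bypasses it entirely.
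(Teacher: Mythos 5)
Your proof is correct, and while it follows the same overall skeleton as the paper's argument (kill the off-diagonal monomials $c(a)\otimes c(b)$ with $a\neq b$, send the diagonal ones to $T_{|a|}$ via Majorana mixing, invoke Proposition~\ref{prop:F} to get $\sum_k F_k T_k$, then identify this with $\Pi_{\mathcal{K}}/\dim\mathcal{K}$), you execute the two nontrivial sub-steps differently. For the dephasing, the paper proves a separate proposition that for $a\neq b$ (with $b\neq 0$) there exists $z$ with $[c(z),c(a)]=0$ and $\{c(z),c(b)\}=0$, and then cancels the sum over $x$ by pairing $x\leftrightarrow x+z$; you instead evaluate the full character sum $2^{-2n}\sum_x(-1)^{\beta\cdot x}$ with $\beta_i=|a|+|b|-a_i-b_i \bmod 2$ and observe that $\beta\equiv 0$ forces $a=b$ (the alternative $b=\bar a$ being ruled out by parity). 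This handles all pairs uniformly and avoids the $\FF_2$ existence argument; both are valid. For the final identification, the paper expands $\Pi_{\mathcal{K}}/\dim\mathcal{K}=\sum_k g_k T_k$ by inspecting Eq.~\eqref{eq:pik} and applies Proposition~\ref{prop:F} to that density matrix to force $g_k=F_k$; you instead note that $\sum_k F_k T_k=\mathcal{T}_{\mathcal{E}}(\rho_\star)$ for $\rho_\star=\Pi_{\mathcal{K}}/\dim\mathcal{K}$ and that $\rho_\star$ is fixed by both factors of the channel, since $\Pi_{\mathcal{K}}$ commutes with every $c(x)\otimes c(x)=\prod_{p:x_p=1}(c_p\otimes c_p)$ and, by the computation in Proposition~\ref{prop:lambda}, with every $U\otimes U$ for Gaussian $U$. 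Both routes ultimately exploit the same fact --- that $\Pi_{\mathcal{K}}/\dim\mathcal{K}$ is itself a trace-one state supported on $\mathcal{K}$ --- but yours trades the coefficient bookkeeping for two easy commutation checks, which is a clean way to dodge any binomial identity.
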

\begin{proof}
The proof is based on Appendix D of \cite{CLLW15}. We use the expansion \eqref{eq:expandrho} and linearity of $\mathcal{T}_\mathcal{E}$. We first show that $\mathcal{T}_\mathcal{E}(c(a)\otimes c(b))=0$ whenever $a\neq b$. Note that since $a\neq b$ we have $a\neq 0$ or $b\neq 0$. Without loss of generality suppose $b\neq 0$. We shall use the following fact
\begin{prop}
For all $a,b\in \{0,1\}^{2n}$ with $a\neq b$ and $b\neq 0$ there exists $z\in\{0,1\}^{2n}$ such that $[c(a),c(z)]=0$ and $\{c(b),c(z)\}=0$.
\label{prop:commute}
\end{prop}
\begin{proof}
The Majorana monomials satisfy 
\[
c(x)c(y)=(-1)^{\sum_{j\neq k} x_j y_k}c(y)c(x) \qquad \quad x,y\in\{0,1\}^{2n}.
\]
The desired bit-string $z$ is therefore a solution to $2$ linear equations in $2n$ variables over $\mathbb{F}_2$, namely $\oplus_{j\neq k}a_j z_k=0$ and $\oplus_{j\neq k}b_j z_k=1$. Equivalently,
\[
\oplus_{k=1}^{2n} (a_k\oplus\alpha)z_k=0 \quad \text{and} \quad \oplus_{k=1}^{2n} (b_k\oplus\beta)z_k=1
\]
where $\alpha=\oplus_k a_k$ and $\beta=\oplus_k b_k$. To show there exists a solution for $z$ we must show that the vectors $v,w\in \mathbb{F}_2^{2n}$ defined by $v_k= a_k\oplus\alpha$  and $w_k= b_k\oplus\beta$ satisfy $v\neq w$ and $w\neq 0$. Note $w\neq 0$ follows from our assumption that $b\neq 0$. Since $a\neq b$ we have $v\neq w$ whenever $\alpha=\beta$. On the other hand if $\alpha\neq \beta$ then by definition we have 
\[
\oplus_{k=1}^{2n} (a_k\oplus \alpha)=\oplus_{k=1}^{2n} a_k\neq \oplus_{k=1}^{2n} b_k=\oplus_{k=1}^{2n} (b_k\oplus \beta).
\]
which shows $v\neq w$.
\end{proof}
 So let $z$ satisfy $[c(z),c(a)]=0$ and $\{c(z),c(b)\}=0$. Then
\begin{align*}
&2\sum_{x\in \{0,1\}^{2n}} c(x)\otimes c(x) \big(c(a)\otimes c(b)\big) c(x)^\dagger\otimes c(x)^\dagger  \\
&=\sum_{x\in \{0,1\}^{2n}} c(x)\otimes c(x) \big(c(a)\otimes c(b)\big) c(x)^\dagger\otimes c(x)^\dagger\\&+\sum_{x\in \{0,1\}^{2n}} c(x)c(z)\otimes c(x)c(z) \big(c(a)\otimes c(b)\big) c(z)^\dagger c(x)^\dagger\otimes c(z)^\dagger c(x)^\dagger\\
&=0.
\end{align*}
Next consider $\mathcal{T}_\mathcal{E}(c(a)\otimes c(a))$. Since
\[
c(x)\otimes c(x) \big(c(a)\otimes c(a)\big) c^\dagger(x)\otimes c^\dagger (x)=c(a)\otimes c(a)
\]
for all $x\in\{0,1\}^{2n}$ we get
\[
\mathcal{T}_{\mathcal{E}}\left(c(a)\otimes c(a)\right)=\sum_{i}p_i U_i\otimes U_i \left[c(a)\otimes c(a)\right]U^\dagger_i\otimes U_i
=T_{|a|}\]
where in the last equality we used the fact that $\mathcal{E}$ is Majorana mixing. Using linearity we arrive at
\begin{equation}
\mathcal{T}_\mathcal{E}(\rho)=\sum_{k=0}^{2n} \sum_{|a|=k} \rho_{aa} T_k=\sum_{k=0}^{2n} F_k T_k \label{eq:T2}
\end{equation}
where $F_k$ are the coefficients defined in Proposition \ref{prop:F}. It remains to show that the right-hand sides of Eqs.~(\ref{eq:T2},\ref{eq:T1}) are equal up to normalization. Inspecting Eq.~(\ref{eq:pik}) we see that monomials $c(x)\otimes c(y)$ with $x\neq y$ do not appear and that monomials $\{c(x)\otimes c(x): |x|=k\}$ corresponding to a given Hamming weight $k$ appear with equal weight, i.e., 
\begin{equation}
\frac{\Pi_\mathcal{K}}{\dim{\mathcal{K}}}=\sum_{k=0}^{2n} g_{k} T_k
\label{eq:pi}
\end{equation}
where $g_k$ are some coefficients. Finally, applying Proposition \ref{prop:F} to the density matrix $\frac{\Pi_\mathcal{K}}{\dim{\mathcal{K}}}$ forces $g_k=F_k$, which completes the proof.
\end{proof}

\subsection{Estimating the norm of a superposition of Gaussian states}
Let
\[
|\psi\rangle=\sum_{a=1}^{\chi}x_a|\phi_a\rangle
\]
where $|\phi_a\rangle\in \mathcal{H}_n$ are Gaussian states. Define a random variable 
\begin{equation}
X=2^{n}|\langle \theta|\psi\rangle|^2.
\label{eq:Xdef}
\end{equation}
Here $\theta$ is a random Gaussian state of the form
\begin{equation}
|\theta\rangle=U|y\rangle
\label{eq:theta}
\end{equation}
where $y\in\{0,1\}^{n}$ is chosen uniformly at random and $U$ is a random Gaussian unitary chosen from a Majorana mixing ensemble $\mathcal{E}=\{p_i,U_i\}$. For concreteness in the following we will assume it is taken from the ensemble $\mathcal{E}^\star$ defined in the previous section. Since Eq.~\eqref{eq:Xdef} is insensitive to the global phase of $|\theta\rangle$, we obtain the same random variable $X$ by choosing
\begin{equation}
|\theta\rangle=Uc(x)|0^n\rangle
\label{eq:theta2}
\end{equation}
where $x\in \{0,1\}^{2n}$ is chosen uniformly at random and $U$ is a random Gaussian unitary chosen from $\mathcal{E}^{\star}$.

The expected value of $X$ is 
\[
\expect{X}=\langle \psi|\sum_{i}p_i U_i \bigg(\sum_{y\in\{0,1\}^{n}}|y\rangle\langle y|\bigg) U_i^\dagger|\psi\rangle=\langle \psi|\psi\rangle.
\]
We now use the results of the previous section to upper bound the variance of $X$.
\begin{lemma}
$\mathrm{Var}(X)\leq 2\sqrt{n}\|\psi\|^4$.
\end{lemma}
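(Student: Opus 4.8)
The plan is to express $\mathrm{Var}(X)=\expect{X^2}-\expect{X}^2$ in terms of the second moment of the random Gaussian state $\theta$, and then to use Lemma~\ref{lem:main} to evaluate the relevant average exactly. First I would write $X^2 = 4^n |\langle \theta|\psi\rangle|^4 = 4^n \langle \psi\otimes \psi | \, |\theta\otimes\theta\rangle\langle \theta\otimes \theta| \, |\psi\otimes\psi\rangle$. Using the representation Eq.~\eqref{eq:theta2} of $\theta$, namely $|\theta\rangle = Uc(x)|0^n\rangle$ with $x$ uniform over $\{0,1\}^{2n}$ and $U$ drawn from the Majorana-mixing ensemble $\mathcal{E}^\star$, the second moment becomes
\[
\expect{X^2} = 4^n \langle \psi\otimes\psi|\, \mathcal{T}_{\mathcal{E}^\star}\big(|0^n\otimes 0^n\rangle\langle 0^n\otimes 0^n|\big)\,|\psi\otimes\psi\rangle,
\]
where $\mathcal{T}_{\mathcal{E}^\star}$ is precisely the channel defined just before Lemma~\ref{lem:main}. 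Since $|0^n\otimes 0^n\rangle\langle 0^n\otimes 0^n|$ is a symmetric tensor square of a Gaussian state, Proposition~\ref{prop:lambda} puts it in $\mathcal{K}$, so Lemma~\ref{lem:main} gives $\mathcal{T}_{\mathcal{E}^\star}(|0^n\otimes 0^n\rangle\langle 0^n\otimes 0^n|) = \Pi_{\mathcal{K}}/\dim\mathcal{K}$. Therefore
\[
\expect{X^2} = \frac{4^n}{\dim\mathcal{K}}\, \langle \psi\otimes\psi|\Pi_{\mathcal{K}}|\psi\otimes\psi\rangle = \frac{4^n}{\dim\mathcal{K}}\,\|\Pi_{\mathcal{K}}|\psi\otimes\psi\rangle\|^2.
\]

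Next I would compute the two quantities on the right. We have $\dim\mathcal{K} = \binom{2n}{n}$ from Eq.~\eqref{eq:spanL}. To handle $\|\Pi_{\mathcal{K}}|\psi\otimes\psi\rangle\|^2$, note that since each $\phi_a$ is Gaussian, each $|\phi_a\otimes\phi_a\rangle$ lies in $\mathcal{K}$ by Proposition~\ref{prop:lambda}; but $|\psi\otimes\psi\rangle = \sum_{a,b} x_a x_b |\phi_a\otimes\phi_b\rangle$ also contains the "cross'' terms $a\neq b$, which are not in $\mathcal{K}$. So I would instead bound $\|\Pi_{\mathcal{K}}|\psi\otimes\psi\rangle\|^2 \le \|\,|\psi\otimes\psi\rangle\,\|^2 = \|\psi\|^4$, giving $\expect{X^2}\le 4^n \|\psi\|^4 / \binom{2n}{n}$. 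Combining with $\expect{X}^2 = \|\psi\|^4 \ge 0$ yields
\[
\mathrm{Var}(X) \le \left(\frac{4^n}{\binom{2n}{n}} - 1\right)\|\psi\|^4 \le \frac{4^n}{\binom{2n}{n}}\,\|\psi\|^4.
\]
Finally I would invoke the standard central binomial bound $\binom{2n}{n} \ge 4^n/(2\sqrt{n})$ (or the sharper $4^n/\sqrt{\pi n}$, but $2\sqrt n$ suffices and is cleanest), which gives $4^n/\binom{2n}{n}\le 2\sqrt{n}$ and hence $\mathrm{Var}(X)\le 2\sqrt{n}\,\|\psi\|^4$, as claimed.

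The only genuine subtlety — and the step I would be most careful about — is the reduction to Lemma~\ref{lem:main}: one must check that the average defining $\expect{X^2}$ really is $\langle\psi\otimes\psi|\mathcal{T}_{\mathcal{E}^\star}(|0^n\otimes 0^n\rangle\langle 0^n\otimes 0^n|)|\psi\otimes\psi\rangle$, i.e. that averaging $|\theta\otimes\theta\rangle\langle\theta\otimes\theta|$ over the joint choice of $x$ and $U$ coincides with first conjugating by the uniformly random symmetric monomial $c(x)\otimes c(x)$ (note $c(x)^\dagger = \pm c(x)$, so $c(x)|0^n\rangle\langle 0^n| c(x)^\dagger = c(x)\otimes c(x)\,\cdot\,|0^n\rangle\langle 0^n|\,\cdot\,c(x)^\dagger\otimes c(x)^\dagger$ up to the sign which cancels) and then by $U\otimes U$. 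This is exactly the bookkeeping already set up in Eqs.~\eqref{eq:theta2} and the definition of $\mathcal{T}_{\mathcal{E}}$, so it goes through directly; everything else is the elementary binomial estimate.
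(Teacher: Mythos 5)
Your proposal is correct and follows essentially the same route as the paper: bound $\mathrm{Var}(X)$ by $\expect{X^2}$, rewrite the second moment as $4^n\langle\psi\otimes\psi|\mathcal{T}_{\mathcal{E}^\star}(|0^n\otimes 0^n\rangle\langle 0^n\otimes 0^n|)|\psi\otimes\psi\rangle$, apply Lemma~\ref{lem:main} together with $\langle\psi\otimes\psi|\Pi_{\mathcal{K}}|\psi\otimes\psi\rangle\le\|\psi\|^4$, and finish with the central binomial bound $\binom{2n}{n}>4^n/\sqrt{4n}$. The extra care you take in justifying the reduction to $\mathcal{T}_{\mathcal{E}^\star}$ and in retaining the $-\expect{X}^2$ term before discarding it are harmless refinements of the paper's argument.
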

\begin{proof}
Using the definition of $X$ from Eqs.~(\ref{eq:Xdef},\ref{eq:theta2}) and the definition of $\mathcal{T}_\mathcal{E}$ from the previous section we obtain
\[
\mathrm{Var}(X)\leq \expect{X^2}=4^n \langle \psi\otimes \psi|\mathcal{T}_\mathcal{E}\left(|0^n\otimes 0^n\rangle\langle 0^n\otimes 0^n|\right)|\psi\otimes \psi\rangle.
\]
Applying Lemma \ref{lem:main} gives
\[
\mathrm{Var}(X)\leq \frac{4^n}{\dim{\mathcal{K}}} \langle \psi\otimes \psi|\Pi_\mathcal{K}|\psi\otimes \psi\rangle \leq \frac{4^n}{\dim{\mathcal{K}}}\|\psi\|^4=4^n{2n \choose n}^{-1}\|\psi\|^4,
\]
where in the last line we used Eq.~\eqref{eq:spanL}. Plugging in the lower bound
\[
{2n \choose n}>\frac{4^n}{\sqrt{4n}}
\]
(equation (2.15) in \cite{K08}) completes the proof.
\end{proof}
Now consider an estimator
\[
\xi=\frac{1}{L}\sum_{i=1}^L 2^n|\langle \theta_i|\psi\rangle|^2
\]
where each $\theta_i$ is an independently chosen random state of the form Eq.~\eqref{eq:theta}. Then $\xi$ has expected value $\|\psi\|^2$ and variance
\[
\sigma^2 \leq  2n^{1/2}L^{-1}\|\psi\|^4.
\]
Applying Chebyshev's inequality we obtain
\[
\mathrm{Pr}\left[ |\xi -\|\psi\|^2|\geq \epsilon\|\psi\|^2 \right]\leq \sigma^2 \epsilon^{-2} \|\psi\|^{-4} \leq 2\sqrt{n} L^{-1} \epsilon^{-2}.
\]
Choosing 
\begin{equation}
L=2\sqrt{n} \epsilon^{-2} p_f^{-1}
\label{eq:Lchoice}
\end{equation}
ensures that with probability at least $1-p_f$ we have
\[
(1-\epsilon)\|\psi\|^2 \leq \xi \leq (1+\epsilon)\|\psi\|^2.
\]
Finally, let us describe the algorithm which computes $\xi$. 

The first step is to choose the $L$ random Gaussian states $\theta_i$. To choose each Gaussian state $\theta$ we select a random bit string $y\in \{0,1\}^n$ and a random Gaussian unitary $U$ from the ensemble $\mathcal{E}^{\star}$. Recall from the previous section that $U$ is chosen by selecting a uniformly random $2n\times 2n$ permutation matrix $R$. Letting $M_y$ be the covariance matrix of the basis state $|y\rangle$, the covariance matrix of $\theta$ is $R M_{y} R^{T}$ which can be computed in time $O(n^3)$.

The second step is to compute $\langle \theta_i|\psi\rangle$ for each $i=1,\ldots,L$.  We have
\[
\langle \theta_i|\psi\rangle=\sum_{a=1}^{\chi} \langle \theta_i|\phi_a\rangle
\]
where each term of the sum can be computed in time $O(n^3)$ using the inner product formula from Section \ref{sec:inner}. This step therefore requires time $O(\chi n^3)$ for each $i=1,\ldots,L$. Computing $\xi$ from this data is then straightforward.

The total runtime to compute $\xi$ using this algorithm is 
\[
O(\chi n^3 L)=O(\chi n^{7/2} \epsilon^{-2} p_f^{-1}).
\]

\end{document}